\def\lb{\ensuremath{\llbracket}}
\def\rb{\ensuremath{\rrbracket}}
\def\>{\ensuremath{\rangle}}
\def\<{\ensuremath{\langle}}
\newcommand {\supp } {{\rm supp}}
\newcommand {\E } {{\mathcal{E}}}
\newcommand {\F } {{\mathcal{F}}}
\newcommand {\D } {{\mathcal{D}}}
\renewcommand{\H}[0]{\mathcal{H}}
\newcommand {\M} {{\mathcal{M}}}
\newcommand {\tr} {{\mathrm{Tr}}}
\newcommand{\bra}[1]{\langle #1 \vert}
\newcommand{\nm}[1]{\lVert #1\rVert}
\newcommand{\ket}[1]{|#1\rangle}
\newcommand{\op}[2]{|#1\rangle \langle #2|}
\newtheorem{theorem}{Theorem}[section]
\newtheorem{lem}{Lemma}[section]
\newtheorem{defn}{Definition}[section]
\newtheorem{prop}{Proposition}[section]
\newtheorem{exam}{Example}[section]
\newtheorem{prob}{Problem}[section]
\newtheorem{fact}{Fact}[section]
\begin{document}

\title{Quantum Temporal Logic}

\author{Nengkun Yu \\University of Technology Sydney}

\maketitle

\begin{abstract}
In this paper, we introduce a model of quantum concurrent program, which can be used to model the behaviour of reactive quantum systems and to design quantum compilers.
We investigate quantum temporal logic, QTL, for the specification of quantum concurrent systems by suggesting the time-dependence of events.
QTL employs the projections on subspaces as atomic propositions, which was established in the Birkhoff and von Neumann's classic treatise on quantum logic.
For deterministic quantum program with exit, we prove a quantum B\"{o}hm-Jacopini theorem which states that any such program is equivalent to a Q-While program. The decidability of basic QTL formulae for general quantum concurrent program is studied.
\end{abstract}

\section{Introduction}\label{sec:intro}

The birth of science and logic are inextricably woven. The evolution of computer science and technology would have been impossible without its logical foundation. Conversely, the new model of computation has provided great opportunities for the development of logic. This ``entanglement'' between logic and computer science has lasted throughout the last century till now, even before the emergence of ENIAC--the first electronic general-purpose computer. Logic is ubiquitous in modern computer science. One typical example is temporal logic, the logical formulism for reasoning about time and the timing of events which was introduced to computer science by Pnueli in his seminal work \cite{4567924}.
Temporal logic has become a widely accepted language for stating and specifying properties of concurrent programs and general reactive systems \cite{PNUELI198145}. It is interpreted over models that abstract away from the actual time at which events occur, retaining only temporal ordering information about the states of a system.
The necessity of introducing temporal logic is from the following observations. The correctness of concurrent programs and general reactive systems usually involves reasoning about the corresponding events at different moments in an execution of the system \cite{Owicki:1982:PLP:357172.357178}. The expected property of reactive systems, e.g., the liveness properties--something good must eventually happen, as well as fairness properties--a fundamental concept of unbounded nondeterminism of concurrent systems \cite{FN86}, can not be stated as a formula of static logic.

Due to the increasing maturity in the research on quantum program verification, and the potential interest and understanding of the behavior of concurrency in quantum systems, great effort has been expended in the design of quantum programming language \cite{Kn96,SZ00,BCS03,Sa03,Se04,AG05,GLR13,JFD12,WS14,PRZ17,PZ19,O03,Google18,Rigetti18,GM19,LFBY17,JM05,FDY12,Wu}, as surveyed in \cite{Ga06,Se04a}. Simultaneously, various quantum logics and techniques have been developed for the verification of quantum programs
\cite{Ak05,BS04,BS06,BJ04,CMS,DP06,FDJY07,FYY13,GNP08,KA09,AGN14,RA17,SS15,Ying16,YFYY13,YLYF14,YYW17}. Notably,
Ying \cite{Ying11} established quantum Hoare
logic for both partial correctness and total correctness with (relative) completeness for the notion of quantum weakest precondition proposed by D'Hondt and Panangaden ~\cite{DP06}.
Of all the properties, the termination of quantum programs
has received a fair amount of attention \cite{JFD12,YF10,YYFD13,LYY14,YY12,LY17}.
Quantum process algebra was introduced to model the quantum communication between quantum processors, and thus forms a model of concurrent quantum systems \cite{JM05,YFDJ09,FDY12}. Temporal logics for  quantum states were introduced and then the model-checking problem for this logic was studied in \cite{MS06,BCMS07,BCM08,MRSS09}.

Concurrency is a necessary concept to explore the computational power of distributed quantum computing systems.
Most of the aforementioned research on the correctness of quantum programs considered programs with exits only. Those are programs with a distinct beginning and end with computational instruction in between, whose correctness statement describes the relation between the input and successful completion of the program.
These approaches completely
ignore an important class of quantum operating systems or
real-time type quantum programs, for instance, the quantum internet \cite{YLZ19}, for which halting is
an abnormal situation.

Understanding the combined behaviour of quantum features and concurrency is nontrivial. Quantum mechanics naturally generate probabilistic branching due to quantum measurement as does quantum programs. On the other hand, non-determinism plays a central role in concurrent programs. A comprehensive definition of a quantum concurrent program which would intergrade quantum probability and non-determinism is still missing.
Previous work including \cite{YY12} should be regarded as concurrent quantum programs but not quantum concurrent programs because that concurrency does not depend on quantum data but purely classical.

To define the propositional variables of quantum temporal logic, we revisit the (algebraic counter-part of) quantum logic of quantum mechanics which originated in the milestone paper \cite{BvN36} by Garrett Birkhoff and John von Neumann, who were attempting to reconcile the apparent inconsistency of classical logic with the facts concerning the measurement of complementary variables in quantum mechanics, at which time quantum logic was defined as the set of principles for manipulating the projections on a Hilbert space which was viewed as quantum propositions about physical observables in John von Neumann's classic treatise \cite{von18}. Projective operators in a Hilbert space correspond one-to-one with the closed subspaces, and the L$\ddot{\mathrm{o}}$wner order restricted on projection operators coincides with the inclusion between the corresponding subspaces. The structure of the set of closed subspaces of a Hilbert has been thoroughly investigated in the development of quantum logic for over 80 years.
The idea of using quantum projections as a quantum predicate was discussed in \cite{YDFJ10}, where the algebraic structure of the set of closed subspaces and orthomodular lattices theory is reviewed \cite{BH00,KA83}.  Recently, this idea was also used to develop quantum relational Hoare logic (qRHL) in \cite{Unruh,Unruh2,BHY19}. It was employed in providing an applied quantum Hoare Logic in \cite{ZYY19}.

\textbf{Contributions of the Paper}: In this paper, we derive a \textit{quantum Temporal logic} (QTL) as a verification tool of quantum concurrent programs. More precisely, our contribution is as follows:
\begin{itemize}
\item
 We provide a model of quantum concurrent programs which combines quantum probability and non-determinism. In this model, the quantum concurrent program consists of a shared quantum register, a class of quantum programs which can access the quantum register and a scheduler (classical) register which records the program that needs to be performed in the next round. Each program is given as finite lines (locations) of commands with an initial location. Each command consists of a quantum super-operator and a measurement in which the classical index outcome of measurements is used to choose a location of this program and modify the scheduler register, which can be non-deterministically.

\item We investigate quantum temporal logic, QTL, which generalizes Pnueli's classical temporal logic \cite{4567924}. Birkhoff and von Neumann used projection as quantum atomic propositions where a state satisfies a proposition if the state falls into the subspace corresponding to the projection in \cite{BvN36}. In light of this method, we define the basic temporal operators, $\mathbf{O}$ (next), $\mathbf{U}$ (until), $\tilde{\mathbf{U}}$ (almost surely until) $\mathbf{true}$, $\lozenge$ (eventually), $\tilde{\lozenge}$ (almost surely eventually) and $\square$ (always).
    Note that we do not allow negation although $\wedge$ (conjunction), $\vee$ (disjunction) are introduced as usual.

\item We study the QTL for deterministic quantum concurrent programs with exits as an example of the general model. We provide a quantum compiler for Q-While, a widely studied quantum extension of the while-language \cite{Ying11}. We prove a quantum B\"{o}hm-Jacopini theorem \cite{BJ66} which states that any deterministic quantum concurrent program is equivalent to a Q-While program. Based on this theorem, for deterministic quantum concurrent program,
    \begin{itemize}
\item
    (a) we present a logic with completeness for reasoning and thus fill an important gap in the verification of quantum programs;
    \item (b) we provide polynomial time algorithms which compute the reachability super-operator and average running time;
    \item (c) we demonstrate a quantum analogue of the Kleene closure which compute the entanglement-assisted reachable space.
\end{itemize}
\item We study the decidability of basic QTL formulae. For deterministic quantum program, we show that $\square\tilde{\mathbf{U}}$ is decidable while the decidabilities of $\lozenge$, $\tilde{\lozenge}$,$\mathbf{U}$ and $\tilde{\mathbf{U}}$ are equivalent to the decidability of the famous Skolem problem. For general quantum concurrent programs, we prove that $\square$, $\square\lozenge$, $\lozenge\square$ and $\square\mathbf{U}$ are all decidable which solves the open question of \cite{LY14}.
\end{itemize}
{\vskip 3pt}
We list the reasons for employing projection as quantum atomic propositions in the following.
\begin{itemize}
\item It enables us to define logical operators $\wedge$ (conjunction), $\vee$ (disjunction).

\item Each projection $P$ corresponds to a projective measurement $\{P,I-P\}$ which is physically implementable. The state satisfies $P$ if and only if the measurement outcomes $P$ when applying $\{P,I-P\}$ on the state. Moreover, the state, if satisfies $P$, will not collapse after applying the measurement. Therefore, our logic fits very well in the testing and debugging of quantum concurrent programs.

\item For deterministic quantum program with exit, the set of input states such that the program terminates in finite steps, and the set of input states such that the program terminates with probability 1, can be characterized by closed subspaces, respectively, or equivalently projections, as observed in \cite{ZYY19}.
\end{itemize}

\subsection{Related Work and Comparison}
\cite{YF11} defined a flowchart low-level quantum programming languages and provided a technique of
translating quantum flowchart programs into Q-While.

\cite{YING201831} introduced a quantum Markov decision process is as a semantic model of non-deterministic and quantum concurrent programs in which each program is given as a quantum operation and a finite set of measurements is given. At each step, a quantum program is applied or a measurement is performed.

Compared with \cite{YF11} and \cite{YING201831}, the classical control of the model here has a richer structure. Each program consists of a class of commands, marked in corresponding program locations, where each command is a tuple of a quantum operation together with a quantum measurement.
The classical control information is recorded in the scheduler register together with the locations of each program.
At each step, according to the value of the schedule register, the command corresponding to the current location of the corresponding program is applied. Our quantum B\"{o}hm-Jacopini theorem is stronger than the one of  \cite{YF11} in the sense that we only need a single syntax of Q-While to characterize the original deterministic program.

\textbf{Organisation of the Paper}: We provide preliminaries about quantum information in Section \ref{sec:preliminaries}. In Section \ref{sec:model}, we introduce the model of quantum concurrent programs. In Section \ref{sec:QTL}, we give the formal definition of quantum temporal logic (syntax, semantics). In Section \ref{sec:example}, we study the deterministic quantum concurrent programs with exits as an example.
In Section \ref{sec:complexity}, we studied the decidability of quantum temporal logic.

\section{Quantum Information: Preliminaries and Notations} \label{sec:preliminaries}
This section presents the background and notations on quantum information
and quantum computation mainly according to the textbook by~\cite{NI11}.

\subsection{Preliminaries}
A Hilbert space $\H$ is a linear vector space which can be finite dimensional or separable.
A separable Hilbert space has a countable orthonormal basis.
For any finite integer $n$, an $n$-dimensional Hilbert space $\H$
is the space $\mathbb{C}^n$ of complex vectors.
We use Dirac's notation, $\ket{\psi}$, to denote a complex vector in $\mathbb{C}^n$.
The inner product of two vectors $\ket{\psi}$ and $\ket{\phi}$ is denoted by $\langle\psi|\phi\rangle$,
which is the product of the Hermitian conjugate of $\ket{\psi}$, denoted by $\bra{\psi}$, and vector $\ket{\phi}$.
The norm of a vector $\ket{\psi}$ is denoted by $\nm{\ket{\psi}}=\sqrt{\langle\psi|\psi\rangle}$.

Linear \emph{operators} are linear mappings between Hilbert spaces.
Operators between $n$-dimensional Hilbert spaces are represented by $n\times n$ matrices.
For example, the identity operator $I_\H$ is the identity matrix on $\H$.
The Hermitian conjugate of operator $A$ is denoted by $A^\dag$. Operator $A$ is \emph{Hermitian} if $A=A^\dag$.
The trace of an operator $A$ 
is the sum of the entries on the main diagonal, i.e., $\tr(A)=\sum_i A_{ii}$. 
We write $\bra{\psi}A\ket{\psi}$ to mean the inner product between
$\ket{\psi}$ and $A\ket{\psi}$.
A Hermitian operator $A$ is \emph{positive semidefinite} (resp.,
\emph{positive definite}) if for all vectors $\ket{\psi}\in\H$,
$\bra{\psi}A\ket{\psi}\geq 0$ (resp., $>0$).
This gives rise to the \emph{L\"owner order} $\sqsubseteq$ among operators:
\begin{equation}
 A\sqsubseteq B \text{ if } B-A \text{ is positive semidefinite, } \quad A\sqsubset B \text{ if } B-A \text{ is positive definite. }
\end{equation}
A positive semidefinite operator $P$ is called a \emph{projection} if
\begin{equation}
P=P^{\dag}=P^2.
\end{equation}
There is a one-to-one correspondence between projection and closed linear subspace. the \emph{L\"owner order} $\sqsubseteq$ among projections is equivalent to
the subset relation among closed linear subspaces.
\subsection{Quantum States}

The state space of a quantum system is a Hilbert space.
The state space of a \emph{qubit}, or quantum bit, is a 2-dimensional Hilbert space.
One important orthonormal basis of a qubit system is the \emph{computational} basis with $\ket{0}=(1,0)^\dag$ and $\ket{1}=(0,1)^\dag$, which encode the classical bits 0 and 1 respectively.
Another important basis, called the $\pm$ basis, consists of $\ket{+}=\frac{1}{\sqrt{2}}(\ket{0}+\ket{1})$ and $\ket{-}=\frac{1}{\sqrt{2}}(\ket{0}-\ket{1})$.
The state space of multiple qubits is the \emph{tensor product} of single qubit state spaces.
For example, classical 00 can be encoded by $\ket{0}\otimes\ket{0}$
(written $\ket{0}\ket{0}$ or even $\ket{00}$ for short) in the Hilbert space $\mathbb{C}^2\otimes\mathbb{C}^2$.
The Hilbert space for an $m$-qubit system is $(\mathbb{C}^2)^{\otimes m} \cong \mathbb{C}^{2^m}$.

A \emph{pure} quantum state is represented by a unit vector, i.e., a vector $\ket{\psi}$ with $\nm{\ket{\psi}}=1$.
A \emph{mixed} state can be represented by a classical distribution over an ensemble of pure states $\{(p_i,\ket{\psi_i})\}_i$,
i.e., the system is in state $\ket{\psi_i}$ with probability $p_i$.
One can also use \emph{density operators} to represent both pure and mixed quantum states.
A density operator $\rho$ for a mixed state representing the ensemble $\{(p_i,\ket{\psi_i})\}_i$ is a positive semidefinite operator $\rho=\sum_i p_i\ket{\psi_i}\bra{\psi_i}$, where $\ket{\psi_i}\bra{\psi_i}$ is the outer-product of $\ket{\psi_i}$; in particular, a pure state $\ket{\psi}$ can be identified with the density operator $\rho=\ket{\psi}\bra{\psi}$.
Note that $\tr(\rho)=1$ holds for all density operators. A positive semidefinite operator $\rho$ on $\H$ is said to be a \emph{partial} density operator if $\tr(\rho)\leq 1$.
The set of partial density operators is denoted by $\D(\H)$.

\subsection{Quantum Operations}\label{sec:QTL}

Operations on closed quantum systems can be characterized by unitary operators. An operator $U$ is \emph{unitary} if its Hermitian conjugate is its own inverse, i.e., $U^\dag U=UU^\dag=I_{\H}$. For a pure state $\ket{\psi}$, a unitary operator describes an \emph{evolution} from $\ket{\psi}$ to $U\ket{\psi}$. For a density operator $\rho$, the corresponding evolution is $\rho \mapsto U\rho U^\dag$.
The \emph{Hadamard} operator $H$ transforms between the computational and the $\pm$ basis. For example, $H\ket{0}=\ket{+}$ and $H\ket{1}=\ket{-}$.

More generally, the evolution of a quantum system can be characterized by an \emph{super-operator} $\E$, which is a \emph{completely-positive} and \emph{trace-non-increasing} linear map from $\D(\H)$ to $\D(\H')$ for Hilbert spaces $\H, \H'$. For every super-operator $\E : \D(\H)\to\D(\H')$, there exists a set of Kraus operators $\{E_k\}_k$ such that $\E(\rho)=\sum_k E_k\rho E_k^\dag$ for any input $\rho\in\D(\H)$.
Note that the set of Kraus operators is finite if the Hilbert space is finite-dimensional.
The \emph{Kraus form} of $\E$ is written as $\E(\cdot)=\sum_k E_k\cdot E_k^\dag$.
A unitary evolution can be represented by the super-operator $\E(\cdot)=U\cdot U^\dag$. An identity operation refers to the super-operator $\mathcal{I}_{\H} (\cdot)= I_{\H} \cdot I_{\H}$.
A super-operator $\E$ is trace-non-increasing if for any initial state $\rho\in\D(\H)$, the final state $\E(\rho)\in \D(\H')$ after applying $\E$ satisfies $\tr(\E(\rho))\leq\tr(\rho)$. A super-operator $\E$ is called trace preserving if $\tr(\E(\rho))=\tr(\rho)$ holds for any initial state $\rho\in\D(\H)$.
The Schr\"odinger-Heisenberg \emph{dual} of a super-operator $\E=\sum_k E_k\circ E_k^\dag$, denoted by $\E^*$, is defined as follows: for every state $\rho\in\D(\H)$ and any operator $A$, $\tr(A\E(\rho))=\tr(\E^*(A)\rho)$. The Kraus form of $\E^*$ is $\sum_k E_k^\dag\cdot E_k$.

\subsection{Quantum Measurements}

The way to extract information about a quantum system is called
a quantum \emph{measurement}.
A quantum measurement on a system over Hilbert space $\H$ can be
described by a set of linear operators
$\{M_m\}_m$ with $\sum_m M_m^\dag M_m=I_\H$.
If we perform a measurement $\{M_m\}$ on a state $\rho$, the outcome $m$ is observed with probability $p_m=\tr(M_m\rho M_m^\dag)$ for each $m$.
A major difference between classical and quantum computation is that a
quantum measurement changes the state. In particular, after a
measurement yielding outcome $m$, the state collapses to $M_m\rho M_m^\dag/p_m$.
For example, a measurement in the computational basis is described by $M=\{M_0=\ket{0}\bra{0}, M_1=\ket{1}\bra{1}\}$.
If we perform the computational basis measurement $M$ on state $\rho=\ket{+}\bra{+}$, then with probability $\frac{1}{2}$ the outcome is $0$ and $\rho$ becomes $\ket{0}\bra{0}$.
With probability $\frac{1}{2}$ the outcome is $1$ and $\rho$ becomes $\ket{1}\bra{1}$. Quantum measurements are essentially probabilistic; but we adopt a convention from \cite{Se04} to present them. We can combine probability $p_m$ and density operator $\rho_m$ into a partial density operator $M_m\rho M_m^\dag=p_m\rho_m.$ This convention significantly simplifies the presentation.

A projective measurement on a system with state space $\mathcal{H}$ is
described by a collection $\{P_m\}$ of projections over $\mathcal{H}$
satisfying
$\sum_{m}P_m=I_{\mathcal{H}},$
where index $m$ stands for the measurement outcomes that may occur. If the state of a quantum system was $\rho$
immediately before the measurement is performed on it, then
the probability that outcome $m$ occurs is
$p_m=\tr(P_m\rho),$ and the state of the system after
the measurement is $\rho_m=P_m\rho P_m^{\dag}/p_m.$
Actually, a general measurement can always be implemented by a projective
measurement together with a unitary transformation if an ancillary system is allowed. In the circuit model of quantum computation, measurements are usually assumed to be in the computational basis, which is a special
kind of projective measurement.

For a mixed state (density operator) $\rho$, its support $\supp(\rho)$ is defined as the (topological) closure of the subspace spanned by the eigenvectors of $\rho$ with nonzero eigenvalues. It is easy to see that $\supp(\rho)=\{\ket{\varphi}\in\mathcal{H}:\langle\varphi|\rho\ket{\varphi}=0\}^{\perp}$,
where ${}^\perp$ stands for ortho-complement. The definition of support can be naturally generalized to semi-definite positive operators. An important fact of projective measurements is that, given a state $\rho$ and projection $P$ such
that $\supp(\rho)\subseteq P$, if we apply the (yes/no) projective measurement
$\{P,I-P\}$ on $\rho$, the state is not changed.
\subsection{Jordan Canonical Forms}
Let $M\in \mathbb{Q}^{d\times d}$ be a square matrix with rational entries. The minimal polynomial of $M$ is the unique monic polynomial $m(x)\in \mathbb{Q}[x]$ of least degree such that $m(A)=0$. By the Cayley-Hamilton Theorem, the degree of $m(x)$ is at most $d$.

We can write any matrix $M\in\mathbb{C}^{d\times d}$ as $M=P^{-1}J P$ for some invertible matrix $P$ and block diagonal Jordan matrix $J=diag(J_1,...,J_N)$, with each block $J_i$ with size $l\times l$ having the following form
\begin{equation}
J_i=\begin{bmatrix}
    \lambda_i & 1 & 0 & \dots  & 0 \\
    0 & \lambda_i & 1 & \dots  & 0 \\
    \vdots & \vdots & \vdots & \ddots & \vdots \\
    0 & 0 & 0 & \dots  & \lambda_i
\end{bmatrix}
\Rightarrow J_i^m=\begin{bmatrix}
    \lambda_i^m & m\lambda_i^{m-1}  & {m\choose 2}\lambda_i^{m-2} & \dots  & {m\choose {l-1}}\lambda_i^{m-l+1} \\
    0 & \lambda_i^m & m\lambda_i^{m-1} & \dots  & 0 \\
    \vdots & \vdots & \vdots & \ddots & \vdots \\
    0 & 0 & 0 & \dots  & \lambda_i^m
\end{bmatrix}
\end{equation}
where the binomial coefficient $m\choose j$ is defined to be 0 for $m<j$.

Moreover, given a rational matrix $M\in \mathbb{Q}^{d\times d}$, its Jordan Normal Form $M=P^{-1}J P$ can be computed in polynomial time, as shown in \cite{CAI94}. Here, the input size of the problem is the total lengths of the binary representation of all the input entries, and the complexity is measured in terms of binary operations. We associate each algebraic number with its minimal polynomial (thus, irreducible) and a sufficiently good rational approximation, which uniquely identifies the particular root of the polynomial.
\subsection{Matrix Representation of Super-Operators}\label{MRSO}

The matrix representation of a super-operator is usually easier to
manipulate than the super-operator itself.
\begin{defn}\label{mrd}
Suppose super-operator $\mathcal{E}$ on a finite-dimensional Hilbert
space $\H$ has the operator-sum representation
$\mathcal{E}(\rho)=\sum_iE_i\rho E_i^{\dag}$ for all partial density
operators $\rho$, and $\dim \H=d$. Then the matrix representation of
$\mathcal{E}$ is the following $d^2\times d^2$ matrix: $$M=\sum_i
E_i\otimes E_i^*,$$ where $A^{\ast}$ stands for the conjugate of
matrix $A$, i.e. $A^{\ast}=(a^{\ast}_{ij})$ with $a^{\ast}_{ij}$
being the conjugate of complex number $a_{ij}$, whenever
$A=(a_{ij})$.
\end{defn}

The following lemma illustrates the usefulness of the matrix representation of super-operator \cite{YYFD13}.
\begin{lem}\label{mr-lem} We write
$|\Phi\rangle=\sum_j|jj\rangle$ for the (unnormalized) maximally
entangled state in $\H\otimes \H$, where $\{|j\rangle\}$ is an
orthonormal basis of $\H$. Let $M$ be the matrix representation of
super-operator $\mathcal{E}$. Then for any $d\times d$ matrix $A$,
we have: $$(\mathcal{E}(A)\otimes I)|\Phi\rangle=M(A\otimes
I)|\Phi\rangle.$$\end{lem}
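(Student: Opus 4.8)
The plan is to reduce the statement to the elementary ``transpose trick'' (ricochet identity) for the unnormalized maximally entangled vector: for every $d\times d$ matrix $B$ one has $(B\otimes I)\ket{\Phi}=(I\otimes B^{T})\ket{\Phi}$, where $B^{T}$ denotes the transpose of $B$ in the fixed orthonormal basis $\{\ket{j}\}$ that defines $\ket{\Phi}$. First I would prove this auxiliary identity by a direct expansion: $(B\otimes I)\ket{\Phi}=\sum_{j}(B\ket{j})\otimes\ket{j}=\sum_{j,k}B_{kj}\ket{k}\otimes\ket{j}$, while $(I\otimes B^{T})\ket{\Phi}=\sum_{j}\ket{j}\otimes(B^{T}\ket{j})=\sum_{j,k}B_{jk}\ket{j}\otimes\ket{k}$, and the two sums agree after swapping the names of the indices $j$ and $k$. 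I would also record the purely notational fact that $(E_i^{\dag})^{T}=E_i^{*}$, which is immediate from $(E_i^{\dag})_{ab}=\overline{(E_i)_{ba}}$.

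Next I would fix a Kraus representation $\mathcal{E}(\rho)=\sum_i E_i\rho E_i^{\dag}$ (a finite sum, since $\dim\H=d<\infty$) and compute
\[
(\mathcal{E}(A)\otimes I)\ket{\Phi}=\sum_i (E_iAE_i^{\dag}\otimes I)\ket{\Phi}=\sum_i (E_i\otimes I)(A\otimes I)(E_i^{\dag}\otimes I)\ket{\Phi}.
\]
Applying the auxiliary identity to the rightmost factor with $B=E_i^{\dag}$ replaces $(E_i^{\dag}\otimes I)\ket{\Phi}$ by $(I\otimes (E_i^{\dag})^{T})\ket{\Phi}=(I\otimes E_i^{*})\ket{\Phi}$; since $(E_i\otimes I)$ and $(I\otimes E_i^{*})$ act on disjoint tensor factors, each summand collapses to $(E_iA\otimes E_i^{*})\ket{\Phi}=(E_i\otimes E_i^{*})(A\otimes I)\ket{\Phi}$. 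Summing over $i$ and pulling the (finite) sum inside yields $\bigl(\sum_i E_i\otimes E_i^{*}\bigr)(A\otimes I)\ket{\Phi}=M(A\otimes I)\ket{\Phi}$ by Definition~\ref{mrd}, which is exactly the claim.

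There is no genuine obstacle here; the only care needed is to keep the three operations transpose, complex conjugate, and adjoint separate, and to apply the ricochet identity on the correct tensor leg (and with $B^{T}$, not $B^{*}$ or $B^{\dag}$). One should also observe that the statement does not depend on which orthonormal basis is used to form $\ket{\Phi}$ and the matrix representation $M$; since Definition~\ref{mrd} already pins down such a basis, it suffices to carry out the computation in that basis.
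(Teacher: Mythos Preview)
Your proposal is correct and follows essentially the same approach as the paper's proof: both rest on the transpose/ricochet identity $(B\otimes I)\ket{\Phi}=(I\otimes B^{T})\ket{\Phi}$ (the paper states it in the equivalent packaged form $(A\otimes B)(C\otimes I)\ket{\Phi}=(ACB^{T}\otimes I)\ket{\Phi}$) and then expand along a Kraus decomposition. The only cosmetic difference is direction---the paper starts from $M(A\otimes I)\ket{\Phi}$ and simplifies, while you start from $(\mathcal{E}(A)\otimes I)\ket{\Phi}$---but the content is identical.
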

Let the matrix representations of super-operators
$\mathcal{E}$ be $M$ with Jordan decomposition $M=SJ(M)S^{-1},$
where $S$ is a nonsingular matrix, and $J(M)$ is the Jordan normal
form of $M$: $$J(M)= diag (J_{k_1}(\lambda_1),
J_{k_2}(\lambda_2),\cdot\cdot\cdot, J_{k_l}(\lambda_l))$$ with
$J_{k_s}(\lambda_s)$ being a $k_s\times k_s$-Jordan block of
eigenvalue $\lambda_s$ $(1\leq s\leq l)$. The next lemma describes the structure of the matrix
representation $M$ of super-operator $\mathcal{F}$.

\begin{lem}\label{tech0}\begin{enumerate}\item $|\lambda_s|\leq 1$ for all $1\leq s\leq l$. \item If $|\lambda_s|=1$
then the dimension of the $s$th Jordan block $k_s=1$.\end{enumerate}
\end{lem}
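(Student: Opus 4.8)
The plan is to transfer, via Lemma \ref{mr-lem}, the analytic constraints that complete positivity and trace-non-increase impose on the iterates $\E^m$ into a uniform bound on the matrix powers $M^m$, and then to read both assertions off the explicit form of $J(M)^m$ recorded above.

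First I would note that the vectorization map $V\colon A\mapsto (A\otimes I)\ket{\Phi}$ is a linear isomorphism of the space of $d\times d$ matrices onto $\mathbb{C}^{d^2}$ (its component vector is just the list of entries of $A$). Applying Lemma \ref{mr-lem} repeatedly and using that $M$ is the matrix representation of $\E$, one gets $M^m V(A) = V(\E^m(A))$ for all $A$ and all $m\ge 0$; that is, $M^m = V\,\E^m\,V^{-1}$. Now $\E^m$ is again completely positive and trace-non-increasing, being a composition of such maps, so for every positive semidefinite $P$ one has $\nm{\E^m(P)}_1 = \tr(\E^m(P))\le \tr(P) = \nm{P}_1$. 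Decomposing an arbitrary matrix $A$ as $A = H_1 + iH_2$ with $H_1,H_2$ Hermitian, and each $H_j$ as a difference of positive semidefinite operators, one obtains $\nm{\E^m(A)}_1\le 2\nm{A}_1$ for all $A$, uniformly in $m$. Through the isomorphism $V$ this means that the linear maps $M^m$ are uniformly bounded; in particular every entry of $M^m$ is bounded by a constant independent of $m$.

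Next I would pass to the Jordan form. From $M = SJ(M)S^{-1}$ we get $J(M)^m = S^{-1}M^m S$, so the entries of $J(M)^m$, being fixed linear combinations of the entries of $M^m$, are likewise bounded uniformly in $m$; hence so is every entry of every block power $J_{k_s}(\lambda_s)^m$. The explicit formula recalled above shows that the $(1,1)$ entry of $J_{k_s}(\lambda_s)^m$ is $\lambda_s^m$, and boundedness of $\{|\lambda_s|^m\}_m$ forces $|\lambda_s|\le 1$, which is part (1). For part (2), suppose $|\lambda_s| = 1$ but $k_s\ge 2$; then the $(1,2)$ entry of $J_{k_s}(\lambda_s)^m$ equals $m\lambda_s^{m-1}$, of modulus $m$, contradicting the uniform bound. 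Hence $k_s = 1$.

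The only step requiring any care is the first one, namely making precise that bounded iterates of $\E$ correspond to bounded powers of $M$; once the isomorphism $V$ and the observation that $\E^m$ remains completely positive and trace-non-increasing are in place, this is immediate, and the remainder is just the standard growth estimate for the powers of a Jordan block. I do not anticipate a genuine obstacle here.
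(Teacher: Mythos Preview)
Your argument is correct. The paper, however, does not supply its own proof of Lemma~\ref{tech0}: it is stated as a known structural fact about the matrix representation of a trace-non-increasing completely positive map (the surrounding discussion draws on \cite{YYFD13}), with no accompanying \texttt{proof} environment. So there is no in-paper proof to compare against.

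For what it is worth, the route you take---uniform boundedness of the iterates $\E^m$ in trace norm, transferred through the vectorization isomorphism $V$ of Lemma~\ref{mr-lem} to uniform boundedness of $M^m$, and then read off on each Jordan block---is exactly the standard proof of this fact. One cosmetic remark: your constant $2$ in $\nm{\E^m(A)}_1\le 2\nm{A}_1$ is fine for the purpose at hand (any uniform bound suffices), and it indeed follows from $\nm{H_j}_1\le\nm{A}_1$ for $H_1=(A+A^\dag)/2$, $H_2=(A-A^\dag)/(2i)$ together with $\nm{\E^m(H_j)}_1\le\nm{H_j}_1$ via the positive/negative part decomposition. Nothing in your write-up requires adjustment.
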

\subsection{Convergence of the decreasing chain of finite union of subspaces}
We use the following result proved in \cite{LYY14}.
\begin{lem}\label{uss}
Suppose $X_k$ is a union of a finite number of subspaces of $\H$ for all $k\geq 1$. If $X_k$ is a decreasing chain, i.e., $X_1\supseteq X_2\supseteq \cdots \supseteq X_k\supseteq$, then there exists $n\geq 1$ such that $X_k=X_n$ for all $k\geq n$.
\end{lem}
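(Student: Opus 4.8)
The plan is to prove the lemma by induction on $N=\dim\H$. (I note in passing that finite-dimensionality of $\H$ is essential and presumably intended here: for a separable infinite-dimensional $\H$ a decreasing chain of single subspaces such as $\cspan\{e_j:j\ge k\}$ never stabilizes.) The base case $N=0$ is immediate. For the inductive step, assume the statement for all Hilbert spaces of dimension $<N$, and let $X_1\supseteq X_2\supseteq\cdots$ be a decreasing chain of finite unions of subspaces of $\H$.

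First I would dispose of the case $X_k=\H$ for every $k$, which is trivially constant. Otherwise pick the least $k_0$ with $X_{k_0}\subsetneq\H$; since $X_1=\cdots=X_{k_0-1}=\H$, it suffices to prove that $(X_k)_{k\ge k_0}$ stabilizes, so after re-indexing I may assume $X_1\subsetneq\H$. Write $X_1=V_1\cup\cdots\cup V_m$ as a finite union of subspaces. Each $V_i\subseteq X_1\subsetneq\H$ is a proper subspace, hence lies in some hyperplane $W_i$ of dimension $N-1$; therefore $X_1\subseteq W:=W_1\cup\cdots\cup W_m$, and since the chain is decreasing, $X_k\subseteq W$ for all $k$.

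Now define $Z_i^{(k)}:=X_k\cap W_i$. Writing $X_k$ as a finite union of subspaces and distributing the intersection, each $Z_i^{(k)}$ is a finite union of subspaces of the $(N-1)$-dimensional space $W_i$, and $(Z_i^{(k)})_{k\ge 1}$ is a decreasing chain because $(X_k)_k$ is. By the induction hypothesis applied inside each $W_i$ there is $n_i$ with $Z_i^{(k)}=Z_i^{(n_i)}$ for all $k\ge n_i$. Since $X_k\subseteq W$ we have the identity $X_k=X_k\cap W=\bigcup_{i=1}^m Z_i^{(k)}$, so for $n:=\max_{1\le i\le m}n_i$ and every $k\ge n$ we get $X_k=\bigcup_i Z_i^{(k)}=\bigcup_i Z_i^{(n)}=X_n$, completing the induction.

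The one point that needs care is that a naive argument tracking a single numerical invariant — a dimension, or the number of components of an irredundant representation — does not work, because in passing from $X_k$ to $X_{k+1}$ the number of components can increase while dimensions decrease, and conversely; collapsing everything onto a \emph{fixed} finite family of hyperplanes is exactly what tames this non-monotonicity. A more algebraic alternative would use the vector-space prime-avoidance lemma (a subspace inside a finite union of subspaces lies in one of them) to assign to each irredundant union the multiset of its component dimensions and then check that a strict inclusion strictly decreases this multiset in the well-founded Dershowitz--Manna order on $\mathbb{N}$; I would nonetheless prefer the hyperplane induction above, which is self-contained and in fact uses nothing special about $\mathbb{C}$.
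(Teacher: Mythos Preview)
Your argument is correct. The induction on $\dim\H$ via a fixed finite cover by hyperplanes is clean: once $X_1\subsetneq\H$, the whole decreasing chain sits inside $W=W_1\cup\cdots\cup W_m$, and intersecting with each $W_i$ pushes the problem to dimension $N-1$ without losing information because $X_k=\bigcup_i(X_k\cap W_i)$. Each step you wrote checks out, including the bookkeeping at the re-indexing and at $n=\max_i n_i$. Your parenthetical remark that finite dimension is needed is also well taken; the statement in the paper should be read under that standing hypothesis, since the applications later in the paper (Theorem~\ref{inv}, Lemmas~\ref{invariant}--\ref{characterization2}) are all in finite dimension.

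As for comparison: the paper does not supply its own proof of this lemma but simply quotes it from \cite{LYY14}. So there is no in-paper argument to compare against. Your hyperplane-cover induction is a perfectly acceptable self-contained substitute; it is also essentially the standard way this fact is shown. The alternative you sketch via vector-space prime avoidance and the Dershowitz--Manna order on multisets of component dimensions is a genuinely different route and would also work, but, as you say, the hyperplane induction is more elementary and avoids importing the multiset well-ordering machinery.
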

\subsection{Kronecker's Theorem}
The classical Kronecker approximation theorem is formulated as follows.
\begin{theorem} \label{K1}
Given real n-tuples $\alpha=(\alpha_1,\cdots,\alpha_n)\in \mathbb{R}^n$ and $\beta=(\beta_1,\cdots,\beta_n)\in \mathbb{R}^n$, the condition:
$\forall \epsilon>0$, $\exists p,q_j\in\mathbb{N}$ such that $$|p\alpha_i-q_j-\beta_i|<\epsilon,\forall 1\leq j\leq n$$
    holds if and only if for any $r_1,\cdots,r_n\in\mathbb{Z}$ with $\sum_{j=1}^n r_j\alpha_j\in\mathbb{Z}$, $\sum_{j=1}^n r_j\beta_j$ is also an integer.
\end{theorem}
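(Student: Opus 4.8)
The plan is to recognise the statement as the classical Kronecker approximation theorem and prove it by passing to the torus $\mathbb{T}^n:=\mathbb{R}^n/\mathbb{Z}^n$, with quotient map $\pi\colon\mathbb{R}^n\to\mathbb{T}^n$, and invoking duality. The observation that makes this work is that the displayed approximation condition is exactly the assertion that $\pi(\beta)$ lies in $G:=\overline{\{\pi(p\alpha):p\in\mathbb{N}\}}$, the closure in the compact group $\mathbb{T}^n$ of the forward orbit of $\pi(\alpha)$; so the theorem reads: $\pi(\beta)\in G$ if and only if $\sum_j r_j\beta_j\in\mathbb{Z}$ whenever $r\in\mathbb{Z}^n$ satisfies $\sum_j r_j\alpha_j\in\mathbb{Z}$. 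The ``only if'' direction is then a one-line estimate: if $|p\alpha_j-q_j-\beta_j|<\epsilon$ for all $j$ and $r\in\mathbb{Z}^n$ has $\sum_j r_j\alpha_j\in\mathbb{Z}$, then
$$\Big|\sum_j r_j\beta_j-\Big(p\sum_j r_j\alpha_j-\sum_j r_j q_j\Big)\Big| = \Big|\sum_j r_j(p\alpha_j-q_j-\beta_j)\Big| < \epsilon\sum_j |r_j|,$$
and since the quantity in the inner parentheses is an integer and $\epsilon>0$ is arbitrary, $\sum_j r_j\beta_j\in\mathbb{Z}$.

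For the ``if'' direction I would first note that $G$ is a closed subgroup of $\mathbb{T}^n$: it is a closed sub-semigroup, and in a compact group the closure of the sub-semigroup generated by a single element is always a group (a monotone convergent subsequence $\pi(n_k\alpha)$ forces $\pi((n_{k+1}-n_k)\alpha)\to 0$, which with a short compactness argument puts both the identity and all inverses into $G$; the finite-order case is trivial). Then I would feed in Pontryagin duality for $\mathbb{T}^n$: its character group is $\mathbb{Z}^n$ acting by $\chi_r(\pi(x))=\exp(2\pi i\sum_j r_j x_j)$, and every closed subgroup of $\mathbb{T}^n$ equals the intersection of the kernels of the characters that are trivial on it. Since $\chi_r$ is trivial on $G$ iff $\chi_r(\pi(p\alpha))=1$ for every $p\in\mathbb{N}$, i.e. iff $\sum_j r_j\alpha_j\in\mathbb{Z}$, while $\chi_r(\pi(\beta))=1$ iff $\sum_j r_j\beta_j\in\mathbb{Z}$, it follows that $\pi(\beta)\in G$ precisely when $\sum_j r_j\alpha_j\in\mathbb{Z}$ implies $\sum_j r_j\beta_j\in\mathbb{Z}$ for all $r\in\mathbb{Z}^n$, which is the hypothesis.

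The substantive step --- the only place compactness does real work --- is the duality input: that a point outside a closed subgroup $G\subseteq\mathbb{T}^n$ is separated from $G$ by some continuous character. If one prefers not to quote Pontryagin duality, this can be done by hand: pick a non-negative $f\in C(\mathbb{T}^n)$ with $f\equiv 0$ on $G$ and $f(\pi(\beta))>0$, average $f$ over translation by the compact group $G$ to make it $G$-invariant, approximate the result uniformly by a trigonometric polynomial (Stone--Weierstrass), and read off from its Fourier expansion a single character that is constant on $G$ but not on the coset of $\pi(\beta)$; $G$-invariance then forces that character to be trivial on $G$. Alternatively, a rational change of coordinates reduces to the case in which $1,\alpha_1,\dots,\alpha_n$ are linearly independent over $\mathbb{Q}$, where the hypothesis is vacuous and the claim becomes Weyl's equidistribution statement that $\{\pi(p\alpha)\}$ is dense in $\mathbb{T}^n$ --- provable from the decay of the Weyl sums $\frac{1}{N}\sum_{p\le N}\chi_r(\pi(p\alpha))$ --- after which one transports the density statement back through the change of coordinates.
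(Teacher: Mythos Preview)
Your argument is correct and is one of the standard modern proofs of the Kronecker approximation theorem: reformulate on the torus, get the necessary direction by a direct estimate, and for sufficiency show that the closure of the forward orbit is a closed subgroup and then invoke the annihilator duality $G=(G^{\perp})^{\perp}$ for closed subgroups of $\mathbb{T}^n$. The alternative routes you sketch (averaging plus Stone--Weierstrass, or reduction to the rationally independent case and Weyl equidistribution) are also valid.

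However, there is nothing to compare against: the paper does not supply a proof of this theorem. It appears in the preliminaries (Section~2.7) explicitly labelled as ``the classical Kronecker approximation theorem'' and is simply quoted as a known result to be used later, in the decidability analysis of $\tilde{\lozenge}p$ and $\square\tilde{\lozenge}p$. So your write-up goes well beyond what the paper itself does for this statement; it is not a different approach so much as the only approach on offer. One small remark: the paper's displayed inequality has mismatched indices ($\alpha_i$, $q_j$, $\beta_i$) and restricts $p,q_j$ to $\mathbb{N}$, which you have silently and sensibly corrected to the intended classical statement with $q_j\in\mathbb{Z}$; your torus reformulation in any case absorbs the sign of $q_j$, and your semigroup-to-group step handles the restriction $p\in\mathbb{N}$.
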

In simpler language, the first condition states that the tuple $\beta=(\beta_1,\cdots,\beta_n)\in \mathbb{R}^n$ can be approximated arbitrarily well by integer scaling  of $\alpha$ and integer vectors. In other words, the decimal part of $n\alpha$ is dense in the set $\{(\beta_1,\cdots,\beta_n)|\sum_{j=1}^n r_j\beta_j\in\mathbb{Z},~~\forall r_j\in\mathbb{Z}~s.t.~\sum_{j=1}^n r_j\alpha_j\in\mathbb{Z}\}$.

To characterize the limit points of $n\alpha$, we also need the following result from \cite{GE93} which characterized all $(r_1,\cdots,r_n)\in\mathbb{N}^n$ such that $\sum_{j=1}^n r_j\alpha_j\in\mathbb{Z}$, or equivalently $\Pi_{j=1}^n \exp(i2\pi r_j\alpha_j)=1$.
\begin{theorem} \label{K2}
Given algebraic numbers, $\lambda_1,\cdots,\lambda_n$, one can in polynomial time calculate the following set
$$
La:=\{(k_1,\cdots,k_n)|\Pi_{j=1}^n \lambda_j^{k_j}=1\}\subset\mathbb{Z}^n.
$$
\end{theorem}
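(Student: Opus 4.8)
The plan is to compute $La$ by passing to the number field $K=\mathbb{Q}(\lambda_1,\dots,\lambda_n)$ and splitting the single equation $\prod_{j}\lambda_j^{k_j}=1$ into a chain of effective necessary conditions, each cutting $\mathbb{Z}^n$ down to a computable sublattice. First observe that $La$ is a subgroup of $\mathbb{Z}^n$ (multiplicative relations add), so it is enough to output a lattice basis. We may assume each $\lambda_j\neq 0$, so the problem lives in $K^{\times}$. As preprocessing I would compute a primitive element $\theta$ of $K$ with its minimal polynomial, express each $\lambda_j$ as a $\mathbb{Q}$-polynomial in $\theta$, fix all conjugate embeddings $\sigma_1,\dots,\sigma_d\colon K\hookrightarrow\mathbb{C}$ (recall $\alpha=1$ in $K$ iff $\sigma_1(\alpha)=1$, the remaining conjugates furnishing the extra constraints used below), and record the signature $(r_1,r_2)$ and the finite cyclic group $\mu_K$ of roots of unity of $K$, whose order $w$ is computable.

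Next I would impose the non-archimedean constraints. Rather than factoring the norms $N_{K/\mathbb{Q}}(\lambda_j)$ into rational primes, which would require integer factorization, I would compute in polynomial time a GCD-free (coprime) base for the finitely many rationals involved; this produces a finite family of ``virtual primes'' $\mathfrak p$ with associated valuations $v_{\mathfrak p}(\lambda_j)\in\mathbb{Z}$. A relation $\prod_j\lambda_j^{k_j}=1$ forces $\sum_j k_j\,v_{\mathfrak p}(\lambda_j)=0$ for every $\mathfrak p$, a homogeneous integer linear system whose solution lattice $L_1\supseteq La$ is obtained from a Hermite normal form in polynomial time. For $k\in L_1$ the fractional ideal generated by $u_k:=\prod_j\lambda_j^{k_j}$ is trivial, hence $u_k$ is an algebraic unit, and $k\mapsto u_k$ is a homomorphism from $L_1$ into the unit group of $\mathcal{O}_K$ with kernel exactly $La$.

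Finally I would impose the archimedean constraints through the logarithmic embedding $\ell\colon u\mapsto\bigl(\log|\sigma_1(u)|,\dots,\log|\sigma_d(u)|\bigr)$, whose kernel on units is precisely $\mu_K$. Composing, $L_1$ maps to $\mathbb{R}^{d}$ via the matrix with entries $\log|\sigma_i(\lambda_j)|$; its integer kernel $L_2$ satisfies $La\subseteq L_2\subseteq L_1$ with $L_2/La$ embedding into $\mu_K$, hence finite. On a basis $g_1,\dots,g_t$ of $L_2$ each $u_{g_i}$ is a root of unity; computing their orders and the induced map $L_2\to\mu_K$ (a finite group of known structure), a Smith-normal-form computation over $\mathbb{Z}$ outputs a basis of $La$. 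The main obstacle is computing $L_2$: its defining matrix has transcendental entries available only as approximations, so naive rounding may fabricate spurious relations. This is the technical core of \cite{GE93}, resolved by combining (i) an effective lower bound for nonzero linear forms in logarithms of algebraic numbers (a Baker--W\"ustholz estimate), equivalently an explicit Masser-type bound on the size of a basis of $La$, which fixes a priori a working precision polynomial in the input length below which any apparent relation must be exact, with (ii) LLL lattice-basis reduction on a suitably scaled lattice built from the truncated logarithms, extracting a reduced basis of $L_2$ in polynomial time. Granting this quantitative step, everything else is polynomial-time integer linear algebra plus a bounded search in $\mu_K$.
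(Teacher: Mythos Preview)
The paper does not prove this theorem at all; it is quoted in the preliminaries as a known result from \cite{GE93}, with only the remark that $La$ is a lattice and that the algorithm returns a basis. So there is nothing in the paper to compare your argument against. What you have written is, in outline, precisely Ge's algorithm: pass to the number field $K$, strip off the non-archimedean data to land in the unit group, use the logarithmic embedding to reduce to the finite group $\mu_K$, and resolve the transcendental kernel computation by coupling a Baker--W\"ustholz/Masser height bound with LLL. That architecture is correct and is what \cite{GE93} does; for the purposes of this paper, simply citing \cite{GE93} is all that is expected.

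One genuine gap in your sketch, however, is the non-archimedean step. You propose a coprime base for the \emph{rational norms} $N_{K/\mathbb{Q}}(\lambda_j)$ and then assert that if the associated integer linear forms vanish, the element $u_k=\prod_j\lambda_j^{k_j}$ generates the unit ideal. This inference fails: norm $\pm1$ in $K^{\times}$ does not force triviality of the fractional ideal, because distinct primes of $\mathcal{O}_K$ lying over the same rational prime can cancel in the norm. For instance in $K=\mathbb{Q}(\sqrt{-5})$ the element $(1+\sqrt{-5})/(1-\sqrt{-5})$ has norm $1$ but ideal factorization $\mathfrak{p}_3\mathfrak{p}_3'^{-1}\neq(1)$. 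The correct step, and what Ge actually does, is to compute a coprime base of \emph{fractional ideals} of $\mathcal{O}_K$ refining the ideals $(\lambda_j)$ (this is polynomial time without integer factorization); the resulting valuation conditions then genuinely cut down to units. With that fix the rest of your outline goes through.
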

Remark: $La$ is a lattice, i.e., $av_1+bv_2\in La$ for all $v_1,v_2\in La$ and $a,b\in\mathbb{Z}$. The algorithm outputs a basis of the lattice $La$.

\subsection{Skolem-Mahler-Lech Theorem and Skolem Problem}
The Skolem-Mahler-Lech theorem is useful in our analysis. We use the version in \cite{HAN86}.
\begin{theorem} \label{SML}
If the zero set of a linear recurrence series $a_n=\tr(\op{u}{v}A^n)$ with $\ket{u},\ket{v}$ and $A$ being $d$ dimensional integer vectors and invertible matrix is infinite, it is eventually periodic, i.e. it agrees with a periodic set for sufficiently large $n$. In fact, a slightly stronger statement is true: the zero set is the union of a finite set and a finite number of residue classes $\{ n \in {\Bbb N}: n = k \mod r \}$. For any prime number $p\nmid 2\det(A)$, $r$ can be bounded by $r\leq p^{d^2}$.
\end{theorem}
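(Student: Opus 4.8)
The statement is the classical Skolem--Mahler--Lech theorem together with Skolem's effective bound on the period, so the plan is to carry out Skolem's $p$-adic argument and to read the bound off from the order of $A$ modulo a prime. First I would record that $a_n=\tr(\op{u}{v}A^n)=\bra{v}A^n\ket{u}$, a single bilinear form evaluated along the orbit $\{A^n\ket{u}\}$. Fix a prime $p$ with $p\nmid 2\det(A)$; since $\det(A)$ is then a $p$-adic unit, the reduction $\bar A$ lies in $GL_d(\mathbb{F}_p)$, whose order divides $|GL_d(\mathbb{F}_p)|=\prod_{i=0}^{d-1}(p^d-p^i)<p^{d^2}$. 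Let $m$ be the order of $\bar A$; then $A^m\equiv I\pmod p$, i.e. $A^m=I+pB$ for an integer matrix $B$. This $m$ will serve as the period $r$, which already gives $r\le p^{d^2}$.

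Next I would $p$-adically interpolate each residue class. Partition $\mathbb{N}$ into the classes $n=mt+j$, $0\le j<m$, and set $f_j(t):=\bra{v}A^{j}(I+pB)^{t}\ket{u}$, so that $a_{mt+j}=f_j(t)$ for every $t\in\mathbb{N}$. Expand $(I+pB)^{t}=\sum_{l\ge 0}\binom{t}{l}(pB)^{l}$: since $\binom{t}{l}\in\mathbb{Z}_p$ for all $t\in\mathbb{Z}_p$ and $(pB)^{l}$ has all entries divisible by $p^{l}$, the $l$-th term has $p$-adic valuation $\ge l$, so the series converges over $\mathbb{Z}_p$ for every $t\in\mathbb{Z}_p$ and extends $f_j$ to a map $\mathbb{Z}_p\to\mathbb{Q}_p$. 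Rewriting the $\binom{t}{l}$ in the monomial basis introduces denominators dividing $l!$ with $v_p(l!)\le l/(p-1)$; since $p$ is odd, the coefficient of $t^{k}$ in $f_j$ — a $p$-adically convergent sum over $l\ge k$ of terms of valuation $\ge l-v_p(l!)\ge l\tfrac{p-2}{p-1}$ — has valuation $\ge k\tfrac{p-2}{p-1}\to\infty$. Hence $f_j(t)=\sum_{k\ge 0}c_{j,k}t^{k}$ with $c_{j,k}\to 0$ in $\mathbb{Q}_p$; that is, $f_j$ is $p$-adic analytic on $\mathbb{Z}_p$.

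Finally I would invoke Strassmann's theorem: a convergent $p$-adic power series on $\mathbb{Z}_p$ is either identically zero or has only finitely many zeros in $\mathbb{Z}_p$. Let $S=\{\,j:f_j\equiv 0\,\}$. For $j\notin S$ only finitely many $t\in\mathbb{N}$ satisfy $f_j(t)=0$, hence only finitely many $n$ in the class $j\bmod m$ satisfy $a_n=0$; gathering these yields a finite set $F$. Therefore $\{\,n:a_n=0\,\}=\bigcup_{j\in S}\{\,n:n\equiv j\pmod m\,\}\cup F$, which is exactly a finite set together with finitely many residue classes modulo $r=m<p^{d^2}$, and in particular eventually periodic.

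I expect the interpolation step to be the main obstacle: one must verify carefully that $f_j$ is a genuine $p$-adic analytic function whose coefficients tend to $0$, and this is precisely where $p\nmid 2$ is used (for $p=2$ the estimate on $v_2(l!)$ is too weak, $\tfrac{p-2}{p-1}=0$, and the monomial coefficients need not vanish in the limit). One also has to note that if some $f_j$ had infinitely many zeros in $\mathbb{N}$ they would accumulate in the compact space $\mathbb{Z}_p$, forcing $f_j\equiv0$ — but Strassmann's theorem supplies this dichotomy directly. The remaining points (existence of a suitable $p$, the $|GL_d(\mathbb{F}_p)|$ bound, and reassembling the zero set) are routine bookkeeping.
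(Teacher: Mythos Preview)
The paper does not prove this theorem at all: it is stated in the preliminaries (Section~2.8) as a known result quoted from \cite{HAN86}, and is used later as a black box in Lemma~\ref{p2} and Fact~\ref{tilde2}. There is therefore no ``paper's own proof'' to compare against.

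Your argument is the standard Skolem $p$-adic proof and is correct. You use that $\bar A\in GL_d(\mathbb{F}_p)$ (from $p\nmid\det A$) to get a period $m\le|GL_d(\mathbb{F}_p)|<p^{d^2}$, interpolate each residue class $p$-adically via the binomial expansion of $(I+pB)^t$, check coefficient decay using $v_p(l!)\le l/(p-1)$ and $p\neq 2$, and finish with Strassmann. All of this is sound, and you correctly identify the role of the hypothesis $p\nmid 2$: for $p=2$ the estimate $l-v_2(l!)$ need not tend to infinity and the monomial expansion can fail to be Strassmann-admissible. One cosmetic point: you take $r=m$, whereas the statement speaks of ``$r$ can be bounded''; since the minimal period of the eventually periodic zero set divides your $m$, your bound $m<p^{d^2}$ immediately gives the claimed $r\le p^{d^2}$.
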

It is not known whether the following Skolem Problem is decidable for $d\geq 5$.
\begin{prob}\label{Skolem}
Given a linear recurrence set $a_n=\tr(\op{u}{v}A^n)$ with $\ket{u},\ket{v}$ and $A$ being $d$ dimensional integer vectors and invertible matrix, determine whether the zero set of $a_n$ is empty.
\end{prob}

\section{Systems and Programs}\label{sec:model}

Before providing the general model of quantum sequential and quantum concurrent programs, we recall the framework of classical systems provided in \cite{KEL76}.
\begin{defn}
A dynamic discrete system consists of
$$
<S,R,s_0>
$$
where:
\begin{itemize}
\item $S$ is the set of states the system may assume (possibly infinite).
\item $R$ is the transition relation holding between a state and its possible successors, $R\subset S\times S$.
\item $s_0$ is the initial state $s_0\in S$.
\end{itemize}
An execution of the system is a sequence:
$$
\mathfrak{S}=s_0s_1\cdots s_i\cdots
$$
where for each $i\geq 0$, $R(s_i,s_{i+1})$ holds.
The system is deterministic if for any $s\in S$, there is only one $t\in S$ such that $R(s,t)$ holds. Otherwise, it is non-deterministic.
\end{defn}
Similarly, we can form our quantum system as follows: A dynamic discrete time quantum system consists of
$$
<\mathcal{H},R,\rho_0>
$$
where:
\begin{itemize}
\item $\mathcal{H}$ is the Hilbert space that the system may assume (finite dimensional or separable).
\item $R$ is the transition relation holding between a state and its possible successors, $R\subset \mathcal{D}(\mathcal{H})\times \mathcal{D}(\mathcal{H})$.
\item $\rho_0$ is the initial state with $\rho_0\in\mathcal{D}(\mathcal{H})$.
\end{itemize}
An execution of the system is a sequence:
$$
\mathfrak{S}=\rho_0\rho_1\cdots\rho_i\cdots
$$
where for each $i\geq 0$, $R(\rho_i,\rho_{i+1})$ holds.
Many different execution sequences are possible as $R$ is nondeterministic in general.

As quantum mechanics is linear, the most natural choice of $R$ is a super-operator introduced in Section \ref{sec:preliminaries}.
However, this would not result in non-determinism. Non-determinism always corresponds to a finite discrete set. Like the classical framework, we would like the non-determinism to depend on the state. This motivates us to introduced measurement at each step of transition because quantum measurement is the only way to extract classical information from quantum systems.

This concept of discrete quantum system discussed below is very general. Being chiefly motivated by problems in the quantum programming area, all the examples and following discussions will be addressed to the verification of programs. Further structuring of state notion is needed to particularize quantum system into quantum programs.

\subsection{Sequential Quantum Programs}
The sequential quantum programs is given in the following structure.
\begin{defn}
A sequential quantum program is a six tuple
$$
\pi=(\H,L,Act,Q,\rho_0,l),
$$
where
\begin{itemize}
\item $\H$ is a Hilbert space, called the state space. $\H$ contains the data component and ranges over an infinite domain, the quantum state of $\H$. It can be freely structured into individual variables and data structures for fitting actual applications.
\item $L$ is the control component and assumes a finite number of values, taken to be labels or locations in the program. Without loss of generality, we let $L=\{l_0,l_1,\cdots,l_n\}$ be the set of locations, $|L|$ can be regarded as the program length.
\item $Act$ is a mapping which associates each location $l_i\in L$ with a corresponding trace preserving super-operator $\E_{l_i}:\D(\H)\mapsto\D(\H)$ and a quantum measurement $\M_{l_i}=\{M_{l_i,0},\cdots,M_{l_i,N}\}$ satisfying $M_{l_i,j}:\H\mapsto\H$. They are used to describe the evolution of the system caused by action.  Note that we can assume a uniform $N$ as the number of outcomes for the measurements at all locations because we assume $L$ is finite.
\item $Q:\{0,1,\cdots,N\}\times L\mapsto 2^{\{l_0,l_1,\cdots,l_n\}}\setminus\emptyset$ denotes the next location choice mapping.
\item $\rho_0$ is the initial state of the system, which lies in $\H$.
\item $l_0\in L$ is the initial location of the program.
\end{itemize}
where $L$ can be regarded as the control component of the program.
\end{defn}
To clarify the transition of the system, we look at the joint distribution of the locations and the quantum data. Due to the probability distribution induced by quantum measurements, the actual state including the location of the program, is not always of the form $\rho\otimes \op{l_i}{l_i}$, but is as follows:
$$
\{\sum_{i=0}^n \rho_i\otimes \op{l_i}{l_i}|\rho_i\geq 0, \sum_{i=0}^n\tr(\rho_i)=1\}
$$
where $\rho_i\in\D(\H)$.

We express the transition as follows.

In the first step, $\E_{l_0}$ is applied on the initial state
$$
\rho_0\otimes\op{l_0}{l_0}\mapsto \E_{l_0}(\rho_0)\otimes\op{l_0}{l_0}.
$$
Then, measurement $\M_{l_0}$ is performed and based on the measurement outcome, classical index, and the corresponding location, the
location is changed accordingly. In other words, any state of the following form with $f(j,l_0)\in Q(j,l_0)$ is reachable non-deterministically
$$
\sum_{j=0}^N M_{l_0,j}\E_{l_i}(\rho_i)M_{l_0,j}^{\dag}\otimes \op{f(j,l_0)}{f(j,l_0)}.
$$
Generally, at each step, the state $\sum_{i=0}^n \rho_i\otimes \op{l_i}{l_i}$ is transformed by the following two sub-steps.
\begin{itemize}
\item Apply super-operators according to the location and obtain $\sum_{i=0}^n \E_{l_i}(\rho_i)\otimes \op{l_i}{l_i}$.
\item Apply quantum measurement and change location accordingly.  The overall state becomes the following for any $f(j,l_i)\in Q(j,l_i)$ non-deterministically
$$\sum_{i=0}^n\sum_{j=0}^N M_{l_i,j}\E_{l_i}(\rho_i)M_{l_i,j}^{\dag}\otimes \op{f(j,l_i)}{f(j,l_i)}.$$
\end{itemize}

\subsubsection{Quantum Sequential Program with Exit}

This program becomes deterministic if and only if $|Q(j,l_i)|=1$ for all $0\leq j\leq N$ and $l_i\in L$.

Our model of sequential quantum program can also simulate quantum programs with exit by assuming an exit location $l_e\in L$ such that $\E_{l_e}$ and $\M_{l_e}$ do not change the state, and $Q(j,l_e)=\{l_e\}$ for all $0\leq j\leq N$. More precisely, we let $\E_{l_e}(\rho)=\rho$ for all $\rho\in\D(\H)$ and $\M_{l_e}=\{I_{\H},0,0,\cdots,0\}$.

We define two kinds of terminations of quantum programs with exits $\pi$ with exit location $l_e$.
\begin{defn}
Let $\sigma_0=\rho_0\otimes\op{l_0}{l_0}$, and $\sigma_k$ denote the state of the system at step (time) $k$.
\begin{itemize}
\item We say that $\pi$ terminates if there exists $n$ such that $\tr[\sigma_n(I_{\H}\otimes\op{l_e}{l_e})]=1$.
\item We say that $\pi$ almost terminates if for any $\delta>0$ there exists $n$ such that $\tr[\sigma_k(I_{\H}\otimes\op{l_e}{l_e})]>1-\delta$ for all $k>n$.
\end{itemize}
\end{defn}
The termination of a quantum program is rare whereas almost terminaition is much more common as illustrated in the following example.

\begin{exam}\label{example1}
\begin{enumerate}
\item  $l_1$: $\rho_0=\op{-}{-}$, goto $l_2$;
 \item $l_2$: Measure $\rho$ using $\{M_0=\ket{0}\bra{0}, M_1=\ket{1}\bra{1}\}$, if the outcome is $0$, goto $l_4$, otherwise, goto $l_3$;
 \item $l_3$: Apply $H$ gate on $\rho$, goto $l_2$
 \item $l_4$: goto $l_4;$
\end{enumerate}
\end{exam}
This program $\pi$ is a deterministic program with exit, with $l_4$ as the exit location.
\begin{itemize}
\item The initial state is $\sigma_0=\op{-}{-}\otimes\op{l_1}{l_1}$.
\item After the first step, the state becomes $\sigma_1=\op{-}{-}\otimes\op{l_2}{l_2}$.
\item After the second step, the state becomes $\sigma_2=\frac{1}{2}\op{0}{0}\otimes\op{l_4}{l_4}+\frac{1}{2}\op{1}{1}\otimes\op{l_3}{l_3}$.
\item After the third step, the state becomes $\sigma_3=\frac{1}{2}\op{0}{0}\otimes\op{l_4}{l_4}+\frac{1}{2}\op{-}{-}\otimes\op{l_2}{l_2}$.
\item After the fourth step, the state becomes $\sigma_4=\frac{3}{4}\op{0}{0}\otimes\op{l_4}{l_4}+\frac{1}{4}\op{1}{1}\otimes\op{l_3}{l_3}$.
\item $\cdots$
\item After the $2n$-th step, the state becomes $\sigma_{2n}=(1-\frac{1}{2^n})\op{0}{0}\otimes\op{l_4}{l_4}+\frac{1}{2^n}\op{1}{1}\otimes\op{l_3}{l_3}$.
\item After the $2n+1$-th step, the state becomes $\sigma_{2n+1}=(1-\frac{1}{2^n})\op{0}{0}\otimes\op{l_4}{l_4}+\frac{1}{2^n}\op{-}{-}\otimes\op{l_4}{l_4}$.
\item $\cdots$
\end{itemize}
For any finite $n$, the program will not reach $\op{l_4}{l_4}$ exactly. That is, $\pi$ does not terminate.
On the other hand, $\pi$ almost terminates because for any $\delta>0$, we can find $n$ such that for any $k>n$ such that $\tr[\sigma_n(I_{\H}\otimes\op{l_e}{l_e})]>1-\delta$.
\subsection{Quantum Concurrent Programs}
To illustrate the quantum concurrent programs, we allow more than one control component. The following model implements the process call, although it does not behave like the recursive model of the quantum programs with exits investigated in \cite{FYY13b}.

\begin{defn}
An $m$-party quantum concurrent program is an eight tuple
$$
\pi=(\H,L,Act,S,Q,\rho_0,l,s),
$$
where
\begin{itemize}
\item $\H$ is a Hilbert space, called the state space.
\item $L=(L_1,L_2,\cdots,L_m)$ with $L_i=\{l_{i,0},l_{i,1},\cdots,l_{i,m_i}\}$ being the control component, locations, of process $P_i$.
\item $Act$ is a mapping which associates each location $l_{i,j}\in L_i$ with a corresponding trace preserving super-operator $\E_{l_{i,j}}:\D(\H)\mapsto\D(\H)$ and a quantum measurement $\M_{l_{i,j}}=\{M_{l_{i,j},0},\cdots,M_{l_{i,j},N}\}$ satisfying $M_{l_{i,j},k}:\H\mapsto\H$. They are used to describe the evolution of the system caused by action.
\item $S=\{1,2,\cdots,m\}$ is the register of the scheduler, which records the acting program.
\item $Q=(Q_1,Q_2,\cdots,Q_m)$ with $Q_i:\{0,1,\cdots,N\}\times L_i\mapsto 2^{\{l_{i,0},l_{i,1},\cdots,l_{i,n}\}}\times 2^{\{1,2,\cdots,m\}}\setminus\emptyset$ denoting the next location choice and the next program mapping.
\item $\rho_0$ is an initial state of the system, which lies in $\H$.
\item $l_0=(l_{1,i_{1,0}},l_{2,i_{2,0}},\cdots,l_{m,i_{m,0}})\in L$ with $l_{k,i_{k,0}}\in L_k$ being the initial location of Program $\pi_k$.
\item $s_0\in S$ denotes the acting program in the first round.
\end{itemize}
\end{defn}

Intuitively, this model admits $m$ programs running concurrently by $m$ processors. At each step of the concurrent system, one program is selected, and the statement at its location is executed. The statement contains a quantum measurement which helps to select the program to be executed in the next step.

Formally, the state of the system always lies in $\Delta(\H\times L_1\times L_2\times\cdots\times L_m\times [m])$ defined as
$$
\{\sum_{s=1}^m\sum_{i_1,\cdots,i_m} \rho_{s,i_1,\cdots,i_m}\otimes \op{l_{1,i_1}}{l_{1,i_1}}\otimes\cdots\otimes \op{l_{m,i_m}}{l_{m,i_m}}\otimes\op{s}{s}|\rho_{i_0,\cdots,i_m}\in\D(\H),\sum_{s,i_1,\cdots,i_m} \tr\rho_{s,i_0,\cdots,i_m}=1\}
$$
where $[m]=\{1,2,\cdots,m\}$.

The initial state is
$$
\sigma_0=\rho_0\otimes\op{l_{1,i_{1,0}}}{l_{1,i_{1,0}}}\otimes\cdots\otimes \op{l_{m,i_{m,0}}}{l_{m,i_{m,0}}}\otimes\op{s_0}{s_0}.
$$

At the first step, according to data $s_0$, $\pi_{s_0}$ is chosen and
 $\E_{s_0,l_{s_0,i_{s_0,0}}}$ is applied on the initial state to obtain
$$
\E_{s_0,l_{s_0,i_{s_0,0}}}(\rho_0)\otimes\op{l_{1,i_{1,0}}}{l_{1,i_{1,0}}}\otimes\cdots\otimes \op{l_{m,i_{m,0}}}{l_{m,i_{m,0}}}\otimes\op{s_0}{s_0}.
$$
Then measurement $\M_{l_{s_0,i_{s_0,0}}}$ is performed and based on the measurement outcome, classical index, and the corresponding location, the
location is changed according to $Q_{s_0}$. In other words, any state of the following form with $(f_{s_0,i_{s_0,0},j},t_{s_0,i_{s_0,0},j})\in Q_{s_0}(j,l_{s_0,i_{s_0,0}})$ is reachable non-deterministically
\begin{tiny}
$$
\sum_{j=0}^N M_{l_{s_0,l_{s_0,i_{s_0,0}}},j}\E_{s_0,l_{s_0,0}}(\rho_0)M_{l_{s_0,i_{s_0,0}},j}^{\dag}\otimes \op{l_{1,i_{1,0}}}{l_{1,i_{1,0}}}\otimes\cdots\otimes\op{f_{s_0,i_{s_0,0},j}}{f_{s_0,i_{s_0,0},j}}\otimes\cdots\otimes \op{l_{m,i_{m,0}}}{l_{m,i_{m,0}}}\otimes\op{t_{s_0,i_{s_0,0},j}}{t_{s_0,i_{s_0,0},j}},
$$
\end{tiny}
where only $\pi_{s_0}$'s location of all locations and the scheduler register would be changed.

Generally, at each step, the state $$\sum_{s=1}^m\sum_{s,i_1,\cdots,i_m} \rho_{s,i_1,\cdots,i_m}\otimes \op{l_{1,i_1}}{l_{1,i_1}}\otimes\cdots\otimes \op{l_{m,i_m}}{l_{m,i_m}}\otimes\op{s}{s}$$
is transformed by the following two sub-steps
\begin{itemize}
\item Apply super-operators according to the scheduler and location to obtain
$$\sum_{s=1}^m\sum_{i_1,\cdots,i_m} \E_{s,l_{s,i_s}}(\rho_{s,i_1,\cdots,i_m})\otimes \op{l_{1,i_1}}{l_{1,i_1}}\otimes\cdots\otimes \op{l_{m,i_m}}{l_{m,i_m}}\otimes\op{s}{s}.$$
\item Apply quantum measurement and change the location accordingly. The overall state becomes the following for any $(l_{s,i_s,j},t_{s,i_s,j})\in Q_s(j,l_{s,i_s})$ non-deterministically
    \begin{tiny}
$$
\sum_{j=0}^N\sum_{s=1}^m\sum_{i_1,\cdots,i_m} M_{l_{s,i_s},j}\E_{s,l_{s,i_s}}(\rho_{s,i_1,\cdots,i_m})M_{l_{s,i_s},j}^{\dag}\otimes  \op{l_{1,i_1}}{l_{1,i_1}}\otimes\cdots\otimes\op{l_{s,i_s,j}}{l_{s,i_s,j}}\otimes\cdots\otimes \op{l_{m,i_m}}{l_{m,i_m}}\otimes\op{t_{s,i_s,j}}{t_{s,i_s,j}}.
$$
\end{tiny}
\end{itemize}
\begin{defn}
We say $\omega=\sigma_0\sigma_2\cdots\sigma_k\cdots$ is admissible of the system $\pi$ if $\sigma_0=\rho_0\otimes\op{l_{1,i_{1,0}}}{l_{1,i_{1,0}}}\otimes\cdots\otimes \op{l_{m,i_{m,0}}}{l_{m,i_{m,0}}}\otimes\op{s_0}{s_0}$ is the initial state and $\sigma_i$ can be obtained by executing the system upon state $\sigma_{i-1}$ for all $i\geq 1$.
\end{defn}

This program becomes deterministic if and only if $|Q_s(j,l_{s,i_s})|=1$ for all $0\leq j\leq N$, $l_{s,i_s}\in L_s$ and $1\leq s\leq m$.

This model can also simulate a quantum program with exit by assuming an exit location $l_{s,e_s}\in L_s$ for each $s$ such that $\E_{l_{s,e_s}}$ and $\M_{l_{s,e_s}}$ do not change the state, and $Q_s(j,l_{s,e_s})=\{(l_{s,e_s},s)\}$ for each $0\leq j\leq N$ and $1\leq s\leq m$.

\section{QTL: Specifications and Their Classification}\label{sec:QTL}
To express the system properties and their development in time, we express the relations on states in a suitable language. When applied to programs, this will be a relation between the quantum data, the locations of all processors $\pi_1,\cdots,\pi_m$ together with the data of the scheduler.

The most general verification problem is to establish facts about the developments of the properties $q(\rho)$ in time by introducing time variables $t_1,t_2,\cdots\in \mathbb{N}$ as well as the time functional
\begin{align*}
H(t,q)\equiv q(\rho_t),
\end{align*}
where $\rho_t$ denotes the states, including the classical control components, in time $t_1,t_2,\cdots\in \mathbb{N}$.
Arbitrary time dependency can be expressed in the above formulism.

To illustrate our ideas without lengthy demonstration, we limit the expression power of the language with respect to dependency in time. More precisely, we only investigate basic predicates with single time variable and two time variables.
\subsection{Syntax of Quantum Temporal Logic}
As previously mentioned, we only use projections as atomic propositions $AP$ to build QTL, where $AP$ consists of all the operators of the following form
\begin{equation}
\sum_{s=1}^m\sum_{l_{1,i_1},l_{2,i_2},\cdots,l_{m,i_m}}P_{s,l_{1,i_1},l_{2,i_2},\cdots,l_{m,i_m}}\otimes\op{l_{1,i_1}}{l_{1,i_1}}\otimes\op{l_{2,i_2}}{l_{2,i_2}}\otimes\cdots\otimes\op{l_{m,i_m}}{l_{m,i_m}}\otimes\op{s}{s}
\end{equation}
where $P_{s,l_{1,i_1},l_{2,i_2},\cdots,l_{m,i_m}}$ are all projections of $\H$.
$AP$ contains two special elements, $I$ and $\{0\}$.

\begin{defn}
Let $p\in AP$ and $\rho\in\Delta(\H\times L_1\times L_2\times\cdots\times L_m\times [m])$. We say that $\rho$ satisfies $p$, written
$\rho\models p$, if $\supp(\rho)\subseteq p$; that is, $p\rho=\rho$.
\end{defn}
More precisely, QTL is built up from the logical operators $\wedge$ and $\vee$, the temporal modal $\mathbf{O}$ (next), $\mathbf{U}$ (until), $\tilde{\mathbf{U}}$ (almost surely until), $\mathbf{false}$, $\mathbf{true}$, $\lozenge$ (eventually), $\tilde{\lozenge}$ (almost surely eventually) and $\square$ (always).
The operators $\tilde{\mathbf{U}}$ and $\tilde{\lozenge}$ are introduced to characterize the asymptotical probabilistic behaviour induced by the quantum probability.

\begin{defn}
Formally, the set of QTL formulas over $AP$ is inductively defined as follows:
\begin{itemize}
\item    if $p\in AP$ then $p$ is an QTL formula;
\item  if $p\in AP$ then $\tilde{\lozenge}p$ is an QTL formula;
\item  if $p,q\in AP$ then $p\tilde{\mathbf{U}}q$ is an QTL formula;
\item    if $\phi$ and $\psi$ are QTL formulas then, $\phi\wedge \psi$, $\phi\vee \psi$, $\mathbf{O}\phi$, $\psi\mathbf{U}\phi$, $\lozenge \phi$, and $\square \phi$ are QTL formulas.
\end{itemize}
\end{defn}

Other than these fundamental operators, there are additional temporal operators defined in terms of the fundamental operators to write QTL formulas succinctly, for instance $\rightarrow$ and $\leftrightarrow$.

We do not allow $\neg$ because $\rho\models\neg p$ does not imply $\rho\models q$ for any $p,q\in AP$.
\subsection{Semantics of QTL}

A QTL formula can be satisfied by an infinite sequence of admissible states $w=\sigma_0\sigma_1\cdots\sigma_k\cdots$.
Let $w(i)=\sigma_i$, and $w^i=\sigma_i\sigma_{i+1}\cdots$
Formally, the satisfaction relation $\vDash$ between a sequence of states $\omega$ and an QTL formula is defined as follows:
\begin{itemize}
\item $w \vDash p$ {if} $w(0)\vDash p$;
\item $w\vDash\tilde{\lozenge}p$ for $p\in AP$ if for any $\delta>0$, there exists $i\geq 0$ such that $\tr[w(i)p]>1-\delta$;
\item $w \vDash q \tilde{\mathbf{U}} p$ for $p,q\in AP$ {if for any $\delta>0$ there exists} $i \geq 0$ such that $\tr[w(i)p]>1-\delta$ and for all $0 \leq k < i$, $w^k \vDash q$;
\item $w \vDash \phi \wedge \psi$ {if} $w \vDash \phi$ ~{and}~ $w \vDash \psi$\footnote{For $p,q\in AP$, $p\wedge q$ denotes the intersection of subspaces $p$ and $q$, $p\wedge q \in AP$.};
\item $w \vDash \phi \vee \psi$ {if} $w  \vDash \phi$ {or} $w \vDash \psi$ \footnote{For $p,q\in AP$, $p\vee q$ is the union of subspaces $p$ and $q$, $p\vee q$ is not always in $AP$. };
\item $w  \vDash \mathbf{O} \phi$ {if} $w^1 \vDash \phi$ ({in the next time step} $p$ {must be true});
\item $\lozenge \phi$ {if there exists} $i \geq 0$ such that $w^i \vDash \phi$;
\item $w \vDash \psi \mathbf{U} \phi$ {if there exists} $i \geq 0$ such that $w^i \vDash \phi$ and for all $0 \leq k < i$, $w^k \vDash \psi$ ($\psi$ must remain true until $\phi$ becomes true);
\item $w\square \phi$ if for any $i\geq 0$, $w^i\vDash \phi$.
\end{itemize}
We say an $\omega$-word $w$ satisfies a QTL formula $\phi$ when $w \vDash \phi$. The $\omega$-language $L(\phi)$ defined by $\phi$ is $\{w|w \vDash \phi,\forall~\mathrm{admissible}~w\}$, which is the set of $\omega$-admissible states that satisfy $\phi$. A formula $\phi$ is satisfiable if there exist $\omega$-admissible states $w$ such that $w \vDash \phi$.

The additional logical operators are defined as follows:
\begin{itemize}
\item $\phi\rightarrow \psi \equiv L(\phi)\subset L(\psi)$
\item $\phi\leftrightarrow \psi \equiv (\psi\rightarrow \phi)\wedge(\phi\rightarrow \phi)$
\item $\mathbf{true} \equiv I$,
\item $\mathbf{false} \equiv \{0\}$
\end{itemize}

\begin{defn}
For a quantum program $\pi$, we say that a QTL formula $\phi$ is valid if for any $\omega$-sequence of admissible state, $w$, we have $w\vDash \phi$.
\end{defn}

The reason that we use a sequence of quantum states rather than a sequence of subsets of $AP$ is to introduce the $\tilde{\lozenge}$ and $\tilde{\mathbf{U}}$ which study the asymptotical behaviour of the probability induced by quantum measurements. $\tilde{\lozenge} p$ describes the property that the induced number series  $a_0,a_1,\cdots,a_k,\cdots$ with $a_i=\tr[p\sigma_i]$ has, with $1$ as a limit point. The reason for introducing $\tilde{\mathbf{U}}$ is similar.

We reconsider Example \ref{example1} to illustrate the usefulness to introduce $\tilde{\lozenge} p$, as well as $\tilde{\mathbf{U}}$.
\begin{exam}
\begin{enumerate}
\item  $l_1$: $\rho_0=\op{-}{-}$, goto $l_2$;
 \item $l_2$: Measure $\rho$ using $\{M_0=\ket{0}\bra{0}, M_1=\ket{1}\bra{1}\}$, if outcome is $0$, goto $l_4$, otherwise, goto $l_3$;
 \item $l_3$: Apply $H$ gate on $\rho$, goto $l_2$
 \item $l_4$: goto $l_4;$
\end{enumerate}
\end{exam}
As illustrated in Example \ref{example1}, this program will not reach $l_4$ in finite steps. In other words, this program can not satisfy $\lozenge p$ with $p=\op{0}{0}\otimes\op{l_4}{l_4}$.
On the other hand, for any $\delta>0$, we can choose $n$ such that $\tr(\sigma_{2n}p)=1-\frac{1}{2^n}>1-\delta$. This program satisfies $\tilde{\lozenge} p$.

This example indicates that for quantum programs with exits
\begin{itemize}
\item $\lozenge$ and $\mathbf{U}$ are useful for tracking the total correctness of quantum programs which terminates in finite steps.
\item $\tilde{\lozenge}$ and $\tilde{\mathbf{U}}$ are useful for tracking the total correctness for quantum (probabilistic) programs which almost terminate.
\end{itemize}
In this example, we observe that $p$ is satisfies at any step with $p=\op{0}{0}\otimes\op{l_4}{l_4}+I_{\H}\otimes(\op{l_1}{l_1}+\op{l_2}{l_2}+\op{l_3}{l_3})$. In other words, $\square p$ is valid.
In general, $\square$ is useful for tracking the partial correctness of the program with exit. $\square p$ if we choose $p=\sum_{l_i\neq l_e\in L} I_{\H}\otimes\op{l_i}{l_i}+P\otimes \op{l_e}{l_e}$ where $P$ is the required property of output. It is invariantly true that whenever we
reach the exit, the output satisfies its specification.

\section{Example: Reasoning and Verification of Deterministic Quantum sequential Programs with Exits}\label{sec:example}

In this section, we focus on a special case of quantum concurrent programs--deterministic quantum programs with exits as an example.
We compare our model for deterministic quantum programs with exits with the widely studied Q-While language introduced in \cite{Ying11}. After reviewing the syntax and semantics of Q-While,
we show that our model for deterministic quantum programs with exits can be used for designing a compiler for Q-While.
Then, we prove a quantum B\"{o}hm-Jacopini theorem \cite{BJ66} which states that any deterministic quantum program with exit of our model is equivalent to a Q-While program. In particular, we only need to use a single Q-While statement on a larger space. Using this powerful tool, we are able to analyze such program very clearly.
\subsection{Q-While Language}
We first recall the syntax and semantics of Q-While.
\begin{defn}[Syntax \cite{Ying11}]\label{q-syntax}
The quantum \textbf{while}-programs are defined by
the grammar:
\begin{align*}
\label{syntax}S::=\ \mathbf{skip}\ & |\ S_1;S_2\ |\ q:=|0\rangle\ |\ \overline{q}:=U[\overline{q}] &|\ \mathbf{if}\ \left(\square m\cdot \M[\overline{q}] =m\rightarrow S_m\right)\ \mathbf{fi}\\ &|\ \mathbf{while}\ \M[\overline{q}]=1\ \mathbf{do}\ S\ \mathbf{od}
\end{align*}\end{defn}
$q:=|0\rangle$ means that quantum variable $q$ is initialised in a basis state $|0\rangle$. $\overline{q}:=U[\overline{q}]$ denotes that unitary transformation $U$ is applied to a sequence  $\overline{q}$ of quantum variables. In the case statement $\mathbf{if}\cdots\mathbf{fi}$, quantum measurement $\M$ is performed on $\overline{q}$ and then a subprogram $S_m$ is chosen for the next execution according to the measurement outcome $m$. In the loop $\mathbf{while}\cdots\mathbf{od}$, measurement $M$ in the guard has only two possible outcomes $0,1$: if the outcome is $0$ the loop terminates, and if the outcome is $1$, it executes the loop body $S$ and enters the loop again.

A configuration of a program is a pair $C=\langle S,\rho\rangle$
where $S$ is a program or the termination symbol $\downarrow$, and $\rho\in\mathcal{D}(\mathcal{H}_S)$ denotes the state of quantum system.

\begin{defn}[Operational Semantics \cite{Ying11}]\label{def-op-sem}
The operational semantics of quantum \textbf{while}-programs is defined as a transition relation $\rightarrow$ by the transition rules in the following.
\begin{equation*}\begin{split}&({\rm Sk})\ \ \langle\mathbf{skip},\rho\rangle\rightarrow\langle \downarrow,\rho\rangle \ \ ({\rm In})\ \ \ \langle
q:=|0\rangle,\rho\rangle\rightarrow\langle \downarrow,\rho^{q}_0\rangle \\
&({\rm UT})\ \ \langle\overline{q}:=U[\overline{q}],\rho\rangle\rightarrow\langle
\downarrow,U\rho U^{\dag}\rangle \ \ ({\rm SC})\ \ \ \frac{\langle S_1,\rho\rangle\rightarrow\langle
S_1^{\prime},\rho^{\prime}\rangle} {\langle
S_1;S_2,\rho\rangle\rightarrow\langle
S_1^{\prime};S_2,\rho^\prime\rangle}\\
&({\rm IF})\ \ \ \langle\mathbf{if}\ (\square m\cdot
M[\overline{q}]=m\rightarrow S_m)\ \mathbf{fi},\rho\rangle\rightarrow\langle
S_m,M_m\rho M_m^{\dag}\rangle\\
&({\rm L}0)\ \ \ \langle\mathbf{while}\
M[\overline{q}]=1\ \mathbf{do}\
S\ \mathbf{od},\rho\rangle\rightarrow\langle \downarrow, M_0\rho M_0^{\dag}\rangle \\
&({\rm L}1)\ \ \ \langle\mathbf{while}\
M[\overline{q}]=1\ \mathbf{do}\ S\ \mathbf{od},\rho\rangle\rightarrow\langle
S;\mathbf{while}\ M[\overline{q}]=1\ \mathbf{do}\ S\ \mathbf{od}, M_1\rho
M_1^{\dag}\rangle\end{split}\end{equation*}
\end{defn}
In (In), $\rho^{q}_0=\sum_n|0\rangle_q\langle n|\rho|n\rangle_q\langle
0|$. In (SC), we make the convention $\downarrow;S_2=S_2.$
In (IF), $m$ ranges over every possible outcome of measurement $M=\{M_m\}.$
Rules (In), (UT), (IF), (L0) and (L1) are determined by the basic postulates of quantum mechanics.

\begin{defn}[Denotational Semantics \cite{Ying11}]\label{den-sem-def} For any quantum \textbf{while}-program $S$, its semantic function is the mapping $\llbracket S\rrbracket:\mathcal{D}(\mathcal{H}_S)\rightarrow \mathcal{D}(\mathcal{H}_S)$ defined by \begin{equation}\llbracket S\rrbracket(\rho)=\sum\left\{\!|\rho^\prime: \langle S,\rho\rangle\rightarrow^\ast\langle \downarrow,\rho^\prime\rangle|\!\right\}\end{equation}
for every $\rho\in\mathcal{D}(\mathcal{H}_S)$, where $\rightarrow^\ast$ is the reflexive and transitive closure of $\rightarrow$, and $\left\{\!|\cdot|\!\right\}$ denotes a multi-set.
\end{defn}

\subsection{Deterministic Quantum Program with Exit}
The results given in this subsection are applicable for deterministic quantum program with exit, sequential or concurrent. To simplify the presentation, we only provide the proof detail for sequential quantum programs.
We first illustrate that our model can be used to design a compiler for Q-While.
\begin{theorem}\label{SW}
Any Q-While program can be expressed in the sequential quantum program model.
\end{theorem}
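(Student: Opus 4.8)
The plan is to proceed by structural induction on the grammar of Q-While (Definition \ref{q-syntax}), compiling each program $S$ into a \emph{block}: a finite set of fresh locations $L(S)$ with a distinguished entry location $\mathrm{in}(S)$ and a ``continuation slot'', i.e.\ a location outside $L(S)$ at which control must arrive exactly when $S$ has finished. The $Act$ and $Q$ data attached to each location will be chosen so that one execution step of the sequential model (apply $\E_l$, then apply $\M_l$, then move the location according to $Q$) performs exactly one transition of the operational semantics of Definition \ref{def-op-sem}. Since the model assumes a uniform number $N+1$ of measurement outcomes at every location, whenever a Q-While statement uses a measurement $\M$ with fewer outcomes we pad it with zero Kraus operators $M_{l,j}=0$; padded outcomes occur with probability $0$, so the value of $Q$ on them is irrelevant. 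The location sets produced for distinct subprograms are taken to be disjoint (relabel by fresh copies), so that ``gluing'' two blocks is just identifying the continuation slot of one with the entry location of the other. Note that every $Q$-value produced will be a singleton, so the resulting program is deterministic, consistent with Q-While being deterministic.

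For the base cases: $\mathbf{skip}$ becomes a single location $l$ with $\E_l=\mathcal I_{\H}$, $\M_l=\{I_{\H},0,\dots,0\}$, and $Q(j,l)=\{c\}$ for the continuation $c$; $q:=\ket 0$ becomes a single location whose super-operator is the trace-preserving initialisation map $\rho\mapsto\sum_n\op{0}{n}_q\,\rho\,\op{n}{0}_q$ with trivial measurement and continuation $c$; $\overline q:=U[\overline q]$ becomes a single location with super-operator $\rho\mapsto U\rho U^{\dag}$, trivial measurement, continuation $c$. In each case one step reproduces rules (Sk), (In), (UT) and then hands control to $c$. For the inductive cases: $S_1;S_2$ is compiled by giving the block of $S_1$ continuation $\mathrm{in}(S_2)$ and the block of $S_2$ continuation $c$, matching (SC); the case statement $\mathbf{if}\,(\square m\cdot\M[\overline q]=m\to S_m)\,\mathbf{fi}$ becomes a location $l$ with $\E_l=\mathcal I_{\H}$, $\M_l=\M$ (padded), $Q(m,l)=\{\mathrm{in}(S_m)\}$, each $S_m$ compiled with continuation $c$, matching (IF); and $\mathbf{while}\ \M[\overline q]=1\ \mathbf{do}\ S\ \mathbf{od}$ becomes a loop-head location $l$ with $\E_l=\mathcal I_{\H}$, $\M_l=\{M_0,M_1,0,\dots,0\}$, $Q(0,l)=\{c\}$, $Q(1,l)=\{\mathrm{in}(S)\}$, with the body $S$ compiled so that \emph{its} continuation is $l$ again, matching (L0) and (L1). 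The top-level program $S$ is then placed with $l_0=\mathrm{in}(S)$ and a fresh absorbing exit location $l_e$ (declared exactly as in Section \ref{sec:model}) as its continuation; for a given input we instantiate $\rho_0$ to that state.

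Correctness is an induction on the structure of $S$ establishing the invariant: if at some step the system occupies $\mathrm{in}(S)$ with quantum data $\rho$ and the mass at all other locations is left untouched, then as $k\to\infty$ the partial density operator accumulated at the continuation slot of the block of $S$ converges to $\lb S\rb(\rho)$, in agreement with Definition \ref{den-sem-def}. The base, composition, and case steps follow directly from the transition rules identified above, using that zero Kraus operators contribute nothing to the accumulated state. I expect the $\mathbf{while}$ case to be the main obstacle: there one must match the standard fixed-point (least upper bound) characterisation of the loop's semantic function with the limit of the mass delivered to $c$ through $0,1,2,\dots$ traversals of the loop head; this uses the induction hypothesis applied to the body $S$ (re-entering $\mathrm{in}(S)$ with the residual post-measurement state is precisely the situation the hypothesis covers), together with the fact that partial density operators accumulated at the absorbing exit form a monotone, trace-bounded, hence convergent sequence. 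A secondary point is routine bookkeeping: checking that the disjoint-copy relabelling keeps $Q$ well defined and that the uniform-$N$ padding never perturbs the accumulated state.
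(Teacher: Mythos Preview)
Your proposal is correct and follows essentially the same approach as the paper: a structural recursion on the Q-While grammar that compiles each construct into a small block of locations (one location for $\mathbf{skip}$, initialisation, and unitary; a dispatch location for $\mathbf{if}$; a loop-head with back-edge for $\mathbf{while}$; and chaining for $S_1;S_2$). The paper's proof is in fact just the translation table with no separate correctness argument, so your added invariant/limit analysis for the loop case and the bookkeeping about padding and disjoint copies go beyond what the paper provides, but the underlying compilation scheme is the same.
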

\begin{proof} The proof is given in Table \ref{Simulate}.
\begin{table}[h]
\begin{eqnarray*}\label{Simulate}
&&\mathbf{skip}\equiv\begin{cases}
l_1: \mathrm{goto}~ l_2;\\
 l_2: \cdots;
\end{cases}\ \ \ \ \ \ \ \ \ \ \ \ \ \ \ \ \
S_1;S_2\equiv\begin{cases}
 l_1: S_1, ~\mathrm{goto}~ l_2;\\
l_2: S_2, ~\mathrm{goto}~ l_3;\\
l_3: \cdots;
\end{cases}\\
&&q:=|0\rangle\equiv\begin{cases}
l_1: q:=|0\rangle, ~\mathrm{goto}~ l_2;\\
 l_2: \cdots;
\end{cases}\ \ \
q:=U[\overline{q}]\equiv
\begin{cases}
   l_1: \overline{q}:=U[\overline{q}], ~\mathrm{goto}~ l_2;\\
  l_2: \cdots;
\end{cases}\\
&&\mathbf{if}\ \left(\square m\cdot \M[\overline{q}] =m\rightarrow S_m\right)\ \mathbf{fi}\equiv\begin{cases}
   l_1: ~\mathrm{Measure} ~~\rho ~~\mathrm{in}~~ \M, ~\mathrm{if ~outcome~ is}~ m ~\mathrm{goto}~ l_{m+1};\\
  l_2: S_1, ~\mathrm{goto}~ l_{N+2};~~N~\mathrm{the~total~number~of~outcomes}.\\
  l_3: S_2, ~\mathrm{goto}~ l_{N+2};\\
  \cdots\\
  l_{N+1}: S_N, ~\mathrm{goto}~ l_{N+2};\\
  l_{N+2}:\cdots
\end{cases}\\
&&\mathbf{while}\ \M[\overline{q}]=1\ \mathbf{do}\ S\ \mathbf{od}\\
&\equiv&\begin{cases}
   l_1: ~\mathrm{Measure}~\rho ~~\mathrm{in}~~ \M, ~\mathrm{if ~outcome~ is}~ 0 ~\mathrm{goto}~ l_{3}, ~\mathrm{otherwise~goto}~ l_2;\\
  l_2: S, ~\mathrm{goto}~ l_{1};\\
  l_3: \cdots
\end{cases}\\
&&\mathbf{exit}:\equiv l_1: ~\mathrm{goto}~ l_1;
\end{eqnarray*}
\caption{Simulate Q-While}
\end{table}
\end{proof}

Suppose we have a deterministic program $\pi$ with locations $L$, where each location $l_{i}\in L$ is associated with a trace preserving operation $\E_{l_i}$ and a measurement $\M_{l_i}=\{M_{l_i,0},\cdots,M_{l_i,N}\}$, and a function $f:\{0,1,\cdots,N\}\times L\mapsto L$.

By considering the state space
$$
\Delta=\{\sum_{i=0}^n \rho_i\otimes \op{l_i}{l_i}:\rho_i\geq 0, \sum_{i=0}^n\tr(\rho_i)=1,\}
$$
Each step's operation $\E_{\pi}$ can be written as
$$
\E_{\pi}(\sum_{i=0}^n \rho_i\otimes \op{l_i}{l_i})=\sum_{j=0}^N\sum_{l_i\in L} M_{l_i,j}\E_{l_i}(\rho_i) M_{l_i,j}^{\dag}\otimes\op{f(j,l_i)}{f(j,l_i)}.
$$
In other words, $\E_{\pi}=\M\circ\E$ where
\begin{eqnarray*}
&&\E=\sum_{l_i}\E_{l_i}\otimes\op{l_i}{l_i},\\
&&\M(\cdot)=\sum_{j=0}^N\sum_{l_i\in L}(M_{l_i,j} \otimes\op{f(j,l_i)}{l_i})\cdot (M_{l_i,j}^{\dag}\otimes\op{l_i}{f(j,l_i)}).
\end{eqnarray*}
Similarly, a deterministic quantum concurrent program can be modeled by a super-operator.
Therefore, we have the following lemma.
\begin{lem}\label{l1}
 A deterministic quantum program $\pi$ can be modeled by a super-operator $\E_{\pi}$ in a larger state space $\Delta$. After $k$ step, the state becomes $\E_{\pi}^k(\sigma_0)$ where $\sigma_0=\rho_0\otimes\op{l_0}{l_0}$.
\end{lem}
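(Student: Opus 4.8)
The plan is to verify directly that the two-substep execution rule described just above implements, on the enlarged state space $\Delta$, the single super-operator $\E_{\pi}=\M\circ\E$, and then to close by induction on the number of steps. Since the constructions of $\E$ and $\M$ are already written down, the proof is essentially a bookkeeping check; the one point to be careful about is that determinism is exactly what is needed to turn the nondeterministic branching into a genuine map.

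First I would check that $\E$ and $\M$ are genuine super-operators on the ambient space $\H\otimes\mathbb{C}^{|L|}$. The map $\E=\sum_{l_i}\E_{l_i}\otimes\op{l_i}{l_i}$ is completely positive and trace preserving because each $\E_{l_i}$ is, and the blocks act on the orthogonal subspaces indexed by $L$. For $\M$, the operators $N_{j,l_i}:=M_{l_i,j}\otimes\op{f(j,l_i)}{l_i}$ form a Kraus family, and
\[
\sum_{j=0}^N\sum_{l_i\in L} N_{j,l_i}^{\dag} N_{j,l_i}=\sum_{l_i\in L}\Big(\sum_{j=0}^N M_{l_i,j}^{\dag} M_{l_i,j}\Big)\otimes\op{l_i}{l_i}=\sum_{l_i\in L} I_\H\otimes\op{l_i}{l_i}=I,
\]
using $\sum_j M_{l_i,j}^{\dag} M_{l_i,j}=I_\H$ at each location; here determinism guarantees that $f(j,l_i)$ is a \emph{single} location, so that $\M$ is one super-operator rather than a set of them. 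Hence $\E_{\pi}=\M\circ\E$ is completely positive and trace preserving. Moreover every Kraus operator of $\E_{\pi}$ has the form $(\text{operator on }\H)\otimes\op{l'}{l}$, and such an operator conjugates a block-diagonal-in-$L$ operator to a block-diagonal-in-$L$ one; together with trace preservation and positivity this shows $\E_{\pi}$ maps $\Delta$ into $\Delta$.

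Next I would match one execution step. Starting from an arbitrary $\sigma=\sum_{i=0}^n\rho_i\otimes\op{l_i}{l_i}\in\Delta$, applying $\E$ yields $\sum_i\E_{l_i}(\rho_i)\otimes\op{l_i}{l_i}$, which is precisely the first substep; applying $\M$ then yields $\sum_{j=0}^N\sum_{l_i\in L} M_{l_i,j}\E_{l_i}(\rho_i)M_{l_i,j}^{\dag}\otimes\op{f(j,l_i)}{f(j,l_i)}$, which is exactly the second substep (measurement followed by the deterministic update $f$). Thus $\E_{\pi}(\sigma)$ is the state obtained after one step from $\sigma$. Finally, induction on $k$ finishes the proof: the base case is $\sigma_0=\rho_0\otimes\op{l_0}{l_0}=\E_{\pi}^{0}(\sigma_0)\in\Delta$, and if the state after $k$ steps equals $\E_{\pi}^{k}(\sigma_0)\in\Delta$, then by the previous step the state after $k+1$ steps is $\E_{\pi}(\E_{\pi}^{k}(\sigma_0))=\E_{\pi}^{k+1}(\sigma_0)$. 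The concurrent case is identical after replacing the single location register by the product register $L_1\times\cdots\times L_m\times[m]$ and $f$ by the deterministic choice functions $Q_s$. There is no real obstacle; the only subtlety worth flagging is that determinism is used twice — to make $\M$ a single super-operator, and to make the execution relation a function so that it agrees with the iteration $\E_{\pi}^{k}$.
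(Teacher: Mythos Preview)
Your proposal is correct and follows exactly the approach the paper takes: the paper simply writes down $\E_{\pi}=\M\circ\E$ with the Kraus families for $\E$ and $\M$, notes that the concurrent case is analogous, and then states the lemma without any further proof. Your write-up supplies the routine verifications (CPTP, $\Delta\to\Delta$, matching one step, induction) that the paper leaves implicit, so there is nothing to correct.
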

 For programs with exits, we have the following:
\begin{lem}\label{l2}
For deterministic quantum program, which contain an exit location $l_e$, we have the following inequality
\begin{equation}
(I_{\H}\otimes \op{l_e}{l_e})\E(\sigma)(I_{\H}\otimes\op{l_e}{l_e})\geq (I_{\H}\otimes \op{l_e}{l_e})\sigma(I_{\H}\otimes\op{l_e}{l_e})
\end{equation}
which holds for any $\sigma\in\Delta$.
\end{lem}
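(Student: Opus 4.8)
The plan is to unfold the one‑step operator $\E_\pi=\M\circ\E$ of Lemma~\ref{l1} on an arbitrary
$\sigma=\sum_{i=0}^n \rho_i\otimes\op{l_i}{l_i}\in\Delta$, conjugate both sides by the projection $I_{\H}\otimes\op{l_e}{l_e}$, and isolate the contribution coming from the exit location itself. (Here the ``$\E$'' appearing in the statement is the one‑step program operator $\E_\pi$; I will write it that way for clarity.)

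First I would compute, using the Kraus form of $\M$ and the orthogonality identity
$\op{f(j,l_k)}{l_k}\cdot\op{l_i}{l_i}\cdot\op{l_k}{f(j,l_k)}=\delta_{ik}\,\op{f(j,l_i)}{f(j,l_i)}$, that
$$
\E_\pi(\sigma)=\sum_{j=0}^N\sum_{l_i\in L} M_{l_i,j}\E_{l_i}(\rho_i)M_{l_i,j}^{\dag}\otimes\op{f(j,l_i)}{f(j,l_i)},
$$
which is exactly the transition rule already displayed above. Conjugating by $I_{\H}\otimes\op{l_e}{l_e}$ keeps precisely the summands with $f(j,l_i)=l_e$, so
$$
(I_{\H}\otimes\op{l_e}{l_e})\E_\pi(\sigma)(I_{\H}\otimes\op{l_e}{l_e})
=\Big(\sum_{\substack{0\le j\le N,\ l_i\in L\\ f(j,l_i)=l_e}} M_{l_i,j}\E_{l_i}(\rho_i)M_{l_i,j}^{\dag}\Big)\otimes\op{l_e}{l_e}.
$$

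Next I would split this sum into the part with $l_i=l_e$ and the part with $l_i\ne l_e$. For $l_i=l_e$, the defining properties of an exit location --- $\E_{l_e}=\mathcal{I}_{\H}$, $\M_{l_e}=\{I_{\H},0,\dots,0\}$, and $f(j,l_e)=l_e$ for all $j$ --- force the contribution to be exactly $\rho_e\otimes\op{l_e}{l_e}$, since only the $j=0$ Kraus operator $M_{l_e,0}=I_{\H}$ is nonzero. On the other hand, $(I_{\H}\otimes\op{l_e}{l_e})\,\sigma\,(I_{\H}\otimes\op{l_e}{l_e})=\rho_e\otimes\op{l_e}{l_e}$ as well. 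Subtracting, the difference is
$$
\Big(\sum_{\substack{0\le j\le N,\ l_i\ne l_e\\ f(j,l_i)=l_e}} M_{l_i,j}\E_{l_i}(\rho_i)M_{l_i,j}^{\dag}\Big)\otimes\op{l_e}{l_e},
$$
a sum of positive semidefinite operators (each $\E_{l_i}(\rho_i)\ge 0$ since $\rho_i\ge 0$, and conjugation preserves positivity), hence positive semidefinite. This is the claimed inequality.

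There is no serious obstacle: the lemma is essentially the observation that once probability mass has reached the exit location it can only accumulate there, because $l_e$ is a sink ($f(\cdot,l_e)=l_e$ and it acts trivially), while mass may additionally flow in from other locations. The only care needed is the bookkeeping that separates the $l_i=l_e$ term cleanly from the genuinely ``incoming'' mass. The concurrent case is identical after replacing the single location register by the tuple of location registers together with the scheduler register, and using the corresponding exit locations $l_{s,e_s}$.
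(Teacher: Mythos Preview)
Your argument is correct and follows essentially the same route as the paper: both split $\sigma$ into its exit block $\rho_e\otimes\op{l_e}{l_e}$ and the remaining blocks, use that $\E_\pi$ fixes the exit block, and observe that the contribution from the non‑exit blocks is positive semidefinite. The paper phrases this more abstractly---writing the difference as $(I_{\H}\otimes\op{l_e}{l_e})\E_\pi\bigl[\sigma-(I_{\H}\otimes\op{l_e}{l_e})\sigma(I_{\H}\otimes\op{l_e}{l_e})\bigr](I_{\H}\otimes\op{l_e}{l_e})$ and invoking complete positivity---whereas you carry out the explicit Kraus expansion, but the content is the same.
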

\begin{proof}
According to the definition of exit location, we have
$$
\E_{\pi}(\rho_e\otimes\op{l_e}{l_e})=\rho_e\otimes\op{l_e}{l_e}.
$$
We note the following
\begin{eqnarray*}
&&(I_{\H}\otimes \op{l_e}{l_e})\E(\sigma)(I_{\H}\otimes\op{l_e}{l_e})- (I_{\H}\otimes \op{l_e}{l_e})\sigma(I_{\H}\otimes\op{l_e}{l_e})\\
&=& (I_{\H}\otimes \op{l_e}{l_e})\E[\sigma-(I_{\H}\otimes \op{l_e}{l_e})\sigma(I_{\H}\otimes\op{l_e}{l_e})](I_{\H}\otimes\op{l_e}{l_e})\\
&\geq &0,
\end{eqnarray*}
where we use the fact that any state $\sigma\in\Delta$, $\sigma-[I_{\H}\otimes \op{l_e}{l_e})\sigma(I_{\H}\otimes\op{l_e}{l_e})\geq 0$.
\end{proof}
The above statement is also true for quantum concurrent program $\pi$ consisting of $\pi_1,\pi_2,\cdots,\pi_m$ where each $\pi_s$ has an exit location $l_{s,e_s}$. where the only difference is that there is more than one exit location.

According to Lemma \ref{l1} and Lemma \ref{l2}, we have the following quantum B\"{o}hm-Jacopini theorem.
\begin{theorem}\label{BJ}
Any deterministic program $\pi$ with exit, including program written in Q-While, can be written as a single ``while'' statement in Q-While for input $\sigma_0\vDash I_{\H}\otimes\op{l_0}{l_0}$,
\begin{equation}
\mathbf{while}\ \M[\overline{q}]=1\ \mathbf{do}\ S\ \mathbf{od},
\end{equation}
where $S=\E_{\pi}$, $\M=\{M_0=I_{\H}\otimes\op{l_e}{l_e},M_1=I_{\H}\otimes\sum_{l_i\neq l_e,l_i\in L}\op{l_i}{l_i}\}$.
\end{theorem}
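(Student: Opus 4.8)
The plan is to show that running the Q-While loop $\mathbf{while}\ \M[\overline q]=1\ \mathbf{do}\ S\ \mathbf{od}$ with $S=\E_\pi$ and $\M$ the two-outcome projective measurement $\{M_0=I_\H\otimes\op{l_e}{l_e},\ M_1=I_\H\otimes\sum_{l_i\neq l_e}\op{l_i}{l_i}\}$ has exactly the same denotational semantics on inputs $\sigma_0\vDash I_\H\otimes\op{l_0}{l_0}$ as the deterministic program $\pi$. First I would unfold the operational semantics of the loop using rules (L0) and (L1): starting from $\langle \mathbf{while}\ \M[\overline q]=1\ \mathbf{do}\ S\ \mathbf{od},\sigma\rangle$, one iteration first splits off the terminated part $M_0\sigma M_0^\dag$ (the component already sitting at $l_e$) and then feeds the non-terminated part $M_1\sigma M_1^\dag$ through $S=\E_\pi$ before re-entering the loop. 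Iterating $k$ times and summing the terminated pieces according to Definition \ref{den-sem-def}, I get
\begin{equation*}
\llbracket \mathbf{while}\ \M[\overline q]=1\ \mathbf{do}\ S\ \mathbf{od}\rrbracket(\sigma_0)=\sum_{k=0}^{\infty} M_0\bigl(\E_\pi(M_1\,\cdot\,M_1^\dag)\bigr)^{k}(\sigma_0)\,M_0^\dag,
\end{equation*}
which I would then compare term by term with the trace accumulated at $l_e$ by $\pi$ after $k$ steps, namely $(I_\H\otimes\op{l_e}{l_e})\,\E_\pi^{k}(\sigma_0)\,(I_\H\otimes\op{l_e}{l_e})$.

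The key observation that makes the two sides agree is Lemma \ref{l2}: once mass reaches the exit location it stays there and is untouched by $\E_\pi$, so $\E_\pi$ restricted to the $l_e$-block is the identity. Concretely, write any reachable $\sigma$ as $\sigma = \sigma^{(e)} + \sigma^{(\bar e)}$ where $\sigma^{(e)}=(I\otimes\op{l_e}{l_e})\sigma(I\otimes\op{l_e}{l_e})$ and $\sigma^{(\bar e)}$ is the complementary block; by Lemma \ref{l2} and the exit-location definition $\E_\pi(\sigma^{(e)})=\sigma^{(e)}$, and because $M_1$ kills the $l_e$-block we have $\E_\pi(M_1\sigma M_1^\dag)=\E_\pi(\sigma^{(\bar e)})$. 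Hence the newly-terminated mass produced in iteration $k$ of the loop, $M_0(\E_\pi(M_1\cdot M_1^\dag))^k(\sigma_0)M_0^\dag$, equals precisely the increment $(I\otimes\op{l_e}{l_e})\big(\E_\pi^{k}(\sigma_0)-\E_\pi^{k-1}(\sigma_0)\big)(I\otimes\op{l_e}{l_e})$ of exit-mass of $\pi$, which I would verify by an induction on $k$ using the block decomposition above. Summing the telescoping series shows the loop's output equals $\lim_{k\to\infty}(I\otimes\op{l_e}{l_e})\E_\pi^{k}(\sigma_0)(I\otimes\op{l_e}{l_e})$, the limiting state of $\pi$ accumulated at the exit, and since by Lemma \ref{l1} $\pi$'s evolution after $k$ steps is exactly $\E_\pi^k(\sigma_0)$, this is by definition the input/output relation computed by $\pi$. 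The case of a concurrent program with several exit locations $l_{s,e_s}$ is identical after taking $M_0=I_\H\otimes\sum_s\op{l_{s,e_s}}{l_{s,e_s}}$ (suitably tensored), which the remark following Lemma \ref{l2} already licenses.

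The main obstacle I anticipate is purely bookkeeping rather than conceptual: one must be careful that the multiset sum in Definition \ref{den-sem-def} is matched correctly against the operational unfolding of (L0)/(L1) — in particular that every terminating computation path of the loop contributes exactly once and corresponds to a unique iteration count $k$ — and that the monotone limit in Lemma \ref{l2} guarantees the infinite series converges (in the trace norm, using $\tr\le 1$) to the claimed state rather than merely being formally equal. I would also need to state explicitly that the ancillary ``control'' registers $L$ are part of $\mathcal H_S$ for the embedded Q-While program and that the hypothesis $\sigma_0\vDash I_\H\otimes\op{l_0}{l_0}$ is what ensures the loop is entered in a configuration consistent with $\pi$'s initial location; once these conventions are fixed, the equality of denotational semantics follows from the term-by-term comparison above.
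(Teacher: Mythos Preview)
Your proposal is correct and follows the same route as the paper, which simply states that the theorem follows from Lemma~\ref{l1} and Lemma~\ref{l2} without further elaboration. You have in fact supplied considerably more detail than the paper does: the telescoping identity $M_0\bigl(\E_\pi(M_1\cdot M_1^\dag)\bigr)^k(\sigma_0)M_0^\dag=(I\otimes\op{l_e}{l_e})\bigl(\E_\pi^{k}(\sigma_0)-\E_\pi^{k-1}(\sigma_0)\bigr)(I\otimes\op{l_e}{l_e})$ and the matching of the loop's denotational semantics against $\lim_k (I\otimes\op{l_e}{l_e})\E_\pi^k(\sigma_0)(I\otimes\op{l_e}{l_e})$ are exactly the computations that the paper leaves implicit.
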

\subsubsection{Proof System for Partial and Total Correctness}
We borrow the idea from \cite{ZYY19} to derive a proof system reasoning $\tilde{\lozenge}$ and $\square$ for program $\pi$ with exit.
We first generalize the definition of partial correctness and total correctness of Q-While program \cite{ZYY19} to deterministic program $S$ with exit.
\begin{defn}
\begin{enumerate}
\item $\{P\}S\{Q\}$ is true in
the sense of partial correctness in aQHL, written:
$\models_{\rm par}^{\rm a}\{P\}S\{Q\}$, if for all
$\rho\vDash P$: $$ \lb S\rb (\rho)\models Q.$$
\item $\{P\}S\{Q\}$ is true in
the sense of total correctness in aQHL, written:
$\models_{\rm tot}^{\rm a}\{P\}S\{Q\}$, if for all $\rho\vDash P$: $$\lb S\rb (\rho)\models Q\ \&\ \tr(\lb S\rb (\rho))=\tr \rho.$$
\end{enumerate}
\end{defn}
\begin{fact}
For any deterministic program $S$ with exit location $l_e$, we employ Theorem \ref{BJ} to transform it into a Q-While program $\pi$ in state space $\H\otimes L$. We have the following correspondence
\begin{itemize}
\item $\models_{\rm par}^{\rm a}\{P\}S\{Q\}$ is equivalent to $\square q$ in $\pi$ by choosing $q=\sum_{l_i\neq l_e\in L} I_{\H}\otimes\op{l_i}{l_i}+Q\otimes \op{l_e}{l_e}$ for input $\sigma_0\vDash P\otimes\op{l_0}{l_0}$.
\item $\models_{\rm tot}^{\rm a}\{P\}S\{Q\}$ is equivalent to $\tilde{\lozenge} q$ in $\pi$ by choosing $q=Q\otimes \op{l_e}{l_e}$ for input $\sigma_0\vDash P\otimes\op{l_0}{l_0}$.
\end{itemize}
\end{fact}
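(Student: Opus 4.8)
The plan is to fix an arbitrary input $\rho_0$ with $\rho_0\models P$ (taking $\rho_0$ to be a density operator, $\tr\rho_0=1$), follow the unique execution of the deterministic program from $\sigma_0=\rho_0\otimes\op{l_0}{l_0}$, and reduce both temporal assertions to a single statement about the partial density operator obtained as the limiting mass at the exit location. Concretely, by Lemma~\ref{l1} the state after $k$ steps is $\sigma_k=\E_\pi^k(\sigma_0)$; writing $\sigma_k=\sum_{l_i\in L}\rho_k^{(i)}\otimes\op{l_i}{l_i}$ and setting $\tau_k:=\rho_k^{(e)}$ for the component carried by the exit location $l_e$ — so that $(I_\H\otimes\op{l_e}{l_e})\sigma_k(I_\H\otimes\op{l_e}{l_e})=\tau_k\otimes\op{l_e}{l_e}$ — Lemma~\ref{l2} gives $\tau_k\sqsubseteq\tau_{k+1}$ for all $k$, while $\tr\tau_k\le\tr\sigma_k=1$. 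Hence $(\tau_k)$ is a bounded $\sqsubseteq$-increasing chain of positive operators, so it converges to some $\tau_\infty$ with $\tau_k\sqsubseteq\tau_\infty$, and therefore $\supp(\tau_k)\subseteq\supp(\tau_\infty)$, for every $k$.

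The crucial point is that $\tau_\infty=\lb S\rb(\rho_0)$. By Theorem~\ref{BJ}, the behaviour of $S$ on $\sigma_0$ coincides with that of the single loop $W\equiv\mathbf{while}\ \M[\overline q]=1\ \mathbf{do}\ \E_\pi\ \mathbf{od}$ with $\M=\{M_0=I_\H\otimes\op{l_e}{l_e},\,M_1=I_\H\otimes\sum_{l_i\ne l_e}\op{l_i}{l_i}\}$. Unrolling the loop rules (L0) and (L1) of Definition~\ref{def-op-sem} gives $\lb W\rb(\sigma_0)=\sum_{n\ge 0}M_0(\E_\pi\circ\mathcal M_1)^n(\sigma_0)M_0^\dag$, where $\mathcal M_1(\cdot)=M_1\cdot M_1^\dag$. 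Since $\E_\pi$ fixes every state supported on $l_e$, the $k$-th partial sum of this series is exactly the exit component $(I_\H\otimes\op{l_e}{l_e})\E_\pi^k(\sigma_0)(I_\H\otimes\op{l_e}{l_e})=\tau_k\otimes\op{l_e}{l_e}$, so the series telescopes to $\tau_\infty\otimes\op{l_e}{l_e}$; by Definition~\ref{den-sem-def} this is precisely the embedding of $\lb S\rb(\rho_0)$ at $l_e$, giving the claim.

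For partial correctness, take $q=\sum_{l_i\ne l_e}I_\H\otimes\op{l_i}{l_i}+Q\otimes\op{l_e}{l_e}$, which is a projection of the shape admitted by $AP$. A direct computation gives $q\sigma_k=\sum_{l_i\ne l_e}\rho_k^{(i)}\otimes\op{l_i}{l_i}+Q\tau_k\otimes\op{l_e}{l_e}$, so $\sigma_k\models q$ iff $Q\tau_k=\tau_k$ iff $\supp(\tau_k)\subseteq Q$. The subspaces $\supp(\tau_k)$ increase and all lie inside $\supp(\tau_\infty)$, and $Q\tau_\infty=\lim_k Q\tau_k$; hence $\square q$ holds along the execution iff $\supp(\tau_k)\subseteq Q$ for every $k$ iff $\supp(\tau_\infty)\subseteq Q$ iff $\lb S\rb(\rho_0)\models Q$. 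Quantifying over all $\rho_0\models P$ gives the equivalence with $\models_{\rm par}^{\rm a}\{P\}S\{Q\}$.

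For total correctness, take $q=Q\otimes\op{l_e}{l_e}$. Then $\tr[\sigma_k q]=\tr[Q\tau_k]$, which is non-decreasing in $k$ (because $\tau_k\sqsubseteq\tau_{k+1}$ and $Q\ge 0$) and tends to $\tr[Q\tau_\infty]=\tr[Q\lb S\rb(\rho_0)]\le\tr\lb S\rb(\rho_0)\le\tr\rho_0=1$; hence $\tilde{\lozenge} q$ — i.e. $\forall\delta>0\ \exists k:\ \tr[Q\tau_k]>1-\delta$ — holds iff $\tr[Q\lb S\rb(\rho_0)]=1$. Since $Q$ is a projection and $\tr\rho_0=1$, this equality is equivalent to the conjunction of $\tr\lb S\rb(\rho_0)=\tr\rho_0$ (termination with probability one) and $\tr[(I_\H-Q)\lb S\rb(\rho_0)]=0$, the latter being $\supp(\lb S\rb(\rho_0))\subseteq Q$, i.e. $\lb S\rb(\rho_0)\models Q$; quantifying over $\rho_0\models P$ gives the equivalence with $\models_{\rm tot}^{\rm a}\{P\}S\{Q\}$. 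The only genuinely delicate step is the identification $\tau_\infty=\lb S\rb(\rho_0)$ above — reconciling the step-indexed iteration $\E_\pi^k$ that drives the QTL semantics with the denotational semantics of the loop produced by Theorem~\ref{BJ}, including the careful bookkeeping of the $l_e$-component; everything else is monotone-convergence and elementary facts about traces and supports of positive operators against a projection.
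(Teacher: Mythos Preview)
Your argument is correct. The paper states this Fact without proof, treating the correspondence as an immediate consequence of the definitions and of Theorem~\ref{BJ}; you have supplied the details the paper omits. The one substantive step is exactly the one you flag: identifying the monotone limit $\tau_\infty$ of the $l_e$-component with $\lb S\rb(\rho_0)$. Your telescoping computation is right---writing $\F=\E_\pi\circ\mathcal M_1$ and using that $\E_\pi$ fixes exit states gives $\E_\pi=\mathcal M_0+\F$ on block-diagonal states, hence $\E_\pi^k(\sigma_0)=\sum_{n=0}^{k-1}\mathcal M_0\F^n(\sigma_0)+\F^k(\sigma_0)$ and $\mathcal M_0\E_\pi^k(\sigma_0)=\sum_{n=0}^{k}\mathcal M_0\F^n(\sigma_0)$, which is precisely the $k$-th partial sum of the loop semantics. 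After that, both bullets reduce, as you show, to monotone convergence plus elementary trace/support facts against a projection. One small remark: you normalise $\tr\rho_0=1$, which is the right convention for the QTL side (states in $\Delta$ are trace-one); for the Hoare side the definitions quantify over $\D(\H)$, but by homogeneity this restriction is without loss of generality.
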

Due to this correspondence, we can derive a relatively complete proof system for general deterministic program with exit using the results of \cite{ZYY19}.
\begin{prop}
For $\pi=\mathbf{while}\ \M[\overline{q}]=1\ \mathbf{do}\ S\ \mathbf{od}$ with $S=\E_{\pi}$, $\M=\{M_0=I_{\H}\otimes\op{l_e}{l_e},M_1=I_{\H}\otimes\sum_{l_i\neq l_e,l_i\in L}\op{l_i}{l_i}\}$ and $p,q\in AP$,
the following proof system is relatively complete for proving $\square q$ and $\tilde{\lozenge} q$ respectively,
\begin{equation*}\begin{split}
&\square:~~~~
\frac{\{p\}S\{\supp[(M_0\wedge q)+ (M_1\wedge p)]\}}{\{\supp[(M_0\wedge q)+ (M_1\wedge p)]\}\pi\{q\}} \\
&\tilde{\lozenge}:~~~~
\frac{\begin{split}\{&p\}S\{\supp[(M_0\wedge q)+ (M_1\wedge p)]\}\\
{\rm for\ any}\ \epsilon>0,\ t_\epsilon\ {\rm is\ a}\ &
(\supp[(M_0\wedge q)+ (M_1\wedge p)],\epsilon){\rm {\text -}ranking}~{\rm function\ of\ }
\mathbf{while}\end{split}}{\{\supp[(M_0\wedge q)+ (M_1\wedge p)]\}\pi\{q\}}
\end{split}\end{equation*}
where a function $t: \mathcal{D}(\mathcal{H}_\mathbf{while})\rightarrow\mathbb{N}$ is called a $(q,\epsilon)$-ranking function of $\mathbf{while}$ if for all
$\sigma$ with $\sigma\models q$, we have $\lb S\rb (M_1\rho{M_1}) \models q$,
 $t(\lb S\rb (M_1\rho{M_1}))\leq t(\sigma)$ and  $\tr(\sigma)\geq\epsilon$ implies
$t(\lb S\rb (M_1\sigma{M_1}))< t(\sigma)$.
\end{prop}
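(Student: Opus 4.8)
The plan is to reduce this statement to the soundness and relative completeness results for the applied quantum Hoare logic (aQHL) of \cite{ZYY19}, via the correspondence recorded in the preceding Fact. First I would unfold what $\square q$ and $\tilde{\lozenge} q$ mean for the program $\pi=\mathbf{while}\ \M[\overline{q}]=1\ \mathbf{do}\ S\ \mathbf{od}$ with $S=\E_\pi$ acting on the enlarged state space $\H\otimes L$. By Theorem \ref{BJ}, every admissible $\omega$-sequence of $\pi$ starting from $\sigma_0\vDash p$ is exactly $\sigma_0,\E_\pi(\sigma_0),\E_\pi^2(\sigma_0),\dots$, so $\square q$ holds iff $\E_\pi^k(\sigma_0)\vDash q$ for every $k$, and $\tilde{\lozenge} q$ holds iff the numbers $\tr[(M_0\wedge q)\,\E_\pi^k(\sigma_0)]$ — the mass that has reached the exit while satisfying $q$ — tend to $1$. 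The Fact then identifies these with $\models^{\rm a}_{\rm par}\{p\}S'\{q'\}$ and $\models^{\rm a}_{\rm tot}\{p\}S'\{q'\}$ for the original deterministic program $S'$ with the obvious translation of $q$. So proving relative completeness of the two displayed rules amounts to proving them as derived rules inside aQHL for the single while-loop $\pi$.

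The core of the argument is the loop invariant. For partial correctness, I would verify that $\supp[(M_0\wedge q)+(M_1\wedge p)]$ is a genuine loop invariant: a state $\sigma\models\supp[(M_0\wedge q)+(M_1\wedge p)]$ splits under the guard measurement into $M_0\sigma M_0$ (already at the exit, and inside $q$, so it stays in $q$ forever since $\E_{l_e}$ is the identity) and $M_1\sigma M_1$ (inside $p$), and the premise $\{p\}S\{\supp[(M_0\wedge q)+(M_1\wedge p)]\}$ pushes the latter back into the invariant after one execution of the body. Hence by the while-rule of quantum Hoare logic the postcondition on exit is $M_0\wedge q\subseteq q$, giving $\square q$. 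I should check explicitly that $(M_0\wedge q)+(M_1\wedge p)$ is a subspace sum so that taking $\supp$ of an operator realizing it is the right atomic proposition, and that the projective measurement $\{M_0,M_1\}$ leaves the relevant supports invariant — this uses the fact quoted at the end of Section on quantum measurements, that $\supp(\rho)\subseteq P$ implies $\{P,I-P\}$ does not disturb $\rho$.

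For the almost-sure case I would additionally invoke the ranking-function machinery of \cite{ZYY19}: given the same invariant premise together with, for every $\epsilon>0$, a $(\supp[(M_0\wedge q)+(M_1\wedge p)],\epsilon)$-ranking function $t_\epsilon$, the total-correctness while-rule of aQHL yields $\models^{\rm a}_{\rm tot}$, i.e. $\lb\pi\rb(\sigma_0)\models q$ and $\tr(\lb\pi\rb(\sigma_0))=\tr(\sigma_0)$, which is precisely $\tilde{\lozenge} q$ by the Fact. The routine verifications — that each rule instance is an instance of a sound aQHL rule, and that any true $\square q$ or $\tilde{\lozenge} q$ admits a witnessing invariant $p'$ (relative completeness, where the oracle supplies the needed expressible assertions and ranking functions) — are then inherited directly from \cite{ZYY19}.

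The main obstacle I expect is bookkeeping around the enlarged space and the two-outcome guard: making precise that $\E_\pi$ is trace-preserving on $\Delta$, that $M_0=I_\H\otimes\op{l_e}{l_e}$ and $M_1=I_\H\otimes\sum_{l_i\neq l_e}\op{l_i}{l_i}$ really form a projective measurement whose outcome-$0$ branch is stationary (Lemma \ref{l2}), and that the support condition defining $\rho\models p$ interacts correctly with the sum $M_0\wedge q+M_1\wedge p$ — in particular that $\supp(\sigma)\subseteq\supp[(M_0\wedge q)+(M_1\wedge p)]$ is preserved by the guard split. Once these structural facts are nailed down, the completeness claim is a direct transport of the corresponding theorem of \cite{ZYY19} through Theorem \ref{BJ} and the Fact.
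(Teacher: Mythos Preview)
Your proposal is correct and follows exactly the approach the paper takes: the paper states the proposition without a standalone proof, simply noting that ``due to this correspondence, we can derive a relatively complete proof system for general deterministic program with exit using the results of \cite{ZYY19},'' which is precisely your reduction via Theorem~\ref{BJ} and the preceding Fact to the soundness and relative completeness of the aQHL while-rules. If anything, you have supplied more of the bookkeeping (the invariant check, the role of Lemma~\ref{l2}, the support/measurement interaction) than the paper itself records.
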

\subsubsection{Compute the reachability super-operator for finite dimensional systems}
By Theorem \ref{BJ}, we have
\begin{theorem}\label{compute}
Suppose the operation of $\pi$ is $\E_{\pi}$ with rational entries on quantum system $\H$ and locations $L$ with $l_e$ as the exit location.
The reachability super-operator is defined as
$$\F_{\pi}(\cdot)=\lim_{n\rightarrow\infty} (I_{\H}\otimes \op{l_e}{l_e})\E_{\pi}^n(\cdot)(I_{\H}\otimes \op{l_e}{l_e})$$
and can be computed in polynomial time.

Moreover, the average running time of the program can be computed in polynomial time for any rational input $\sigma_0=\rho_0\otimes\op{l_0}{l_0}$.
\end{theorem}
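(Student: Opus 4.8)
The plan is to push everything through the matrix representation of $\E_\pi$ and to exploit the spectral constraints of Lemma~\ref{tech0}. Write $D=\dim(\H\otimes L)$, let $M$ be the $D^2\times D^2$ matrix representation of $\E_\pi$ (rational, by hypothesis), and let $\mathcal{P}_e$ be the super-operator $\mathcal{P}_e(\cdot)=(I_\H\otimes\op{l_e}{l_e})\cdot(I_\H\otimes\op{l_e}{l_e})$, whose matrix representation $M_e$ is a rational $0/1$ projection matrix. Then $\mathcal{P}_e\circ\E_\pi^n$ has matrix representation $M_eM^n$, so I first need to show $\lim_{n\to\infty}M_eM^n$ exists and then compute it. Existence I would get from Lemma~\ref{l2}: for each state $\sigma$ the sequence $\mathcal{P}_e\E_\pi^n(\sigma)$ is nondecreasing in the L\"owner order and has trace at most $1$, hence converges; since density operators span the whole operator space, $M_eM^n$ converges entrywise, and $\F_\pi$ is a well-defined completely positive, trace-non-increasing map.

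For the computation of $\F_\pi$, I would compute a Jordan decomposition $M=SJS^{-1}$ in polynomial time via~\cite{CAI94}, keeping algebraic entries in the minimal-polynomial-plus-approximation encoding. By Lemma~\ref{tech0} every eigenvalue $\lambda_s$ of $M$ satisfies $|\lambda_s|\le1$, and any $\lambda_s$ on the unit circle lies in a $1\times1$ Jordan block. Hence the Ces\`aro averages $\frac{1}{N}\sum_{n<N}J^n$ converge to the block-diagonal matrix $\Pi_1$ that is the identity on the coordinates of the eigenvalue $1$ and zero elsewhere (blocks with $|\lambda_s|<1$ give $\frac{1}{N}\sum_n\binom{n}{j}\lambda_s^{n-j}\to0$; a $1\times1$ block with $|\lambda_s|=1$, $\lambda_s\ne1$ gives $\frac{1}{N}\sum_n\lambda_s^n\to0$; the $\lambda_s=1$ blocks give $1$). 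Since the ordinary limit $\lim_n M_eM^n$ exists it must equal the Ces\`aro limit, so $\F_\pi$ has matrix representation $M_eS\Pi_1S^{-1}$, which can be reported directly or converted to Kraus operators using Lemma~\ref{mr-lem}; every step here is polynomial time over the algebraic numbers.

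For the average running time on a rational input $\sigma_0=\rho_0\otimes\op{l_0}{l_0}$, set $p_n=\tr[(I_\H\otimes\op{l_e}{l_e})\E_\pi^n(\sigma_0)]$ and $p_\infty=\tr[\F_\pi(\sigma_0)]$, so that $q_n=p_n-p_{n-1}$ is the probability of terminating exactly at step $n$ (with $p_0=0$, assuming $l_0\ne l_e$). Abel summation gives $\sum_{n=1}^{N}nq_n=\sum_{n=0}^{N-1}(p_N-p_n)$, and since $0\le p_N-p_n\le p_\infty-p_n$, the average running time is $T=\sum_{n\ge0}(p_\infty-p_n)$ once this sum is known to converge. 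Expanding $p_n$ in the Jordan basis of $M$ and invoking Lemma~\ref{tech0} once more — a convergent real sequence of the form $\sum_s(\text{polynomial in }n)\lambda_s^n$ cannot carry a nonzero $\lambda_s^n$ term with $|\lambda_s|=1$, $\lambda_s\ne1$ — I get $p_n=p_\infty+\sum_{s:\,|\lambda_s|<1}r_s(n)\lambda_s^n$ with $\deg r_s<k_s$, so $p_\infty-p_n=-\sum_s r_s(n)\lambda_s^n$ decays geometrically, $T=-\sum_{s:\,|\lambda_s|<1}\sum_{n\ge0}r_s(n)\lambda_s^n$, and each inner sum is a closed-form rational expression in $\lambda_s$, namely a finite combination of derivatives of $\sum_n x^n=1/(1-x)$ evaluated at $x=\lambda_s$. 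All of $r_s$, $\lambda_s$ are read off $S$, $S^{-1}$, $J$ and $\sigma_0$, so $T$ is computable in polynomial time.

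I expect the one genuine obstacle to be the eigenvalues on the unit circle other than $1$: one has to marry the L\"owner-monotonicity of Lemma~\ref{l2} with the $1\times1$-block structure of Lemma~\ref{tech0} to see both that these eigenvalues do not obstruct the limit defining $\F_\pi$ and that they do not occur in $p_n$, so that $p_\infty-p_n$ is summable. Everything after that — tracking Jordan blocks, summing the $\binom{n}{j}\lambda^{n-j}$ series in closed form, and checking that the polynomial-time Jordan-form algorithm of~\cite{CAI94} keeps all arithmetic over the algebraic numbers within polynomial time — is routine bookkeeping.
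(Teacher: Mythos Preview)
Your proposal is correct but follows a different route from the paper. The paper first applies Theorem~\ref{BJ} to recast $\pi$ as a single Q-While loop with guard $\{M_0,M_1\}$, and then cites~\cite{YYFD13} for closed-form resolvent expressions: with $M=(\sum_iE_i\otimes E_i^*)(M_1\otimes M_1)$ the matrix representation of the trace-nonincreasing map ``project off the exit, then apply $\E_\pi$'' and $N$ obtained from $M$ by deleting its unit-modulus Jordan blocks, the reachability super-operator has matrix representation $(M_0\otimes M_0)(I-N)^{-1}$ and the expected running time is $\langle\Phi|(M_0\otimes M_0)(I-N)^{-2}(\sigma_0\otimes I)|\Phi\rangle$. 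You instead bypass the while-loop reduction and work with the matrix of the full trace-preserving $\E_\pi$, using Lemma~\ref{l2} to get existence of $\lim_n M_eM^n$, a Ces\`aro argument to identify it as $M_eS\Pi_1S^{-1}$, and Abel summation plus termwise evaluation of $\sum_n r_s(n)\lambda_s^n$ for the running time. The paper's proof is shorter because the spectral analysis is delegated to~\cite{YYFD13} and the answers come packaged as resolvents; yours is self-contained and makes the roles of Lemmas~\ref{l2} and~\ref{tech0} explicit --- the step you flag as the ``one genuine obstacle'' (eliminating the $|\lambda|=1$, $\lambda\neq1$ contributions to $p_n$) is exactly what the passage from $M$ to $N$ accomplishes in the paper's formulation, approached from the other side.
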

\begin{proof}
According to Theorem \ref{BJ}, we reformulate the program in
\begin{equation*}
\mathbf{while}\ \M[\overline{q}]=1\ \mathbf{do}\ S\ \mathbf{od},
\end{equation*}
where $S=\E_{\pi}$, $\M=\{M_0=I_{\H}\otimes\op{l_e}{l_e},M_1=I_{\H}\otimes\sum_{l_i\neq l_e,l_i\in L}\op{l_i}{l_i}\}$.

From \cite{YYFD13}, we know that $\F_{\pi}(\cdot)=\sum_{i} F_i\cdot F_i^{\dag}$
satisfies
\begin{equation}
\sum_{i} F_i\otimes F_i^{*}=(M_0\otimes M_0)(I_{\H\otimes L\otimes\H\otimes L}-N)^{-1},
\end{equation}
where $N$ is obtained by replacing $M$'s Jordan blocks with eigenvalues $\lambda$ for $|\lambda|=1$ with $M=(\sum_{i} E_i\otimes E_i^{*})(M_1\otimes M_1)$.

Moreover, the average running time of the while program equals the average running time of the original program for reaching the exit location, which can be written as
\begin{equation}
\langle\Phi|(M_0\otimes M_0)(I-N)^{-2}(\sigma_0\otimes I_{\H\otimes L})|\Phi\rangle.
\end{equation}
where $|\Phi\rangle=\sum_{i,l}\ket{il}\ket{il}$ is the unnormalized maximally entangled state on $\H\otimes L\otimes\H\otimes L$.

Note that the Jordan block can be computed in the polynomial time of the input size \cite{CAI94}, also verifying whether $|\lambda|=1$ is valid in polynomial time for algebraic number $\lambda$ \cite{GE93}.
\end{proof}
Remark: Before this work, this is not known, even for the Q-While programs, because of the complexity of the composition of algebraic numbers due to the composition of whiles.

For deterministic program with exit, $\square$, $\tilde{\lozenge}$ and $\lozenge$ can be verified.
\begin{theorem}\label{test}
Suppose the operation of $\pi$ is $\E_{\pi}$ with rational entries on $d$-dimensional quantum system $\H$ and locations $L$ with $l_e$ as exit location, $|L|=l$ .
For $p=P\otimes\op{l_e}{l_e}\in AP$, $\tilde{\lozenge} p$, $\square p$, ${\lozenge}p$ can be verified in polynomial time.
\end{theorem}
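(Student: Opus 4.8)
\emph{Proof proposal.} The plan is to combine the B\"ohm--Jacopini reduction (Theorem~\ref{BJ}) with the monotonicity of the exit block (Lemma~\ref{l2}): together they turn each of the three properties into a finite, polynomial-time computation built on the reachability super-operator $\F_\pi$ of Theorem~\ref{compute}. By Theorem~\ref{BJ} and Lemma~\ref{l1} the unique admissible run is $\sigma_k=\E_\pi^k(\sigma_0)$ with $\sigma_0=\rho_0\otimes\op{l_0}{l_0}$, and $\E_\pi$ is trace preserving, so every $\sigma_k$ is a state, block diagonal over the location register. Put $\rho_{k,e}=(I_\H\otimes\op{l_e}{l_e})\sigma_k(I_\H\otimes\op{l_e}{l_e})$ for the exit block. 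Lemma~\ref{l2} gives $\rho_{k,e}\sqsubseteq\rho_{k+1,e}$; since $p=P\otimes\op{l_e}{l_e}$ and $I_\H\otimes\op{l_e}{l_e}$ are positive semidefinite, both $a_k:=\tr[p\,\sigma_k]$ and $\tr[\rho_{k,e}]$ are non-decreasing and bounded by $1$, hence they converge, with limits $\tr[p\,\F_\pi(\sigma_0)]$ and $\tr[\F_\pi(\sigma_0)]$ respectively.

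\emph{The easy cases $\tilde{\lozenge}p$ and $\square p$.} By definition $w\vDash\tilde{\lozenge}p$ iff $\sup_k a_k=1$, which by monotonicity equals $\lim_k a_k=\tr[p\,\F_\pi(\sigma_0)]$; since $\F_\pi(\sigma_0)$ is a rational operator computable in polynomial time (Theorem~\ref{compute}), one computes $\tr[p\,\F_\pi(\sigma_0)]$ and tests whether it equals $1$. For $\square p$, the subspace $p$ consists only of vectors sitting at location $l_e$, so $\sigma_0\vDash p$ already forces $l_0=l_e$ and $\supp\rho_0\subseteq P$; conversely, if $l_0=l_e$ then $\sigma_k=\sigma_0$ for every $k$ because $l_e$ is absorbing. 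Hence $\square p$ holds iff $l_0=l_e$ and $\supp\rho_0\subseteq P$, a trivial polynomial-time test.

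\emph{The main case $\lozenge p$.} Since $\sigma_k$ is block diagonal over the locations with $\tr\sigma_k=1$, we have $\supp\sigma_k\subseteq P\otimes\op{l_e}{l_e}$ iff $\tr[\rho_{k,e}]=1$ and $\supp\rho_{k,e}\subseteq P$. By monotonicity of $\rho_{k,e}$, once $\tr[\rho_{k,e}]=1$ the block is frozen and equals $\F_\pi(\sigma_0)$; therefore $w\vDash\lozenge p$ iff (i) $\supp\F_\pi(\sigma_0)\subseteq P$, and (ii) the program terminates exactly, i.e., $b_k:=1-\tr[\rho_{k,e}]=\tr[(I_\H\otimes\sum_{l_i\ne l_e}\op{l_i}{l_i})\,\sigma_k]$ vanishes for some $k$. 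Condition (i) is a polynomial-time rank test on the output of Theorem~\ref{compute}. For (ii), $b_k$ is non-negative and non-increasing by Lemma~\ref{l2}, and the Cayley--Hamilton theorem applied to $\E_\pi$ as a linear operator on the $(dl)^2$-dimensional space of operators on $\H\otimes L$ furnishes $c_0,\dots,c_{(dl)^2}\in\mathbb{Q}$ with $c_{(dl)^2}=1$ and $\sum_i c_i\E_\pi^{k+i}=0$, whence $\sum_i c_i b_{k+i}=0$: the sequence $b_k$ obeys a linear recurrence of order $r\le(dl)^2$. If $b_k=0$ for some $k$, then by monotonicity $b$ is eventually $0$, hence finitely supported, and a finitely supported sequence obeying an order-$r$ recurrence has generating function equal to a polynomial $g(x)/h(x)$ with $\deg g<r$ and $h(0)=1$, hence of degree $<r$; so its support lies in $\{0,\dots,r-1\}\subseteq\{0,\dots,(dl)^2-1\}$. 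Consequently $b_k=0$ for some $k$ iff $b_{(dl)^2}=0$, and $b_{(dl)^2}$ is obtained by forming $\E_\pi^{(dl)^2}(\sigma_0)$ (equivalently $M^{(dl)^2}$ for the matrix representation $M$ of $\E_\pi$) with polynomially many rational matrix multiplications. Thus $\lozenge p$ reduces to the two polynomial-time tests (i) and $b_{(dl)^2}\stackrel{?}{=}0$.

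\emph{Anticipated obstacle.} The only delicate case is $\lozenge p$: deciding whether a general linear recurrence has a zero is Skolem-hard --- indeed, unrestricted $\lozenge$ for deterministic programs is Skolem-equivalent --- and what makes the present instance tractable is exactly Lemma~\ref{l2}, which forces the candidate zero set of $b_k$ to be a final segment of $\mathbb{N}$ and hence $b$ to be finitely supported once it hits $0$, collapsing the test to the single evaluation $b_{(dl)^2}=0$. The remaining work is routine bookkeeping: controlling the bit-complexity of the Jordan-form computation behind $\F_\pi$ (via \cite{CAI94,GE93}, already subsumed in Theorem~\ref{compute}) and of the matrix powers and rank tests.
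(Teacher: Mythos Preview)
Your argument is correct, and your treatment of $\tilde{\lozenge}p$ coincides with the paper's (both reduce to evaluating $\tr[p\,\F_\pi(\sigma_0)]$ via Theorem~\ref{compute}; you additionally spell out why monotonicity from Lemma~\ref{l2} turns the $\sup$ into a $\lim$). One small imprecision: $\F_\pi(\sigma_0)$ need not have rational entries --- the construction in Theorem~\ref{compute} passes through a Jordan form and yields algebraic numbers --- but the equality and support tests you need are still polynomial-time on algebraic inputs, so the conclusion stands.

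For $\square p$ and $\lozenge p$ your route genuinely differs from the paper's. For $\square p$ the paper proves the general criterion $\square p\iff \frac{1}{dl}\sum_{k=0}^{dl-1}\E_\pi^k(\sigma_0)\vDash p$ via the stabilisation of the increasing chain of supports $P_k=\supp\bigl(\sum_{j\le k}\E_\pi^j(\sigma_0)\bigr)$; this works for \emph{any} $p\in AP$. You instead exploit the very special shape $p=P\otimes\op{l_e}{l_e}$: since $p$ lives entirely over the absorbing location, $\square p$ collapses to the trivial test ``$l_0=l_e$ and $\supp\rho_0\subseteq P$''. Your shortcut is simpler here but does not generalise; the paper's argument does.

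For $\lozenge p$ the paper invokes the termination bound from \cite{YYFD13} (a \textbf{while} loop on a $dl$-dimensional system terminates iff it terminates within $dl-1$ steps) and then checks $p$ at step $dl-1$. You replace this citation by a self-contained Cayley--Hamilton / generating-function argument: the non-exit mass $b_k$ obeys a monic linear recurrence of order $r\le(dl)^2$, and a monotone non-increasing, finitely supported solution of such a recurrence must vanish from index $r$ onward, so it suffices to test $b_{(dl)^2}=0$ together with $\supp\F_\pi(\sigma_0)\subseteq P$. Your bound $(dl)^2$ is quadratically worse than the paper's $dl-1$, but both are polynomial, and your derivation has the advantage of not relying on an external reference --- it also makes explicit that what defeats the Skolem obstruction here is precisely the monotonicity of $b_k$ from Lemma~\ref{l2}.
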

\begin{proof}
$\tilde{\lozenge} p$ can be verified using Theorem \ref{compute}.

${\lozenge}p$ if the program terminates in finite steps and satisfies $p$. According to \cite{YYFD13}, a While program on $dl$ dimensional system terminates if and only if it terminates in $dl-1$ steps.
The rest is to verify $p$ on $\E_{\pi}^{d-1}(\rho_0\otimes\op{l_0}{l_0})$.

$\square p$ if and only if $\sum_{k=0}^{dl-1} {\E_{\pi}}^k(\sigma_0)/dl \vDash p$. The only if part is due to the definition of $\square p$:
for any $k$ and any $\sigma_0=\rho_0\otimes\op{l_0}{l_0}$, ${\E_{\pi}}^k(\sigma_0)\vDash p$. Then, $\sum_{k=0}^{dl-1} {\E_{\pi}}^k(\sigma_0)/dl \vDash p$. To see the converse,
we let $P_k=\supp(\sum_{k=0}^{k} {\E_{\pi}}^k(\sigma_0)/dl)  \vDash p$, then $P_0\subseteq P_1\subseteq \cdots \subseteq P_{d-1}\subseteq \H$. Moreover, if $P_{k}=P_{k+1}$, then $P_{k+1}=P_{k+2}$. By counting the dimensions, we know that the increasing chain of subspaces converge after $dl-1$ steps.
\end{proof}

\subsubsection{Quantum Kleene Closure}
Note that the quantum state can always be entangled with other systems. For this situation, we can have the following result as the quantum analogue of the Kleene closure.
\begin{theorem}
Suppose the operation of $\pi$ is $\E_A$ with rational entries on $d$-dimensional system $\H_A$, and the input state is given as $\rho_{AB}$ which is a general state in system $\H_A\otimes\H_B$. $\E_A\otimes \mathcal{I}_B$ is an operator applied on $\H_A\otimes\H_B$ with input $\rho_{AB}$.
Suppose $p$ is a projection of $\H_A\otimes\H_B$, we have
$$
\square p \Leftrightarrow \sum_{k=0}^{t-1}\frac{(\E_A^k\otimes \mathcal{I}_B)(\rho_{AB})}{t} \vDash p
$$
where $t$ can be chosen to be $d^2-1$, and does not need to depend on the dimension of $\H_B$.
\end{theorem}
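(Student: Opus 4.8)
The plan is to read $\square p$ as the assertion that every state along the (unique, since the program is deterministic) execution $\sigma_k:=(\E_A^k\otimes\mathcal{I}_B)(\rho_{AB})$ satisfies $p$, i.e.\ $\supp(\sigma_k)\subseteq p$ for all $k\ge 0$ (recall $(\E_A^k\otimes\mathcal{I}_B)=(\E_A\otimes\mathcal{I}_B)^k$). Since the support of a finite sum of positive operators is the closed span of the individual supports, $\supp\bigl(\sum_{k=0}^{t-1}\sigma_k/t\bigr)=\cspan\{\supp(\sigma_k):0\le k\le t-1\}=:W_{t-1}$, and more generally $W_n:=\cspan\{\supp(\sigma_k):0\le k\le n\}$ is an increasing chain of closed subspaces of $\H\otimes\H_B$. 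As $p$ is a closed subspace, $W_n\subseteq p$ for all $n$ iff $\supp(\sigma_k)\subseteq p$ for all $k$; hence the implication ``$\square p\Rightarrow$ the average satisfies $p$'' is immediate (from $W_{t-1}\subseteq W_\infty\subseteq p$), and the real content is the converse, for which it suffices to show that the chain $(W_n)$ is already stationary at step $t-1$.

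The key step is that the support dynamics factor through a fixed $d^2$-dimensional space, independently of $\H_B$. Fix a Kraus form $\E_A(\cdot)=\sum_i E_i\cdot E_i^\dagger$; then $(\E_A\otimes\mathcal{I}_B)^k$ has Kraus operators $E_{i_1}\cdots E_{i_k}\otimes I_B$, ranging over all $k$-fold products. Using $\supp\bigl((A\otimes I_B)\tau(A\otimes I_B)^\dagger\bigr)=(A\otimes I_B)\supp(\tau)$ together with the closed-span-of-supports fact, one gets $\supp(\sigma_k)=\cspan\{(E_{i_1}\cdots E_{i_k}\otimes I_B)\ket{\psi}:\ket{\psi}\in\supp(\rho_{AB})\}$, and therefore $W_n=\cspan\{(F\otimes I_B)\ket{\psi}:F\in V_n,\ \ket{\psi}\in\supp(\rho_{AB})\}$, where $V_n:=\spans\{E_{i_1}\cdots E_{i_j}:0\le j\le n\}\subseteq\B(\H)$ (the term $j=0$ contributing $I$). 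The point is that $(V_n)$ is an increasing chain inside the $d^2$-dimensional operator space $\B(\H)$, which sees $\H_B$ only through the passive factor $I_B$; moreover $V_{n+1}=V_n+\spans\{E_i\}\cdot V_n$, so as soon as $V_n=V_{n+1}$ the chain $(V_n)$ is constant from $n$ on, and hence so is $(W_n)$.

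It then remains to bound the stabilisation index, which is a dimension count in $\B(\H)$ entirely parallel to the one used for the chain $P_k$ in the proof of Theorem~\ref{test}: starting from $\dim V_0=1$, each strict inclusion increases the dimension by at least one, so $(V_n)$ is stationary after at most $d^2-1$ steps; using in addition that if $V_1$ adds only a single new direction beyond $I$ then the algebra it generates, hence every $V_n$, has dimension at most $d$ by Cayley--Hamilton, the constant value is already reached at step $t-1=d^2-2$. Consequently $W_n=W_{t-1}$ for all $n\ge t-1$ with $t=d^2-1$, which gives the converse and proves the equivalence; since every bound is in terms of $d=\dim\H$ only, $t$ does not depend on $\dim\H_B$. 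The main obstacle is exactly the first move — recognising that the reachable-support chain is governed by the operator space $\B(\H)$ rather than by $\H\otimes\H_B$; once this is in place, the bound and its independence of $\dim\H_B$ follow from the same counting argument as in Theorem~\ref{test}.
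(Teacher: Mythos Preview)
Your argument is correct, but the route is genuinely different from the paper's. The paper first reduces to a pure input $\rho_{AB}=\op{\chi}{\chi}$, then uses the Schmidt decomposition $\ket{\chi}=\sum_{k=1}^{s}\nu_k\ket{\alpha_k}\ket{\beta_k}$ with $s\le d$ to observe that every $(\E_A^k\otimes\mathcal{I}_B)(\rho_{AB})$ is supported in $\H_A\otimes Q$ with $Q=\spans\{\ket{\beta_k}\}$; this replaces $\H_B$ by an $s$-dimensional subspace and reduces the problem to a system of dimension $ds\le d^2$, after which the chain argument of Theorem~\ref{test} applies verbatim. You instead keep $\H_B$ arbitrary and push all the finiteness into the operator filtration $V_n=\spans\{E_{i_1}\cdots E_{i_j}:0\le j\le n\}\subseteq\B(\H_A)$, showing that $W_n$ is the image of $V_n\otimes I_B$ applied to $\supp(\rho_{AB})$; stabilisation of $(V_n)$ inside the $d^2$-dimensional space $\B(\H_A)$ then forces stabilisation of $(W_n)$.

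What each buys: the paper's Schmidt reduction is shorter and recycles Theorem~\ref{test} wholesale, but its plain dimension count in a $d^2$-dimensional state space naively yields $t=d^2$ rather than the stated $t=d^2-1$. Your operator-algebra approach is exactly the ``subalgebra generation'' viewpoint alluded to in the remark following the theorem, and it makes the connection to the sharper bounds ($\sqrt{2}\,d^{1.5}+3d$, and $2d\log_2 d+4d$) transparent, since those results bound precisely the length of the chain $(V_n)$. Your extra Cayley--Hamilton case split (when $\dim V_1\le 2$ the whole chain sits in $\spans\{I,E,\ldots,E^{d-1}\}$; when $\dim V_1\ge 3$ at most $d^2-3$ further jumps remain) cleanly recovers the exact bound $t=d^2-1$ claimed in the statement.
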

\begin{proof}
By the fact of $\supp(\rho/2+\sigma/2)=\mathrm{span}\{\supp(\rho),\supp(\sigma)\}$, we only need to consider $\rho_{AB}=\op{\chi}{\chi}$ to be a pure state. Let $\ket{\chi}$ have a Schmidt decomposition $\ket{\chi}=\sum_{k=1}^s \nu_i\ket{\alpha_i}\ket{\beta_i}$ \cite{NI11} with $s\leq d$, then $(\E_A^k\otimes \mathcal{I}_B)(\rho_{AB})\vDash q=\H_A\otimes Q$ with $Q$ being the subspace of $\H_B$ spanned by $\ket{\beta_i}$. This problem is transformed into system with dimension no more than $d^2$. The rest is the same as the proof of Theorem \ref{test}.
\end{proof}
This problem of determining the smallest $t$ is related to the subalgebra generation problem \cite{LA86}, which has recently been shown to
be less than $\sqrt{2} d^{1.5}+3d$ \cite{GUT18,PAP97}. This implies that $t$ of the above theorem can be chosen to be $\sqrt{2} d^{1.5}+3d$. Very recently, this bound is improved into $2d\log_2 d +4d$ by Shitov in \cite{Shitov19}.

It is interesting that $t$ can be chosen independent of the dimension of $\H_B$, even $\H_B$ is infinite dimensional.

\section{Decidability of QTL}\label{sec:complexity}
According to the analysis of the last three sections, each deterministic quantum program is determined by a trace preserving super-operator. For non-deterministic quantum program (sequential or concurrent), each non-deterministic choice is correlated to a super-operator. Therefore, a non-deterministic quantum program can always be modeled as a quantum automaton.
\begin{defn}
A quantum automaton $\mathcal{A}=(\H,Act,\{\E_{\alpha}|\alpha\in Act\},\rho_0)$, where
\begin{itemize}
\item $\H$ is a Hilbert space, called the state space;
\item $Act$ is a set of finite action names;
\item for each $\alpha\in Act$, $\E_{\alpha}$ is a trace preserving super-operator;
\item $\rho_0$ is an initial state of the system, which lies in $\H$.
\end{itemize}
At the first step, $\E_{\alpha_0}$ is applied on $\rho_0$. After this, $\E_{\alpha}$ is chosen non-deterministically at each step.
\end{defn}
For simplicity, we choose $\sigma_0=\E_{\alpha_0}(\rho_0)$ as an initial state, and each $w=w_1w_2\cdots w_k\cdots\in Act^{\omega}$ induces an admissible state sequence $\sigma_0\sigma_1\sigma_2\cdots\sigma_k\cdots$ with $\sigma_i=\E_{w_i}(\sigma_{i-1})$ for all $i\geq 1$.

To study the QTL of a general quantum concurrent system, we only need to study the QTL of the corresponding quantum automaton.

The following observation from \cite{YY12} is useful.
\begin{fact}\label{inverse}
For any $p\in AP$ and super-operator $\E$, we have
\begin{equation}
\{\sigma|\E(\sigma)\vDash p\}=\{\sigma|\sigma\vDash (\E^*(p^{\perp}))^{\perp}\},
\end{equation}
where for any non-negative matrix $M$, $M^{\perp}=\{\supp(M)\}^{\perp}$ where $\supp(M)$ denotes the subspace spanned by the eigenvectors of $M$ with nonzero eigenvalues.

For $p\in AP$, let $\E^{-1}(p):=(\E^*(p^{\perp}))^{\perp}$, we have $\E^{-1}(\vee_{i=1}^r p_i)=\vee_{i=1}^r \E^{-1}(p_i)$.
For $p\in AP$, let $\E(p)=\supp(\E(\frac{p}{d_p}))$, where $d_p$ denotes the dimension of $p$, we have $\E(\vee_{i=1}^r p_i)=\vee_{i=1}^r \E(p_i)$.
\end{fact}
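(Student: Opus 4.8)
The plan is to prove the three assertions of the Fact in order: first the pre-image equation for $\E^{-1}$, then that the forward image map distributes over joins, and finally the same for $\E^{-1}$, which I expect to be the delicate part.

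\emph{The pre-image equation.} For a projection $p$ and any positive semidefinite operator $\tau$, the condition $\tau\vDash p$, i.e.\ $\supp(\tau)\subseteq p$, is equivalent to $(I-p)\tau=0$, and since both $p^{\perp}:=I-p$ and $\tau$ are positive semidefinite this is equivalent to $\tr(p^{\perp}\tau)=0$. Applying this with $\tau=\E(\sigma)$ and the defining property $\tr\big(p^{\perp}\E(\sigma)\big)=\tr\big(\E^{*}(p^{\perp})\sigma\big)$ of the Schr\"odinger--Heisenberg dual, $\E(\sigma)\vDash p$ is equivalent to $\tr\big(\E^{*}(p^{\perp})\sigma\big)=0$. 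Since $\E^{*}(p^{\perp})=\sum_{k}E_{k}^{\dag}p^{\perp}E_{k}$ is positive semidefinite, this trace vanishes iff $\supp(\sigma)$ is orthogonal to $\supp\big(\E^{*}(p^{\perp})\big)$, i.e.\ $\supp(\sigma)\subseteq\big(\E^{*}(p^{\perp})\big)^{\perp}=\E^{-1}(p)$, which is $\sigma\vDash\E^{-1}(p)$; this gives the asserted equality of the two sets.

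\emph{Distributivity of the image map.} Fix a Kraus form $\E(\cdot)=\sum_{k}E_{k}\cdot E_{k}^{\dag}$ and let $V_{i}$ denote the range of $p_{i}$. Writing $p/d_{p}=\tfrac{1}{d_{p}}\sum_{j}\op{v_{j}}{v_{j}}$ for an orthonormal basis $\{\ket{v_{j}}\}$ of the range $V$ of $p$, we get $\E(p/d_{p})=\tfrac{1}{d_{p}}\sum_{k}\sum_{j}(E_{k}\ket{v_{j}})(E_{k}\ket{v_{j}})^{\dag}$, whose support is the span of $\{E_{k}\ket{v_{j}}\}_{k,j}$, i.e.\ $\sum_{k}E_{k}(V)$; hence $\E(p)=\sum_{k}E_{k}(V)$. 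Because a linear operator sends a sum of subspaces to the sum of the images, $E_{k}(\vee_{i}V_{i})=E_{k}\big(\sum_{i}V_{i}\big)=\sum_{i}E_{k}(V_{i})$, and therefore $\E(\vee_{i}p_{i})=\sum_{k}\sum_{i}E_{k}(V_{i})=\sum_{i}\sum_{k}E_{k}(V_{i})=\vee_{i}\E(p_{i})$.

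\emph{Distributivity of the pre-image map.} The first move is to reduce $\E^{-1}$ to the Kraus operators. Since $E_{k}^{\dag}q^{\perp}E_{k}=(q^{\perp}E_{k})^{\dag}(q^{\perp}E_{k})$ has support $\big(\ker(q^{\perp}E_{k})\big)^{\perp}$, the image computation above (applied to $\E^{*}$, whose Kraus operators are the $E_{k}^{\dag}$) gives $\supp\big(\E^{*}(q^{\perp})\big)=\sum_{k}\big(\ker(q^{\perp}E_{k})\big)^{\perp}$, so that $\E^{-1}(q)=\big(\E^{*}(q^{\perp})\big)^{\perp}=\bigcap_{k}\ker(q^{\perp}E_{k})=\bigcap_{k}\{\ket{\psi}:E_{k}\ket{\psi}\in q\}=\bigcap_{k}E_{k}^{-1}(q)$, with $E_{k}^{-1}$ the ordinary linear pre-image. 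The claim thus amounts to the lattice identity $\bigcap_{k}E_{k}^{-1}(\vee_{i}p_{i})=\vee_{i}\bigcap_{k}E_{k}^{-1}(p_{i})$ for the finite family $\{E_{k}\}$. The inclusion $\supseteq$ follows immediately from monotonicity of pre-image. The reverse inclusion is the crux and the step I expect to be the main obstacle: for a single linear map the pre-image of a join of subspaces is in general larger than the join of the pre-images, so this cannot be had from soft lattice manipulation. I would attack it by using the structure genuinely present --- that the $E_{k}$ are the Kraus operators of one channel on $\H\otimes L$ and that each $p_{i}$ has the product form dictated by $AP$ --- in order to produce, for $\ket{\psi}$ with all $E_{k}\ket{\psi}\in\vee_{i}p_{i}$, a \emph{simultaneous} splitting $\ket{\psi}=\sum_{i}\ket{\psi_{i}}$ with $E_{k}\ket{\psi_{i}}\in p_{i}$ for every $k$; as a complementary route, the meet version $\bigcap_{k}E_{k}^{-1}(\wedge_{i}p_{i})=\wedge_{i}\bigcap_{k}E_{k}^{-1}(p_{i})$ does hold automatically (pre-image commutes with intersection), so one may also try to recast the desired identity in orthocomplemented form and reduce it to the meet case.
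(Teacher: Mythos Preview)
Your computation for the first assertion is correct and standard; the paper itself gives no argument here, simply recording the Fact as an observation from \cite{YY12}.

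The gap is in how you read $\vee$. You treat $p_1\vee\cdots\vee p_r$ as the lattice join (span) of subspaces, and then correctly observe that the identity $\bigcap_k E_k^{-1}(\sum_i V_i)=\sum_i\bigcap_k E_k^{-1}(V_i)$ is suspect. It is in fact false: take the qubit dephasing channel with Kraus operators $E_0=\op{0}{0}$, $E_1=\op{1}{1}$, and $p_1=\spa\{\ket{0}\}$, $p_2=\spa\{\ket{+}\}$. Then $p_1+p_2=\H$, so $\E^{-1}(p_1+p_2)=\H$, while $\E^{-1}(p_1)=p_1$ and $\E^{-1}(p_2)=\{0\}$, whence $\E^{-1}(p_1)+\E^{-1}(p_2)=p_1\neq\H$. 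No amount of exploiting the $AP$ structure will rescue the span version; the obstacle you flagged is a real counterexample, not a technicality.

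In this paper, however, $\vee$ on propositions is the \emph{set-theoretic union} of subspaces: see the footnote attached to the semantics of $\phi\vee\psi$ (``$p\vee q$ is the union of subspaces $p$ and $q$, $p\vee q$ is not always in $AP$''), and note how later arguments rely on $\E_\alpha^{-1}$ of a union of $s$ subspaces being again a union of at most $s$ subspaces. With this reading the two distributivity assertions are essentially definitional extensions of $\E$ and $\E^{-1}$ from $AP$ to finite unions, and their content reduces to the elementary fact that a subspace contained in a finite union $\cup_i p_i$ over $\mathbb{C}$ must lie in some single $p_{i_0}$. For the pre-image this gives, using your first part, that $\E(\sigma)\vDash\vee_i p_i$ iff $\supp(\E(\sigma))\subseteq p_{i_0}$ for some $i_0$ iff $\sigma\vDash\E^{-1}(p_{i_0})$ iff $\sigma\vDash\vee_i\E^{-1}(p_i)$; the image side is analogous. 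Your Kraus-level formulas $\E(p)=\sum_k E_k(V)$ and $\E^{-1}(p)=\bigcap_k E_k^{-1}(V)$ are correct and worth keeping, but they are not what drives the two $\vee$-identities here.
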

In the following, we study the decidability of the basic QTL formulae.
We assume all the entries are rational numbers, and $\H$ is $d$-dimensional.
If all $\E_{\alpha}$ are all unitaries, the decidability of $\square \vee_{i=1}^n p_i$, $\square\lozenge \vee_{i=1}^n p_i$ and $\lozenge \square\vee_{i=1}^n p_i$ was presented in \cite{LY14}. As part of our results below, we generalizing these results into general super-operators  and thus solve the open questions of \cite{LY14}.

It worth to notice that $\square\lozenge \vee_{i=1}^n p_i$ and $\lozenge \square\vee_{i=1}^n p_i$ are highly nontrivial in the sense that no uniform time bound for
the $\lozenge$ on different paths.

\subsection{Next}
To decide $\mathbf{O}(p)$, we only need to verify $\vDash p$ for the next step. In other words, it is enough to verify $\vDash p$ for the original system with different initial state $\sigma=\E_{\alpha}(\sigma_0)$ for all $\alpha\in Act$ where $\sigma_0$ is the current state.
\subsection{Invariance}
One-time variable invariance is universally quantified.
Invariance, denoted by $\square$, is a property holding throughout all states of all possible execution sequences.
\begin{lem}\cite{YY12}.
For $p\in AP$ and $|Act|>1$, we know that $\square p$ if and only if $\E^k(\sigma_0)\vDash p$ for $0\leq k\leq d-1$ with $\E=\frac{\sum_{\alpha\in Act} \E_{\alpha}}{|Act|}$.
\end{lem}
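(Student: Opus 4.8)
The plan is to collapse the universally quantified temporal assertion $\square p$ into a finite check on the single averaged super-operator $\E=\frac{1}{|Act|}\sum_{\alpha\in Act}\E_\alpha$, following the same pattern as the $\square$-case of Theorem~\ref{test}.

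First I would unfold the semantics of $\square p$ for the automaton $\A$: it holds iff for every word $w_1w_2\cdots\in Act^\omega$ and every $k\geq 0$ the state $\E_{w_k}\circ\cdots\circ\E_{w_1}(\sigma_0)$ satisfies $p$, i.e.\ its support lies in the subspace $p$. Using the elementary fact that the support of a finite sum of positive operators is the span of the individual supports (the same fact invoked in the proof of the quantum Kleene closure), together with
\[
\E^k=\frac{1}{|Act|^k}\sum_{w_1,\dots,w_k\in Act}\E_{w_k}\circ\cdots\circ\E_{w_1},
\]
one gets $\bigvee_{w_1,\dots,w_k}\supp(\E_{w_k}\circ\cdots\circ\E_{w_1}(\sigma_0))=\supp(\E^k(\sigma_0))$, the positive factor $|Act|^k$ being irrelevant to the support. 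Hence ``every time-$k$ state on every path satisfies $p$'' reduces to the single condition $\E^k(\sigma_0)\vDash p$, so $\square p$ is equivalent to $\E^k(\sigma_0)\vDash p$ holding for all $k\geq 0$.

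It remains to truncate this infinite conjunction to $0\leq k\leq d-1$. For this I would track the increasing chain of subspaces $R_k:=\supp\big(\sum_{j=0}^k\E^j(\sigma_0)\big)=\bigvee_{j=0}^k\supp(\E^j(\sigma_0))\subseteq\H$ and show it is constant from the first index where it stops strictly growing. The engine is monotonicity of supports: $\supp(\rho)\subseteq\supp(\sigma)$ implies $\supp(\E(\rho))\subseteq\supp(\E(\sigma))$, since $\rho$ is dominated by a positive multiple of $\sigma$ and $\E$ is positive and linear. Combining this with $\sum_{j=0}^{k+1}\E^j(\sigma_0)=\sigma_0+\E\big(\sum_{j=0}^{k}\E^j(\sigma_0)\big)$, and the observation that $\sum_{j=0}^k\E^j(\sigma_0)$ has support exactly $R_k$ (so it is sandwiched between two positive multiples of $\Pi_{R_k}$), one obtains $R_{k+1}=R_0\vee\supp(\E(\Pi_{R_k}))$; hence $R_k=R_{k-1}$ forces $R_{k+1}=R_k$. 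Since $\dim\H=d$, a strictly increasing chain of subspaces has length at most $d$, so $R_{d-1}=R_k$ for all $k\geq d-1$ and $\bigvee_{k\geq 0}\supp(\E^k(\sigma_0))=R_{d-1}$. Putting the two parts together, $\square p$ iff $R_{d-1}\subseteq p$ iff $\E^k(\sigma_0)\vDash p$ for all $0\leq k\leq d-1$.

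I expect the only delicate point to be the bookkeeping around supports and super-operators — in particular that $\supp(\E(\Pi_{R_k}))=\supp(\E(\tau))$ whenever $\supp(\tau)=R_k$, and the monotonicity claim above — both of which follow by sandwiching $\tau$ between two positive multiples of $\Pi_{R_k}$ and applying $\E$. Everything else is the span/dimension counting already exploited in Theorem~\ref{test}; note that the hypothesis $|Act|>1$ is not actually needed, as the case $|Act|=1$ is the deterministic automaton with $\E=\E_{\alpha_0}$ and the same argument applies verbatim.
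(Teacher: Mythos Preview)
Your proposal is correct and follows essentially the same approach as the paper. The paper merely cites \cite{YY12} for this lemma and omits a printed proof; the suppressed sketch in the source (and the parallel argument in Theorem~\ref{test}) proceeds exactly as you do: reduce $\square p$ over all words to $\E^k(\sigma_0)\vDash p$ for the averaged map via the support-of-sum identity, then truncate to $k\le d-1$ by the increasing-subspace/dimension-count argument. Your write-up is simply a more detailed version of that same reasoning, and your remark that the hypothesis $|Act|>1$ is inessential is also correct.
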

If $\phi$ is not an element of $AP$, but a union of elements in $AP$, in other words, $\phi$ is a finite union of closed subspaces, the problem of determining $\square \phi$ becomes non-trivial. In the following, we consider the general case.
\begin{theorem}\label{inv}
If $\phi=\vee_{i=1}^n p_i$ with $p_i\in AP$, $\square \phi$ is decidable for $|Act|>1$.
\end{theorem}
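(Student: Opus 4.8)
The plan is to reduce the validity of $\square\phi$ to a finite backward-reachability computation over finite unions of closed subspaces, and then to close it off with the stabilisation result of Lemma~\ref{uss}. Since every admissible execution is of the form $\sigma_0,\sigma_1,\dots$ with $\sigma_i=\E_{w_i}(\sigma_{i-1})$ for some $w=w_1w_2\cdots\in Act^{\omega}$, and since for $\phi=\vee_i p_i$ we have $w^i\vDash\phi$ iff $w(i)\vDash\phi$, the formula $\square\phi$ holds along all executions precisely when $\E_v(\sigma_0)\vDash\phi$ for \emph{every} finite word $v\in Act^{*}$, where $\E_{v_1\cdots v_k}:=\E_{v_k}\circ\cdots\circ\E_{v_1}$ and $\E_\epsilon:=\mathcal I$. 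I would stress at the outset that the averaging device $\E=\tfrac1{|Act|}\sum_{\alpha}\E_\alpha$ used for a single atomic proposition is \emph{not} available here: $\supp(\E(\sigma))=\bigvee_\alpha\supp(\E_\alpha(\sigma))$ may fail to sit inside any single $p_i$ even when each $\E_\alpha(\sigma)$ satisfies $\phi$, so all non-deterministic branches must be kept in play. This is exactly the extra difficulty relative to the single-proposition case of the preceding lemma.

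The main tool is the backward operator of Fact~\ref{inverse}: for $p\in AP$, $\E_\alpha^{-1}(p)=(\E_\alpha^{*}(p^{\perp}))^{\perp}\in AP$, and Fact~\ref{inverse} gives $\E_\alpha^{-1}(\vee_i p_i)=\vee_i\E_\alpha^{-1}(p_i)$; moreover $\E_\alpha^{-1}$ commutes with intersection as well, since $\{\sigma:\E_\alpha(\sigma)\vDash\psi_1\wedge\psi_2\}=\{\sigma:\E_\alpha(\sigma)\vDash\psi_1\}\cap\{\sigma:\E_\alpha(\sigma)\vDash\psi_2\}$. Hence $\E_\alpha^{-1}$ maps finite unions of subspaces to finite unions of subspaces, and is effectively computable from rational data (orthocomplement, the dual map $\E_\alpha^{*}$, and the support of a rational positive semidefinite matrix are all rational linear algebra). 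Now let $Y_k$ be the region $\{\sigma:\E_v(\sigma)\vDash\phi$ for all $v\in Act^{*}$ with $|v|\le k\}$. Splitting a nonempty word into its first letter $\alpha$ and the remainder $u$, so that $\E_{\alpha u}=\E_u\circ\E_\alpha$, gives the recursion $Y_0=\phi$ and $Y_{k+1}=\phi\wedge\bigwedge_{\alpha\in Act}\E_\alpha^{-1}(Y_k)$. By induction, distributing intersections over unions, each $Y_k$ is a finite union of subspaces and is effectively computable.

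Finally, $Y_0\supseteq Y_1\supseteq\cdots$ is a decreasing chain of finite unions of subspaces, so Lemma~\ref{uss} gives an $N$ with $Y_k=Y_N$ for all $k\ge N$; and since $Y_{k+1}$ depends on $Y_k$ only through the recursion, once $Y_{k+1}=Y_k$ the chain is constant thereafter, whence $Y_N=\bigcap_{k}Y_k$. As equality of two finite unions of subspaces is decidable (mutual inclusion, tested subspace by subspace), the algorithm is to compute $Y_0,Y_1,\dots$ until $Y_{k+1}=Y_k$ — which by Lemma~\ref{uss} must occur — and then to test whether $\sigma_0$ satisfies the resulting $Y:=Y_N$. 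Since $\mathbb{C}$ is infinite, a subspace contained in a finite union of subspaces is contained in one of them, so $\sigma_0\vDash Y$ merely means $\supp(\sigma_0)$ lies inside one of the finitely many subspaces comprising $Y$, which is decidable; and $\sigma_0\vDash Y$ holds iff $\E_v(\sigma_0)\vDash\phi$ for all $v\in Act^{*}$, i.e.\ iff $\square\phi$ is valid. The one genuine obstacle is that Lemma~\ref{uss} supplies no a priori bound on the stabilisation index (if $Y_k$ is a union of $M$ subspaces, $Y_{k+1}$ may require $n\,M^{|Act|}$), so the argument yields decidability only and not a complexity bound — in contrast to the single-proposition case, where a dimension count caps the chain at length $d$.
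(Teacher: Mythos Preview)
Your proof is correct and follows essentially the same approach as the paper: define the decreasing chain $Y_0=\phi$, $Y_{k+1}=\phi\wedge\bigwedge_{\alpha\in Act}\E_\alpha^{-1}(Y_k)$ (the paper writes $Y_k=Y_{k-1}\wedge\bigwedge_\alpha\E_\alpha^{-1}(Y_{k-1})$, which yields the same sequence since $Y_{k-1}\subseteq\phi$), invoke Lemma~\ref{uss} to obtain stabilisation, and test $\sigma_0\vDash Y_N$. Your write-up is in fact more careful than the paper's in justifying why averaging fails for unions, why each $Y_k$ remains a finite union of subspaces, and why the final membership test is decidable.
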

\begin{proof}
We can characterize all initial states $\sigma_0$ such that $\square \phi$ is valid in the following.
In particular, we present $\psi=\vee_{j=1}^r q_i$ with $q_i\in AP$ and show that $\square \phi$ if and only if $\sigma_0\vDash \psi$.
To see this, we let
\begin{eqnarray*}
Y_0&=& \phi,\\
Y_1&=&Y_0\bigcap_{\alpha\in Act} \E_{\alpha}^{-1}(Y_0),\\
&\cdots&,\\
Y_k&=&Y_{k-1}\bigcap_{\alpha\in Act} \E_{\alpha}^{-1}(Y_{k-1}),\\
&\cdots&
\end{eqnarray*}
$Y_0$ characterizes the set of states that satisfies $\phi$ in $0$ step. $Y_1$ characterizes the set of states that satisfies $\phi$ in $0$ step and $1$ step. $Y_k$ characterizes the set of states that satisfies $\phi$ in less than $k+1$ step.
Each $Y_i$ is a finite union of subspaces, and $Y_0\supseteq Y_1\supseteq \cdots Y_k\supseteq$. According to Lemma \ref{uss}, there exists $k$ such that $Y_k=Y_{m}$ for all $m\geq k$.
Let $Y_k=\psi$, then, we only need to verify $\sigma_0\vDash \psi$.
\end{proof}
\subsection{Eventually and Almost Surely Eventually}
Eventually is another one time variable. Determining $\lozenge p$ is an open issue, even for $|Act|=1$ and $p\in AP$. This is related to the famous Skolem problem \cite{Tao07} which asks whether there exists $n$ such that $A^n x\in p$ for given rational square matrix $A$, vector $x$ and subspace $p$.
\begin{fact}
For $|Act|=1$, $\lozenge p$ is decidable for $p\in AP$ if and only if the Skolem Problem \ref{Skolem} is decidable.
\end{fact}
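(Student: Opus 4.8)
\textit{Proof plan.} I would prove the two implications by explicit many-one reductions. For the direction ``Skolem decidable $\Rightarrow$ $\lozenge p$ decidable'': since $|Act|=1$, there is a unique admissible execution, $\sigma_0=\E(\rho_0)$ and $\sigma_i=\E^i(\sigma_0)$, so $\lozenge p$ holds iff some $\sigma_i$ satisfies $p$, i.e. iff $\tr\big((I-p)\E^i(\sigma_0)\big)=0$ for some $i\geq 0$ (using $\sigma_i\geq 0$ and $I-p\geq 0$). By Definition \ref{mrd} and Lemma \ref{mr-lem}, writing $M$ for the matrix representation of $\E$, the number $b_i:=\tr\big((I-p)\E^i(\sigma_0)\big)$ equals $\langle\Phi|\big((I-p)\otimes I\big)M^i\big(\sigma_0\otimes I\big)|\Phi\rangle$, hence $b_i=\langle y|M^i|x\rangle$ for fixed rational vectors $x,y$ and the rational matrix $M$, i.e. $(b_i)_i$ is a rational linear recurrence sequence. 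Clearing denominators rescales $b_i$ by a nonzero factor and so preserves its zero set, giving a sequence of the form $\tr(\op{x'}{y'}A'^{\,i})$ with integer data as required in Problem \ref{Skolem}; if $A'$ fails to be invertible one first strips the factors $x$ from its characteristic polynomial, which separates off finitely many initial terms (checked directly) and leaves a shifted sequence governed by an invertible companion matrix. A Skolem oracle applied to this sequence then decides $\lozenge p$.

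Conversely, for ``$\lozenge p$ decidable $\Rightarrow$ Skolem decidable'', I would reduce an arbitrary instance of Problem \ref{Skolem} --- an invertible $A\in\mathbb{Z}^{d\times d}$ and integer vectors $\ket u,\ket v$ --- to a $\lozenge p$ question for a single-action automaton. Rescaling $A$ leaves the zero set of $i\mapsto\langle v|A^i|u\rangle$ unchanged, and since $\nm{A}_{\mathrm{op}}^2\le\tr(A^\dag A)=\sum_{ij}A_{ij}^2$ is an integer, I set $c=1+\sum_{ij}A_{ij}^2$ and replace $A$ by the rational contraction $B=A/(2c)$, so that $H:=I-B^\dag B$ is rational positive semidefinite. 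I factor $H=\sum_{j=1}^{d}d_j\op{w_j}{w_j}$ with rational $d_j\geq 0$ and rational $\ket{w_j}$ (a square-root-free Cholesky-type factorization, obtainable by Gaussian elimination), and build a quantum automaton on $\H'=\H\oplus\mathbb{C}\ket g$ with the single super-operator $\E$ having Kraus operators $B\oplus 0$, $\op g g$, and $\sqrt{d_j}\,\op g{w_j}$ for $1\le j\le d$. One checks $\sum_k K_k^\dag K_k=I_{\H'}$, so $\E$ is trace preserving, and $\E(\rho)=B\rho_{\H\H}B^\dag\oplus\big(\tr(H\rho_{\H\H})+\rho_{gg}\big)\op g g$; moreover the matrix representation $M=\sum_k K_k\otimes K_k^{*}$ is rational, because the irrational $\sqrt{d_j}$ only ever occur through $d_j$. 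Starting from $\rho_0=\op u u/\langle u|u\rangle$, the execution stays block diagonal, $\sigma_i=B^i\op u u B^{\dag i}/\langle u|u\rangle\oplus s_i\op g g$ with $s_i\geq 0$, and taking $p=I_{\H'}-\op v v/\langle v|v\rangle$ (a rational projection, hence in $AP$) we get $\supp(\sigma_i)\subseteq p\iff\langle v|\sigma_i|v\rangle=0\iff\langle v|B^i|u\rangle=0\iff\langle v|A^i|u\rangle=0$ (using $\ket v\perp\ket g$ and the rescaling). Thus $\lozenge p$ holds for this automaton iff the Skolem instance has a nonempty zero set, up to the one-step offset in the execution convention, which is absorbed by also testing $\langle v|u\rangle=0$ directly.

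I expect the main obstacle to be this second reduction, namely encoding the arbitrary matrix $A$ as a genuine trace-preserving super-operator whose matrix representation is still rational: the naive Stinespring completion of the contraction $B$ introduces $(I-B^\dag B)^{1/2}$, which is generically irrational, and the point of routing the trace deficiency through several rank-one Kraus operators coming from a square-root-free triangular factorization of $I-B^\dag B$ is precisely that it keeps $M$ rational while leaving the $\H$-block dynamics equal to $\rho\mapsto B\rho B^\dag$ exactly, which is what makes the reachability condition track the recurrence $\langle v|A^i|u\rangle$. The remaining ingredients --- recasting $\lozenge p$ as a linear-recurrence zero problem through the matrix representation, and the bookkeeping for non-invertible matrices and the initial-state convention --- are routine.
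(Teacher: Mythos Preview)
Your proposal is correct and follows the same two reductions as the paper: rewriting $\sigma_i\vDash p$ as the vanishing of the linear recurrence $b_i=\langle\Phi|((I-p)\otimes I)M^i(\sigma_0\otimes I)|\Phi\rangle$ via the matrix representation for one direction, and embedding a Skolem instance into a single-action automaton by completing a rescaled contraction of $A$ to a trace-preserving super-operator that dumps the trace deficit into a garbage state for the converse. You supply more technical care than the paper's terse argument --- making the extension $\H\oplus\mathbb{C}\ket g$ with $B\oplus 0$ explicit so the iteration is visibly block-diagonal, keeping the matrix representation rational through a square-root-free factorization of $I-B^\dag B$, and handling the non-invertible case for the Skolem oracle --- but the underlying ideas are identical.
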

\begin{proof}
For any $A$ and $x$ without loss of generality, assume $A^{\dag}A\leq I_{H}$ and $x^{\dag}x\leq 1$, and $\ket{0}\in p$. We design the following trace preserving super-operator
$$
\E(\rho)=A\rho A^{\dag}+(1-\tr A^{\dag}A\rho)\op{0}{0}.
$$
$\E^n(xx^{\dag})\vDash p$ if and only if $A^n x\in p$.

On the other hand, if we can solve the Skolem problem, we can verify $\lozenge p$ for $|Act|=1$ as follows.
Let $q=I-p$ where $p\in AP$ is regarded as the projection onto subspace $p$.
\begin{eqnarray*}
\E^n(\rho_0)\vDash p &\Leftrightarrow& \tr[\E^n(\sigma_0)p^{\perp}]=0 \\
&\Leftrightarrow& \langle \Phi|(q\otimes I)M^n(\sigma_0\otimes I)|\Phi\rangle=0\\
&\Leftrightarrow& M^n(\sigma_0\otimes I)|\Phi\rangle \in \{\ket{v}| \ket{v}\perp (q\otimes I)\ket{\Phi}\},
\end{eqnarray*}
where $M$ is the matrix representation of $\E$ and $\ket{\Phi}=\sum_{i=1}^d\ket{ii}$.
\end{proof}
Concerning $\tilde{\lozenge} p$, we have a similar result as $\lozenge p$.
\begin{fact}
For $|Act|=1$, $\tilde{\lozenge} p$ is decidable for $p\in AP$ if and only if the Skolem problem \ref{Skolem} is decidable.
\end{fact}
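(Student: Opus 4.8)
The plan is to follow the preceding Fact on $\lozenge p$, reducing the whole question to the behaviour of one nonnegative linear recurrence sequence. Write $M$ for the matrix representation of $\E$, put $q=I-p$ and $|\Phi\rangle=\sum_i|ii\rangle$. Unravelling the semantics of $\tilde{\lozenge}$ and applying Lemma \ref{mr-lem}, an admissible sequence $w=\sigma_0\sigma_1\cdots$ satisfies $\tilde{\lozenge}p$ exactly when $\inf_{n\ge 0}b_n=0$, where $b_n:=\tr[\E^n(\sigma_0)q]=\langle\Phi|(q\otimes I)M^n(\sigma_0\otimes I)|\Phi\rangle\ge 0$ is a linear recurrence sequence with algebraic characteristic roots. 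So everything reduces to deciding whether this distinguished nonnegative recurrence comes arbitrarily close to $0$.

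For ``Skolem decidable $\Rightarrow$ $\tilde{\lozenge}p$ decidable'' I would decompose $b_n$ through the Jordan form $M=SJS^{-1}$. By Lemma \ref{tech0} every eigenvalue $\lambda_s$ has $|\lambda_s|\le 1$ and every Jordan block with $|\lambda_s|=1$ has size one, so $b_n=h(n)+e(n)$ with $h(n)=\sum_{|\lambda_k|=1}c_k\lambda_k^n$ (no polynomial factors) and $e(n)\to 0$, geometrically at a rate that is effectively bounded from the algebraic data. Hence $\inf_n b_n=0$ iff either (i) $b_n=0$ for some $n$, or (ii) $\liminf_n h(n)=0$. Disjunct (i) is precisely $\lozenge p$, decidable once the Skolem problem is, by the preceding Fact; disjunct (ii) is decidable outright, since by Theorem \ref{K2} one computes a basis of the lattice of multiplicative relations among the unit-modulus algebraic numbers $\{\lambda_k\}$, so (Kronecker's Theorem \ref{K1}) the orbit closure $\overline{\{(\lambda_k^n)_k:n\in\mathbb{N}\}}$ is a computable subset of the torus (a finite union of cosets of a subtorus), and $\liminf_n b_n=\liminf_n h(n)$ equals the minimum over this compact set of the real algebraic function $(\omega_k)_k\mapsto\sum_k c_k\omega_k$, an algebraic number to be tested against $0$.

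For the converse I would reduce the Skolem problem to deciding $\tilde{\lozenge}$, reusing the encoding from the preceding Fact. Given an invertible integer matrix $A$ and integer vectors $\ket u,\ket v$, let $p$ be the hyperplane orthogonal to $\ket v$ (adjoining a basis vector $\ket 0\in p$ if needed), $x=\ket u/\nm{\ket u}$, $\tilde A=A/\nm{A}$, and $\E(\rho)=\tilde A\rho\tilde A^{\dag}+(1-\tr(\tilde A^{\dag}\tilde A\rho))\op{0}{0}$, a trace-preserving super-operator; then $b_n=|\langle v|A^n|u\rangle|^2/(\nm{\ket u}^2\nm{\ket v}^2\nm{A}^{2n})$, which is $0$ exactly when $\langle v|A^n|u\rangle=0$, so $\lozenge p$ for this instance is the given Skolem instance. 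Computing $\liminf_n b_n$ as in the first direction: whenever $\liminf_n b_n>0$ one has $\inf_n b_n=\min_{0\le n\le N_0}b_n$ for an effectively computable $N_0$ beyond which $b_n$ is bounded away from $0$, so $\tilde{\lozenge}p\Leftrightarrow\lozenge p$ and the $\tilde{\lozenge}$-decision procedure settles the Skolem instance. The residual case is $\liminf_n b_n=0$, to be handled by peeling off---via the Skolem--Mahler--Lech Theorem \ref{SML}, residue class by residue class---the progressions on which the recurrence $\langle v|A^n|u\rangle$ vanishes identically, exploiting its integrality, and re-running the argument on what remains.

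I expect this residual case, $\liminf_n b_n=0$, to be the main obstacle: one must argue that after removing the vanishing progressions the encoding can be arranged so that $\tilde{\lozenge}p\Leftrightarrow\lozenge p$---equivalently, supply an effective positive lower bound on the non-decaying part $h(n)$ of $b_n$---so as to exclude the pathology in which $b_n\to 0$ along a subsequence without ever attaining $0$, the very phenomenon of Example \ref{example1}. By contrast, the forward direction is essentially the Jordan decomposition plus the Kronecker/GE93 machinery already set up in the preliminaries.
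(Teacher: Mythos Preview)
Your forward direction is exactly the paper's argument: write $b_n=\tr[q\,\E^n(\sigma_0)]$ as $h(n)+e(n)$ via the Jordan form and Lemma~\ref{tech0}, so that $\tilde{\lozenge}p$ holds iff either $\lozenge p$ holds or $0$ is a limit point of $h(n)$; then show the second disjunct is decidable by computing the orbit closure of the unit-modulus eigenvalues (Theorems~\ref{K1}--\ref{K2}) and testing whether $0$ is attained. The only cosmetic difference is that the paper explicitly invokes Tarski's theorem for this last step, whereas you phrase it as computing the minimum of a real algebraic function over the closure and comparing an algebraic number to zero; these are the same move.

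For the converse you actually go further than the paper, whose written proof stops after establishing that Case~2 (``$0$ is a limit point'') is decidable and does not spell out a reduction from Skolem to the $\tilde{\lozenge}$-problem. Your reduction is fine when $\liminf_n b_n>0$, but the residual case you flag is a genuine gap and the Skolem--Mahler--Lech peeling you propose does not close it. One can effectively list the residue classes on which the integer recurrence $a_n=\langle v|A^n|u\rangle$ vanishes identically (Theorem~\ref{SML} bounds the period, and identical vanishing on a class is a finite check), but on the remaining classes $a_n$ may still vanish on a finite exceptional set, and locating that set is precisely the Skolem problem. Moreover, on those remaining classes one can still have $\liminf_n |a_n|^2/\|A\|^{2n}=0$ with $a_n\neq 0$ for all $n$ (whenever $a_n$ grows sub-maximally along the class), so ``re-running the argument'' need not make progress and $\tilde{\lozenge}p$ need not coincide with $\lozenge p$ even after peeling. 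In short: your forward direction matches the paper and is complete; the converse is not settled by your sketch, and the paper's own text does not supply the missing argument either.
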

\begin{proof}
$\tilde{\lozenge} p$ if and only if one of the following two cases is valid.
\begin{itemize}
\item $\lozenge p$;
\item There exists a sequence of $n_1,n_2,\cdots,n_k$ such that the induced number sequence $a_1,a_2,\cdots,a_k,\cdots$ converges to $1$ where $a_k=\tr[p\E^{n_k}(\sigma_0)]$.
\end{itemize}
Case 2 is decidable via the following procedure.
Let the linear recurrent series $b_n=1-a_n=\tr[(I-p)\E^{n}(\sigma_0)]$.
According to Subsection \ref{MRSO}, we first write it as $b_n=\tr(M^n N)$ where $M=\sum_j E_j\otimes E_j^{*}$ and $N$ is determined by $(I-p)$ and $\sigma_0$.
Let $M=S J(M) S^{-1}$ be the Jordan decomposition of $M$, we have $b_n=\tr[J(M)^nS^{-1}NS]$.
According to Lemma \ref{tech0}, we know that every Jordan block of $M$ has eigenvalue with absolution no more than 1, and the size of the block with absolution 1 eigenvalue is 1.
As we are only interested in the limit points, we only need to care about number series $c_n=\tr[J(M)'^nS^{-1}NS]$, where we delete all the Jordan blocks whose absolution of eigenvalue is smaller than 1. In other words, $J'=J(M)'$ is a diagonal matrix with eigenvalues either 0 or absolution 1.
The problem becomes seeing whether $0$ is a limit point of $c_n$.

Note that these eigenvalues are all algebraic numbers. Without loss of generality, we assume these nonzero eigenvalues $\lambda_1,\cdots,\lambda_r$ lie at the leading principal submatrix.
According to Theorem \ref{K2}, one can compute a basis of lattice
$$
La:=\{(k_1,\cdots,k_r)|\Pi_{j=1}^r \lambda_j^{k_j}=1\}\subset\mathbb{Z}^r,
$$
in polynomial time.
Assume the computed basis is $v_1,v_2,\cdots,v_t$ with $v_j=(v_{j,1},\cdots,v_{j,r})^T\in \mathbb{Z}^r$.
According to Theorem \ref{K1}, the closure of $\{J'^0,J'^1,\cdots\}$, the diagonal elements, is characterized by
$$
\{(e^{i\theta_1},e^{i\theta_2},\cdots,e^{i\theta_{r}})|\Pi_{k=1}^r e^{i\theta_k v_{j,k}}=1~\forall~1\leq j\leq t.\}
$$
Now the problem becomes to determine whether there exists $(e^{i\theta_1},e^{i\theta_2},\cdots,e^{i\theta_{r}})$ lying in the above set which satisfies
$\tr[JS^{-1}NS]=0$ with $J=diag\{e^{i\theta_1},e^{i\theta_2},\cdots,e^{i\theta_{r}},0,\cdots,0\}$.
Let $e^{i\theta_k}=x_k+y_k$, then $\tr[JS^{-1}NS]=0$ is a polynomial equation of $x_k$ and $y_k$ with algebraic coefficients. $\Pi_{k=1}^r e^{i\theta_k v_{j,k}}=1$ can also be rewritten as polynomial equations of $x_k$ and $y_k$ with algebraic coefficients. The rest of the equations are $x_k^2+y_k^2=1$ for $1\leq k\leq t$.
According to Tarski's theorem on polynomial equations with algebraic coefficients, it is decidable to verify whether $0$ is a limit point of $c_n$.
\end{proof}
Note that Case 2 of the above proof actually shows
\begin{fact} \label{tilde1}
For $|Act|=1$ and $p\in AP$, $\square \tilde{\lozenge} p$ is decidable.
\end{fact}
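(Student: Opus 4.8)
The plan is to observe that, when $|Act|=1$, the formula $\square\tilde{\lozenge}p$ collapses to exactly the condition analysed as ``Case 2'' in the proof of the preceding Fact, and is therefore decidable by the very same Kronecker--Tarski argument. First I would unwind the semantics. Since $|Act|=1$ the automaton is deterministic and admits the single admissible run $w=\sigma_0\sigma_1\cdots$ with $\sigma_n=\E^n(\sigma_0)$, so no quantification over paths is needed: $w\vDash\square\tilde{\lozenge}p$ holds iff $w^i\vDash\tilde{\lozenge}p$ for every $i\geq 0$, i.e. iff for every $i$ and every $\delta>0$ there is $j\geq i$ with $\tr[p\,\sigma_j]>1-\delta$. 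Writing $a_n:=\tr[p\,\sigma_n]$ and using that $\E$ is trace preserving and $p$ a projection (so $0\le a_n\le\tr(\sigma_n)=1$), this says precisely that $\limsup_n a_n=1$, equivalently that $1$ is a subsequential limit of $(a_n)_n$. In contrast with $\tilde{\lozenge}p$ itself, which is $\lozenge p$ (Skolem-hard) \emph{or} ``$1$ is a subsequential limit of $(a_n)$'', the formula $\square\tilde{\lozenge}p$ is the second disjunct \emph{alone}: a single finite-time hit $\sigma_N\vDash p$ does not entail $\square\tilde{\lozenge}p$ unless it recurs infinitely often.

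It then remains to re-run the decision procedure for that condition. Put $b_n:=1-a_n=\tr[(I-p)\E^n(\sigma_0)]$; by Lemma~\ref{mr-lem} and the matrix representation of super-operators (Section~\ref{MRSO}) one has $b_n=\tr(M^nN)$ with $M=\sum_j E_j\otimes E_j^{*}$ the matrix representation of $\E$ and $N$ determined by $I-p$ and $\sigma_0$. Take a Jordan decomposition $M=S\,J(M)\,S^{-1}$ (polynomial-time computable, \cite{CAI94}). By Lemma~\ref{tech0} every eigenvalue of $M$ has modulus at most $1$ and those of modulus $1$ lie in $1\times1$ blocks; the blocks with eigenvalue of modulus strictly less than $1$ contribute terms tending to $0$, so they may be discarded without changing the set of limit points, leaving a diagonal part $J'=\mathrm{diag}(\lambda_1,\dots,\lambda_r,0,\dots,0)$ with $|\lambda_k|=1$. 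Thus $1$ is a limit point of $(a_n)$ iff $0$ is a limit point of $c_n:=\tr(J'^{\,n}\,S^{-1}NS)$.

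Finally I would reduce ``$0$ is a limit point of $(c_n)$'' to a first-order sentence over the reals. Using Theorem~\ref{K2} compute a basis $v_1,\dots,v_t$ of the lattice $La=\{(k_1,\dots,k_r):\prod_j\lambda_j^{k_j}=1\}$, and by Theorem~\ref{K1} the closure of $\{(\lambda_1^n,\dots,\lambda_r^n):n\in\mathbb{N}\}$ equals $\{(e^{i\theta_1},\dots,e^{i\theta_r}):\prod_k e^{i\theta_k v_{j,k}}=1,\ 1\le j\le t\}$. Setting $e^{i\theta_k}=x_k+i y_k$, the desired property becomes: there exist reals $x_k,y_k$ with $x_k^2+y_k^2=1$ for all $k$, satisfying the polynomial (algebraic-coefficient) equations encoding $\prod_k e^{i\theta_k v_{j,k}}=1$, and such that $\tr(\mathrm{diag}(x_1+iy_1,\dots,x_r+iy_r,0,\dots,0)\,S^{-1}NS)=0$. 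This is a sentence in the first-order theory of real-closed fields with algebraic constants, decidable by Tarski's theorem; hence $\square\tilde{\lozenge}p$ is decidable.

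The main obstacle is the first step, the semantic reduction: one must check carefully that $\square\tilde{\lozenge}p$ is genuinely equivalent to ``$1$ is a subsequential limit of $\tr[p\E^n(\sigma_0)]$'', and in particular that the Skolem-hard disjunct $\lozenge p$ disappears here (it does, because $\square$ ranges over all suffixes, and a suffix started after the last finite-time hit no longer witnesses $\tilde{\lozenge}p$ unless the hit recurs). After that, everything is the Kronecker--Tarski machinery already assembled above, applied essentially verbatim; the only bookkeeping point is to control the vanishing Jordan blocks uniformly so that the limit-point set of $(c_n)$ coincides with that of $(b_n)$, which is routine given Lemma~\ref{tech0}.
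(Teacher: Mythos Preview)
Your proposal is correct and is exactly the paper's own argument: the paper proves this fact by the single remark that ``Case 2 of the above proof actually shows'' it, and your write-up simply unpacks that remark by (i) checking that for $|Act|=1$ the semantics of $\square\tilde{\lozenge}p$ reduces to ``$\limsup_n \tr[p\,\E^n(\sigma_0)]=1$'', which is precisely Case~2, and (ii) re-running the Kronecker--Tarski procedure already given there. There is nothing to add.
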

We can further show that
\begin{fact}\label{tilde2}
For $|Act|=1$ and $p\in AP$, $\square{\lozenge} p$ is decidable.
\end{fact}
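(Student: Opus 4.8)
The plan is to reduce $\square\lozenge p$ to an \emph{infiniteness} question for the zero set of a single linear recurrence sequence, and then to settle that question via the effective period bound in the Skolem--Mahler--Lech theorem (Theorem~\ref{SML}). First I would unwind the semantics. With $|Act|=1$ there is a single super-operator $\E$, and, writing $\sigma_n=\E^n(\sigma_0)$, one has $w\vDash\square\lozenge p$ iff for every $i$ there is some $m\ge i$ with $\sigma_m\vDash p$, i.e.\ iff the set $Z=\{n:\sigma_n\vDash p\}$ is infinite. Since $p^{\perp}=I-p$ is a projection and $\sigma_n\ge 0$, we have $\sigma_n\vDash p$ iff $a_n:=\tr[p^{\perp}\sigma_n]=0$. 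Using the matrix representation $M=\sum_j E_j\otimes E_j^{*}$ of $\E$ (Definition~\ref{mrd}, Lemma~\ref{mr-lem}) one rewrites $a_n=\tr(M^nN)$ for a fixed rational matrix $N$ built from $p^{\perp}$ and $\sigma_0$; by Cayley--Hamilton $(a_n)$ obeys a rational linear recurrence of order at most $d^2$, which after a routine clearing of denominators (and a harmless enlargement of the dimension, not affecting the zero set) we bring into the integer form $a_n=\tr(\op{u}{v}A^n)$ required by Theorem~\ref{SML}. So everything reduces to deciding whether $Z=\{n:a_n=0\}$ is infinite.

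Next I would apply Theorem~\ref{SML}: the set $Z$ is the union of a finite set and finitely many residue classes modulo a single $r$, and, fixing a prime $q\nmid 2\det(A)$, one may take $r\le B$ for an explicitly computable bound $B=B(q,A)$. Hence $Z$ is infinite iff it contains a full arithmetic progression $\{rn+k:n\ge 0\}$ for some $r\le B$ and $0\le k<r$. For a fixed pair $(r,k)$ the subsequence $(a_{rn+k})_n$ is again a rational linear recurrence, of order no larger than that of $(a_n)$, hence at most $d^2$; therefore $a_{rn+k}=0$ for all $n$ iff $a_{rn+k}=0$ for $n=0,1,\dots,d^2-1$, since vanishing on that many consecutive terms propagates through the recurrence forever. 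Enumerating the finitely many pairs $(r,k)$ and running these finite checks decides whether $Z$ is infinite, i.e.\ decides $\square\lozenge p$; all the required data ($M$, $N$, the Jordan/companion forms, the prime $q$, the bound $B$) are computable from the rational inputs (cf.\ \cite{CAI94}), so the procedure is effective.

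The main obstacle --- and the reason this is not routine --- is that $\lozenge$ demands the \emph{exact} containment $\supp(\sigma_n)\subseteq p$, so the Kronecker/Weyl density analysis of the limit points of $a_n$ used for $\tilde{\lozenge}$ (Fact~\ref{tilde1}, via Theorems~\ref{K1}--\ref{K2} and Tarski) does not suffice here: $0$ being a limit point of $(a_n)$ produces no $n$ with $a_n=0$. One genuinely needs the arithmetic structure of the zero set of a linear recurrence. It is worth stressing that this does not collide with the open status of the Skolem problem (Problem~\ref{Skolem}) or of $\lozenge p$ itself: deciding whether $Z$ is \emph{nonempty} is Skolem-hard, whereas deciding whether $Z$ is \emph{infinite} is precisely what the effective bound on the period in Theorem~\ref{SML} makes decidable. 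The one remaining bookkeeping point is to check that passing to a subsequence $(a_{rn+k})_n$ preserves the linear-recurrence property and does not increase the order, which is standard (group the exponential-polynomial representation $a_n=\sum_i p_i(n)\lambda_i^n$ by the values of $\lambda_i^{r}$).
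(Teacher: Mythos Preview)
Your proposal is correct and follows essentially the same route as the paper's proof: reduce $\square\lozenge p$ to infiniteness of the zero set of the linear recurrence $a_n=\tr[(I-p)\E^{n}(\sigma_0)]$, scale to an integer recurrence, invoke the effective period bound from Theorem~\ref{SML}, and then test each residue class by checking the first $d^2$ (the paper uses $d^2{+}1$) terms of the subsampled recurrence. Your write-up is in fact more careful than the paper's on two points --- the explanation of why infiniteness (unlike nonemptiness) evades the Skolem obstruction, and the explicit remark that passing to a subsequence preserves the recurrence order --- but the underlying argument is the same.
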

\begin{proof}
$\square{\lozenge} p$ if and only if there exist infinite $n$ such that linear recurrent $b_n=\tr[(I-p)\E^{n}(\rho_0)]=0$. We can always find an integer $s$ such that
$c_n=s^nb_n=\tr[A^nM]$ for integer matrices $A$ and rank 1 matrix $M$. $b_n=0$ if and only if $c_n=0$.
If this is true, by Theorem \ref{SML}, we can bound the period $r\leq p^{d^4}$ by choosing $p\nmid 2\det{A}$.
The rest is to verify whether $c_{vr+u}=0$ is valid for fixed $u\leq r$ and all $v$. This can be done as for fixed $r$ and $u$, $d_v=c_{vr+u}$ is still a linear recurrent series of degree $d^2$. $d_v\equiv 0$ if and only if the initial $d^2+1$ element is $0$.
\end{proof}

To show the decidability of $\lozenge\square \phi$ with $\phi=\vee_{i=1}^t p_i$ and $p_i\in AP$ for $|Act|>1$, the following lemmas are needed.

\begin{lem}\label{invariant}
For finite union of subspaces $r\subseteq\H$, we can construct $x\subseteq r$ in finite steps, such that
\begin{itemize}
\item $\vee_{\alpha} \E_{\alpha}(x)=x$;
\item For any $x'\subseteq r$ such that $\vee_{\alpha} \E_{\alpha}(x')=x'$, we always have $x'\subseteq x$.
\end{itemize}
Moreover, such $x$ is also a finite union of subspaces, and $x$ is called the maximal invariant of $r$.
\end{lem}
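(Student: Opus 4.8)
The plan is to obtain $x$ as the limit of two nested descending iterations, each terminating by Lemma \ref{uss}. For a finite union of subspaces $y$, write $F(y):=\bigvee_{\alpha}\E_{\alpha}(y)$ and $F^{-1}(y):=\bigwedge_{\alpha}\E_{\alpha}^{-1}(y)$, where $\E_{\alpha}(\cdot)$ and $\E_{\alpha}^{-1}(\cdot)$ are the subspace-level maps of Fact \ref{inverse}. By that fact, together with distributivity of $\wedge$ over $\vee$ and the elementary fact that a subspace of $\H$ contained in a finite union of subspaces must lie in one of them, both $F$ and $F^{-1}$ carry a finite union of subspaces to a finite union of subspaces, are monotone under set inclusion, and satisfy $F(y)\subseteq z\iff y\subseteq F^{-1}(z)$; in particular $F(y)\subseteq y\iff y\subseteq F^{-1}(y)$, and $F(x)=x$ is exactly the invariance condition in the statement.

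First I would extract the largest forward-closed part of $r$: set $X_0=r$ and $X_{k+1}=X_k\wedge F^{-1}(X_k)$. This is a descending chain of finite unions of subspaces, so by Lemma \ref{uss} it is eventually constant, say $X_k=R$ for $k\ge n$; then $R\subseteq F^{-1}(R)$, i.e. $F(R)\subseteq R$, and $R\subseteq r$. Moreover an easy induction shows that $R$ dominates every forward-closed subset of $r$: if $y\subseteq r$ and $F(y)\subseteq y$ then $y\subseteq X_0$, and $y\subseteq X_k$ forces $y\subseteq F^{-1}(y)\subseteq F^{-1}(X_k)$, hence $y\subseteq X_{k+1}$; so $y\subseteq R$. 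Second, I would push $F$ downward inside $R$: set $Y_0=R$ and $Y_{k+1}=F(Y_k)$. Since $F(R)\subseteq R$ and $F$ is monotone, $Y_1\subseteq Y_0$ and inductively $Y_{k+1}\subseteq Y_k$, so this too is a descending chain of finite unions of subspaces and, by Lemma \ref{uss}, stabilizes at some $x:=Y_m$; then $x=F(Y_m)=F(x)$, that is $\bigvee_{\alpha}\E_{\alpha}(x)=x$, with $x\subseteq Y_0\subseteq r$, and every $Y_k$ (hence $x$) is a finite union of subspaces.

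It remains to prove maximality. Let $x'\subseteq r$ with $\bigvee_{\alpha}\E_{\alpha}(x')=x'$. Then $x'$ is forward-closed and contained in $r$, so $x'\subseteq R=Y_0$ by the first step; applying $F$ and using $x'=F(x')$ with monotonicity, $x'=F(x')\subseteq F(Y_0)=Y_1$, and inductively $x'\subseteq Y_k$ for all $k$, whence $x'\subseteq Y_m=x$. Thus $x$ is the maximal invariant of $r$. The main obstacle — and the reason a single descending iteration $X_{k+1}=X_k\wedge F(X_k)$ is not enough, since it only yields $x$ with $x\subseteq F(x)$ rather than equality — is precisely that one must first saturate $r$ to its largest forward-closed part $R$ and only then iterate $F$ down to a genuine fixed point; the delicate points are checking that the second iteration stays inside $R$ and remains descending so that Lemma \ref{uss} applies, and that $F$, $F^{-1}$ and $\wedge$ all preserve the class of finite unions of subspaces, which is exactly what Fact \ref{inverse} supplies.
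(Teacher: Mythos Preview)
Your proof is correct and reaches the same object by a genuinely different route. The paper runs a \emph{single} descending iteration
\[
Z_k \;=\; Z_{k-1}\ \cap\ \Big(\bigcap_{\alpha\in Act}\E_{\alpha}^{-1}(Z_{k-1})\Big)\ \cap\ \Big(\bigvee_{\alpha\in Act}\E_{\alpha}(Z_{k-1})\Big),
\]
terminating by Lemma~\ref{uss} at some $x=Z_n$; the fixed-point equation then gives both $x\subseteq F^{-1}(x)$ (hence $F(x)\subseteq x$) and $x\subseteq F(x)$, so $F(x)=x$, and maximality follows by inductively trapping any invariant $x'$ inside every $Z_k$. Your two-phase construction --- first extract the largest forward-closed $R\subseteq r$ via $X_{k+1}=X_k\wedge F^{-1}(X_k)$, then push $F$ down from $R$ to a genuine fixed point --- separates these two concerns and makes the Galois adjunction $F(y)\subseteq z\iff y\subseteq F^{-1}(z)$ explicit. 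This is conceptually cleaner and explains \emph{why} the paper needs to intersect with $\vee_{\alpha}\E_{\alpha}(Z_{k-1})$ at each step (that term is precisely what forces the inclusion $x\subseteq F(x)$ that your second phase obtains directly). The paper's version is more compact; yours invokes Lemma~\ref{uss} twice but keeps each chain simpler. A small remark: the ``elementary fact that a subspace contained in a finite union of subspaces lies in one of them'' is true over~$\mathbb{C}$ but is not actually needed for the closure properties you use --- distributivity of set intersection over union, together with Fact~\ref{inverse}, already shows that $F$, $F^{-1}$ and $\wedge$ preserve finite unions of subspaces.
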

Remark: All $x\subseteq \H$ can always be written as a union (a possibly infinite union or even a continuous union), for this sense, $\vee$ is still well defined as a union.
\begin{proof}
Our construction of $x$ is as follows:
\begin{eqnarray*}
&&Z_0=r,\\
&&Z_1=Z_0\bigcap_{\alpha\in Act} \E_{\alpha}^{-1}(Z_0)\bigcap[\vee_{\alpha\in Act}\E_{\alpha}(Z_0)],\\
&&\cdots,\\
&&Z_k=Z_{k-1}\bigcap_{\alpha\in Act} \E_{\alpha}^{-1}(Z_{k-1})\bigcap[\vee_{\alpha\in Act}\E_{\alpha}(Z_{k-1})],\\
&&\cdots
\end{eqnarray*}
First note that each $Z_i$ is a finite union of subspaces and forms a decreasing chain
$$p=Z_0\supseteq Z_1\supseteq \cdots\supseteq Z_k\supseteq\cdots$$
According to Lemma \ref{uss}, there exists $n$ such that $Z_n=Z_m$ for any $m\geq n$.
Let $x=Z_n$, we have
\begin{eqnarray*}
&&x=x\bigcap_{\alpha\in Act} \E_{\alpha}^{-1}(x)\bigcap[\vee_{\alpha\in Act}\E_{\alpha}(x)]\\
&\Rightarrow & x\subseteq\E_{\alpha}^{-1}(x),~x\subseteq \vee_{\alpha\in Act}\E_{\alpha}(x)\\
&\Rightarrow & \E_{\alpha}(x)\subseteq x,~~x\subseteq \vee_{\alpha\in Act}\E_{\alpha}(x)\\
&\Rightarrow & \vee_{\alpha\in Act}\E_{\alpha}(x)\subseteq x,~x\subseteq \vee_{\alpha\in Act}\E_{\alpha}(x)\\
&\Rightarrow & x=\vee_{\alpha\in Act}\E_{\alpha}(x).
\end{eqnarray*}
Moreover, $x$ is a finite union of subspaces.

Assume $\vee_{\alpha\in Act}\E_{\alpha}(x')= x'\subseteq r=Z_0$, we have
\begin{eqnarray*}
&&\E_{\alpha}(x')\subset Z_0,~x'\subseteq Z_0,~\\
&\Rightarrow& x'\subseteq \E_{\alpha}^{-1}(Z_0), \E_{\alpha}(x')\subseteq \E_{\alpha}(Z_0)\\
&\Rightarrow& x'\subseteq \bigcap_{\alpha\in Act}\E_{\alpha}^{-1}(Z_0),~x'=\vee_{\alpha\in Act}\E_{\alpha}(x')\subseteq \vee_{\alpha\in Act}\E_{\alpha}(Z_0)\\
&\Rightarrow & x'\in Z_1 ~~\Rightarrow~~\cdots~~\Rightarrow~~x'\subseteq Z_k~~\Rightarrow~ x'\subseteq x.
\end{eqnarray*}
This completes the proof.
\end{proof}

\begin{lem}\label{extension}
For finite union of subspaces $x\subseteq \H$ with $\vee_{\alpha\in Act}\E_{\alpha}(x)=x$, we can construct $y$ in finite steps such that
\begin{itemize}
\item $x\subseteq y$ and $y=\bigcap_{\alpha\in Act} \E_{\alpha}^{-1}(y)$.
\item For any $y'\subseteq \H$ such that $y'=\bigcap_{\alpha\in Act} \E_{\alpha}^{-1}(y')$ and $x\subseteq y'$ we always have $y\subseteq y'$.
\end{itemize}
Moreover, such $y$ is also a finite union of subspaces. We call $y$ the maximal extension of $x$.
\end{lem}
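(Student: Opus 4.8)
The plan is to mirror the construction of the maximal invariant from Lemma~\ref{invariant}, but driving an \emph{ascending} chain with the inverse maps $\E_\alpha^{-1}$ rather than a descending one. Set $W_0=x$ and, for $k\ge 0$, $W_{k+1}=\bigcap_{\alpha\in Act}\E_\alpha^{-1}(W_k)$. Intuitively $W_k$ is the set of states all of whose length-$k$ trajectories have already landed in $x$, so $\bigcup_k W_k$ is exactly the set of states that, uniformly over every branch, eventually stay inside $x$; this is the $y$ we want, and the hypothesis $\vee_\alpha\E_\alpha(x)=x$ is what makes it an \emph{extension} of $x$ (a state already in $x$ never leaves, so $x\subseteq W_k$ for every $k$).

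I would then carry out the following steps, in order. (1) \emph{Structure.} By Fact~\ref{inverse} each $\E_\alpha^{-1}$ maps a subspace to a subspace and satisfies $\E_\alpha^{-1}(\vee_i p_i)=\vee_i\E_\alpha^{-1}(p_i)$; since an intersection of two finite unions of subspaces is again a finite union of subspaces, an induction shows every $W_k$ is a finite union of subspaces. (2) \emph{Monotonicity of the chain.} From $\vee_\alpha\E_\alpha(x)=x$ we get $\E_\alpha(x)\subseteq x$, i.e. $x\subseteq\E_\alpha^{-1}(x)$, so $W_0\subseteq W_1$; and since each $\E_\alpha^{-1}$ is order-preserving (Fact~\ref{inverse}, as $\E_\alpha^{*}$ is monotone and $(\cdot)^{\perp}$ is order-reversing), $W_k\subseteq W_{k+1}$ propagates inductively. (3) \emph{Termination} --- see below. (4) \emph{Fixed point and minimality.} Once $W_n=W_{n+1}$, put $y:=W_n$; then $y=W_{n+1}=\bigcap_\alpha\E_\alpha^{-1}(W_n)=\bigcap_\alpha\E_\alpha^{-1}(y)$ and $x=W_0\subseteq y$. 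If $y'$ satisfies $y'=\bigcap_\alpha\E_\alpha^{-1}(y')$ and $x\subseteq y'$, then $W_0\subseteq y'$ and $W_k\subseteq y'$ implies $W_{k+1}=\bigcap_\alpha\E_\alpha^{-1}(W_k)\subseteq\bigcap_\alpha\E_\alpha^{-1}(y')=y'$, so $y\subseteq y'$; hence $y$ is the maximal extension of $x$.

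The genuine difficulty is Step~(3): unlike in Lemma~\ref{invariant}, I cannot appeal to Lemma~\ref{uss}, since ascending chains of finite unions of subspaces need not stabilize in general. I would control the complexity of the $W_k$ in two stages. First, running the same recursion on the span --- $\tilde W_0=\spa(x)$, $\tilde W_{k+1}=\bigcap_\alpha\E_\alpha^{-1}(\tilde W_k)$ --- produces an ascending chain of \emph{single} subspaces (an intersection of subspaces is a subspace), which stabilizes within $d$ steps by a dimension count; since $W_k\subseteq\tilde W_k$ for all $k$, every $W_k$ lies inside one fixed subspace $V'$. Second, inside $V'$ one argues that the number of irredundant components of $W_k$ stays bounded, so the chain becomes an ascending chain of unions of boundedly-many subspaces, and such chains do stabilize: the tuple of component dimensions, sorted in decreasing order and padded with $-\infty$, strictly increases in the lexicographic order at every strict inclusion and is bounded above by $(d,\dots,d)$. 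The delicate point --- the one I expect to absorb most of the work --- is the component bound when $|Act|>1$, where $\bigcap_\alpha\E_\alpha^{-1}$ can a priori split one component into $|Act|$-many; the case $|Act|=1$ is clean, since there $\E^{-1}$ sends a union of $m$ subspaces to a union of at most $m$ subspaces and the component count is outright non-increasing. If no clean component bound is available, the fallback is to show directly, by the bounded-data/dimension arguments standard for reachability in quantum automata, that the basin $\bigcup_k W_k$ is itself a finite union of subspaces attained after finitely many iterations.
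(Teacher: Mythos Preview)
Your iteration $W_{k+1}=\bigcap_{\alpha}\E_\alpha^{-1}(W_k)$, the monotonicity argument from $\vee_\alpha\E_\alpha(x)=x$, and the fixed-point/minimality verification in Step~(4) are exactly what the paper does; these parts are fine and match the paper's setup of the sequence $Y_k$.

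The gap is precisely where you flag it: Step~(3). Your proposed route --- bound the number of irredundant components of $W_k$, then invoke a lexicographic-dimension argument --- does not go through as stated, because the intersection $\bigcap_{\alpha}\E_\alpha^{-1}(\cdot)$ can genuinely multiply the component count by $|Act|$ at every step, and neither the span-stabilisation $\tilde W_k$ nor the vague ``bounded-data'' fallback controls that. So as written, termination is not established for $|Act|>1$.

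The paper sidesteps this by \emph{not} tracking the intersections $W_k$ at all. Instead it grows an $|Act|$-ary tree rooted at $x$: each child of a node $n$ is a single preimage $\E_\alpha^{-1}(n)$ for one $\alpha$, added only when $\E_\alpha^{-1}(n)\supsetneq n$ (otherwise $n$ is marked terminal). The point is that a \emph{single} $\E_\alpha^{-1}$ does not increase the number of subspaces in a finite union (Fact~\ref{inverse} gives $\E_\alpha^{-1}(\vee_i p_i)=\vee_i\E_\alpha^{-1}(p_i)$), so every node in the tree is a union of at most $t$ subspaces, where $x=\vee_{i=1}^t x_i$. Along any branch we therefore have a strictly increasing chain of unions of at most $t$ subspaces, and your own lexicographic/dimension count now applies to bound the branch length by $td$. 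Hence the tree is finite, and $y$ is recovered as the intersection of the terminal (``star'') nodes. In short: the missing idea is to postpone the intersection over $Act$ until \emph{after} termination is secured branch-by-branch, rather than trying to bound components of the already-intersected $W_k$.
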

\begin{proof} Observe that $y$ is characterized as the limit by the following,
\begin{eqnarray*}
Y_0&=&x,\\
Y_1&=&\bigcap_{\alpha\in Act} \E_{\alpha}^{-1}(Y_0),\\
\cdots\\
Y_k&=&\bigcap_{\alpha\in Act} \E_{\alpha}^{-1}(Y_{k-1}),\\
\cdots
\end{eqnarray*}
Since $\E_{\alpha}(x)\subseteq x=Y_0$, then $x\subseteq \E_{\alpha}^{-1}(x)$. Thus, $x=Y_0\subseteq \bigcap_{\alpha\in Act} \E_{\alpha}^{-1}(Y_0)=Y_1$. By repeating this argument, we know that $Y_0\subseteq Y_1\subseteq \cdots\subseteq Y_k\subseteq$. $Y_k$ characterize the set of input states such that $x$ is satisfied in the $k$-th steps.
Unfortunately, with the general increasing of the chain of finite union of subspaces, there is no guarantee of termination.

To show the termination of the series, we first note that for each $\alpha$, and $z=\vee_{i=1}^s z_i$ with $z_i\in AP$, $\E_{\alpha}^{-1}(z)$ is still a union of no more than $s$ subspaces. In other words, the number of unions is not increasing in computing the pre-image. Now we can construct an $|Act|$-ary tree $T$ whose nodes are all the union of at most $s$ subspaces.
\begin{itemize}
\item Let $x=\vee_{i=1}^t x_i$ be the root of $T$ where $x_i\in AP$.
\item At the first step, for each $\alpha$, we generate $\E_{\alpha}^{-1}(x)$. If $\E_{\alpha}^{-1}(x)\supsetneq x$, we add $\E_{\alpha}^{-1}(x)$ as a child of $x$. Otherwise $\E_{\alpha}^{-1}(x)=x$, we mark $x$ as a "star" node. Now we have a tree of height at most $2$.
 \item   $\cdots$
 \item At the $k$-th step, for each $\alpha$ and each leaf node $n_d$ of the current tree, we generate $\E_{\alpha}^{-1}(n_d)$. If $\E_{\alpha}^{-1}(n_d)\supsetneq n_d$, we add $\E_{\alpha}^{-1}(n_d)$ as a child of $n_d$. Otherwise $\E_{\alpha}^{-1}(n_d)=n_d$, we mark $n_d$ as a "star" node.
 \item $\cdots$
 \item Stop if current leaves are all "star" nodes.
\end{itemize}
One can verify that this tree is a strictly increasing tree, in the sense that each node (union of at most $s$ elements) strictly contains its parent node.
We can easily verify that the height of this tree is at most $td$ according to the following fact via simply counting the dimension.
For each strictly increasing chain of $Z_0\subsetneq Z_1\subsetneq \cdots\subsetneq Z_k\subsetneq\cdots$ where each $Z_i$ is a union of at most $t$ subspaces, the chain terminates in at most $td$ steps.
Now we can claim that
$$
y=\bigcap_{n_d \mathrm{~is~a~"Star"~node.}} n_d.
$$
\end{proof}
We provide the characterization of ${\lozenge}\square \phi$,
\begin{lem}\label{characterization1}
For $\phi=\vee_{i=1}^t p_i$ with $p_i\in AP$, let $x$ be the maximal invariant of $\phi$ defined in Lemma \ref{invariant}. ${\lozenge}\square \phi$ if and only if $\sigma_0\vDash \psi$ where $\psi$ is the maximal extension of $x$ defined in Lemma \ref{extension}.
\end{lem}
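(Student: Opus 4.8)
The plan is to identify $\psi$ with the set $V$ of states from which \emph{every} admissible run eventually stays in $\phi$ forever; since, by definition of validity, $\lozenge\square\phi$ holds for $\pi$ exactly when $\sigma_0\in V$, the claim reduces to $V=\psi$.

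The inclusion $\psi\subseteq V$ is the routine direction. By Lemma~\ref{extension} the increasing chain $Y_0=x$, $Y_k=\bigcap_{\alpha\in Act}\E_\alpha^{-1}(Y_{k-1})$ stabilises, say $\psi=Y_N$ for some $N$, with $\psi=\bigcap_\alpha\E_\alpha^{-1}(\psi)$. If $\sigma_0\vDash Y_N$, then unfolding the definition of $Y_k$ one step at a time (using $Y_k\subseteq\E_\alpha^{-1}(Y_{k-1})$ for each $\alpha$) shows that the state of any admissible run after $N$ steps satisfies $Y_0=x$; since $\vee_\alpha\E_\alpha(x)=x$ forces $\E_\alpha(x)\subseteq x$ and a support inside a finite union of subspaces lies inside one of them, the run then stays in $x\subseteq\phi$ forever. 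Hence $\square\phi$ holds from step $N$, i.e.\ $\lozenge\square\phi$, so $\sigma_0\in V$.

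For $V\subseteq\psi$ I would show that validity of $\lozenge\square\phi$ forces every admissible run to be eventually trapped inside $x$. Granting this, the conclusion follows: the set of states whose every length-$k$ continuation already satisfies $x$ is exactly $Y_k$; once a run enters $x$ it stays there; and since the tree of finite prefixes not yet in $x$ is finitely branching with no infinite branch, K\"onig's lemma gives a uniform bound $N$ with $\sigma_0\vDash Y_N\subseteq\psi$. To obtain ``eventually trapped in $x$'' I would first isolate the strong invariant region $\T=\{\sigma:\text{every admissible run from }\sigma\text{ stays in }\phi\}$, the stabilised limit (Lemma~\ref{uss}) of the decreasing chain $\T_0=\phi$, $\T_k=\T_{k-1}\cap\bigcap_\alpha\E_\alpha^{-1}(\T_{k-1})$ — a finite union of subspaces, contained in $\phi$, and forward closed ($\E_\alpha(\T)\subseteq\T$). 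Then the decreasing chain $W_0=\T$, $W_k=\vee_\alpha\E_\alpha(W_{k-1})$ of states reachable from $\T$ in $k$ steps stabilises (Lemma~\ref{uss}, Fact~\ref{inverse}) to a finite union $W_\infty\subseteq\T\subseteq\phi$ with $\vee_\alpha\E_\alpha(W_\infty)=W_\infty$, so $W_\infty\subseteq x$ by the maximality of $x$ (Lemma~\ref{invariant}); thus any run that ever enters $\T$ is trapped in $x$ within a bounded number of further steps, and the backward closure of $\T$ under $\bigcap_\alpha\E_\alpha^{-1}$ (a stabilising chain, treated exactly as in the proof of Lemma~\ref{extension}) is contained in $\psi$, because $\psi\supseteq\T$ and $\psi=\bigcap_\alpha\E_\alpha^{-1}(\psi)$.

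The missing link — and the step I expect to be the main obstacle — is that validity of $\lozenge\square\phi$ implies $\sigma_0$ lies in that backward closure, i.e.\ that every admissible run actually reaches $\T$. Arguing the contrapositive, from a $\sigma_0$ outside the closure one can keep stepping to states outside it (like $\psi$, the closure is forward closed by virtue of being a $\bigcap_\alpha\E_\alpha^{-1}$-fixpoint), but to contradict validity one must make $\phi$ fail \emph{infinitely often} along a single run, which is a co-B\"uchi-type requirement rather than plain reachability. I expect this to need a recurrence/compactness analysis of the sequence of support subspaces along a run: each such subspace sits inside one of the finitely many closed subspaces making up $\phi$, and one must show that a run eventually always in $\phi$ must in fact eventually enter and remain in the maximal invariant $x$ — again via Lemma~\ref{uss} together with the maximality in Lemma~\ref{invariant}. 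With this in hand, $V\subseteq\psi$, and combined with $\psi\subseteq V$ this gives $V=\psi$, which is the assertion.
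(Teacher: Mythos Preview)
Your $\psi\subseteq V$ direction is correct and essentially matches the paper's.

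For $V\subseteq\psi$ the paper takes a shorter route that dissolves your ``missing link''. The observation you do not exploit is that $V$ \emph{itself} is forward closed: if every admissible run from $\sigma$ satisfies $\lozenge\square\phi$, then so does every run from any one-step successor $\E_\alpha(\sigma)$, since the latter are tails of runs from $\sigma$. With $\E_\alpha(V)\subseteq V$ for all $\alpha$, the paper forms the decreasing chain $Z_0=V$, $Z_k=\vee_{\alpha}\E_\alpha(Z_{k-1})$, passes to the limit $y=\bigcap_k Z_k$, and argues that $\vee_\alpha\E_\alpha(y)=y$ and $y\subseteq\phi$ (the latter from $y\subseteq V$ together with invariance of $y$). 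Hence $y\subseteq x$ by the maximality clause of Lemma~\ref{invariant}, and $V$ is then contained in the backward closure of $x$, which is $\psi$.

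Your detour through $\T$ is precisely what manufactures the gap. A run from $\sigma_0\in V$ that is eventually always in $\phi$ need not ever pass through a state from which \emph{every} continuation stays in $\phi$: the index $N$ after which $\phi$ holds may depend on the entire infinite future, and switching to a different continuation at time $N$ can push the entry into $\phi$ arbitrarily far out. So ``every run from $V$ reaches $\T$'' is not evidently true, your K\"onig argument does not bite (the tree of prefixes outside $\T$ can have infinite branches), and the ``recurrence/compactness analysis'' you gesture at is not a proof. Replacing $\T$ by $V$ in your forward-iteration step --- which is legitimate once you note $V$ is forward closed --- eliminates the need for that link altogether, and the remainder of your outline (forward iterates stabilising inside $x$, then backward closure into $\psi$) then goes through as in the paper.
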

\begin{proof}
Let $\eta$ denote the set that ${\lozenge}\square \phi$ if and only if $\sigma_0\vDash \eta$. For any $\alpha$, if $\sigma_0\vDash \E_{\alpha}(\eta)$, ${\lozenge}\square \phi$ is valid. This implies $\E_{\alpha}(\eta)\subseteq \eta$, therefore, $\vee_{\alpha\in Act}\E_{\alpha}(\eta)\subseteq \eta$.
We define the sequence $Z_0=\eta$, $Z_1=\vee_{\alpha\in Act}\E_{\alpha}(Z_0)$, $\cdots$, $Z_k=\vee_{\alpha\in Act}\E_{\alpha}(Z_{k-1})$,$\cdots$. We can verify $Z_0\supseteq Z_1\supseteq \cdots\supseteq Z_k\supseteq \cdots$. Let $y=\bigcap_{k=0}^{\infty} Z_k$, we have $y=\vee_{\alpha\in Act}\E_{\alpha}(y)$. According to ${\lozenge}\square \phi$, we know that $y\subseteq \phi$. This means $y\subseteq x$ according to the fact $x$ is the maximal invariant of $\phi$ in Lemma \ref{invariant}.
By defining an increasing sequence $V_0=x$, $V_1=\vee_{\alpha\in Act}\E_{\alpha}^{-1}(V_0)$, $\cdots$, $V_k=\vee_{\alpha\in Act}\E_{\alpha}^{-1}(V_{k-1})$,$\cdots$ we have
$\eta\subseteq \vee_{k=0}^{\infty} V_k=\psi$.

On the other hand, if $\sigma_0\vDash \psi$, for each $\omega$ path $\alpha_1\alpha_2\cdots$, let state $\sigma_{k}=\E_{\alpha_k}(\sigma_{k-1})$.
According to the proof of Lemma \ref{extension}, there exists $k_0$ such that $\rho_k\vDash x$ for all $k>k_0$.
Therefore ${\lozenge}\square \phi$ is valid.
\end{proof}
The following theorem naturally follows from Lemma \ref{invariant}, Lemma \ref{extension} and Lemma \ref{characterization1}.
\begin{theorem}
${\lozenge}\square \phi$ is decidable for $\phi=\vee_{i=1}^t p_i$ and $|Act|>1$ with $p_i\in AP$.
\end{theorem}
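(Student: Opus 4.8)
The plan is to read off a decision procedure directly from the three preceding lemmas and check that every step is effective. First I would invoke Lemma \ref{invariant} to compute the maximal invariant $x$ of $\phi=\vee_{i=1}^t p_i$. Concretely, one runs the decreasing iteration $Z_0=\phi$, $Z_{k}=Z_{k-1}\bigcap_{\alpha\in Act}\E_{\alpha}^{-1}(Z_{k-1})\bigcap[\vee_{\alpha\in Act}\E_{\alpha}(Z_{k-1})]$. By Fact \ref{inverse}, each of $\E_{\alpha}^{-1}(\cdot)$ and $\E_{\alpha}(\cdot)$ maps a finite union of subspaces to a finite union of subspaces and is computable from the Kraus operators of $\E_{\alpha}$, so every $Z_k$ is an explicitly computable finite union of subspaces. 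Lemma \ref{uss} guarantees that this decreasing chain stabilizes after finitely many steps, and stabilization is detected by testing $Z_k=Z_{k+1}$, which is decidable for finite unions of subspaces. The stabilized value is $x$.

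Next I would feed $x$ into Lemma \ref{extension} to compute its maximal extension $\psi$. Here the relevant iteration $Y_0=x$, $Y_k=\bigcap_{\alpha\in Act}\E_{\alpha}^{-1}(Y_{k-1})$ is \emph{increasing}, so Lemma \ref{uss} no longer applies directly; instead one uses the tree construction from the proof of Lemma \ref{extension}, whose key points are that $\E_{\alpha}^{-1}$ does not increase the number of subspaces appearing in a union, and that every strictly increasing chain of unions of at most $t$ subspaces has length at most $td$ by a dimension count. This bounds the depth of the $|Act|$-ary tree, so the procedure halts and outputs $\psi=\bigcap_{n_d\ \text{a ``star'' node}} n_d$, again a finite union of subspaces.

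Finally, Lemma \ref{characterization1} asserts that ${\lozenge}\square\phi$ holds for the given system if and only if $\sigma_0\vDash\psi$. Writing $\psi=\vee_{j=1}^r q_j$ with $q_j\in AP$, the relation $\sigma_0\vDash\psi$ holds iff $\supp(\sigma_0)\subseteq q_j$ for some $j$, and each such containment is a routine linear-algebra test (for instance checking $q_j\sigma_0=\sigma_0$, or comparing $\supp(\sigma_0)$ with the subspace $q_j$ via rank computations). Hence the whole question reduces to finitely many decidable subproblems, establishing decidability.

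I expect the only genuine subtlety to be the one already isolated inside Lemma \ref{extension}: termination of the increasing iteration producing the maximal extension, which is precisely why the tree/dimension-counting argument replaces a naive appeal to Lemma \ref{uss}. Granting the three lemmas, the remaining work for this theorem is purely bookkeeping — confirming that each intermediate object stays a computable finite union of subspaces and that the final satisfaction check $\sigma_0\vDash\psi$ is effective.
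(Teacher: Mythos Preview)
Your proposal is correct and matches the paper's own proof, which simply states that the theorem ``naturally follows from Lemma \ref{invariant}, Lemma \ref{extension} and Lemma \ref{characterization1}.'' You have spelled out in more detail the effectiveness of each step (computing $x$, then $\psi$, then testing $\sigma_0\vDash\psi$), but the underlying route is identical.
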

The following lemma is crucial in proving the decidability of $\square\lozenge \phi$.
\begin{lem}\label{characterization2}
For $\phi=\vee_{i=1}^s p_i$ with $p_i\in AP$, we can construct $x\subseteq \H$ in finite steps, such that
\begin{enumerate}
\item $\vee_{\alpha\in Act}\E_{\alpha}(x)= x$.
\item For any simple loop (distinct elements of the loop at different locations), $$x_{j_1}\xrightarrow{\E_{\alpha_1}} x_{j_2}\xrightarrow{\E_{\alpha_2}}\cdots x_{j_{k}}\xrightarrow{\E_{\alpha_k}}x_{j_1}$$ where $x_{j_1},x_{j_2},\cdots,x_{j_{k}}\in AP$, $x_{j_1},x_{j_2},\cdots,x_{j_{k}}\subseteq x$ and $\E_{\alpha_{1}}(x_{j_1})= x_{j_2}$,$\cdots$, $\E_{\alpha_{k}}(x_{j_k})= x_{j_1}$, there exists an element $x_{j_r}$ of the loop and $1\leq i\leq s$ such that $x_{j_r}\subseteq p_i$.
\item For any $x' \subseteq \H$ satisfies the first two conditions, we always have $x'\subseteq x$.
\end{enumerate}
Moreover, such $x$ is also a finite union of subspaces.
\end{lem}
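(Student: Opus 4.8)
The plan is to identify $x$ with the maximum element of the family $\mathcal{C}$ of all finite unions of subspaces satisfying conditions (1) and (2), to prove that this maximum exists abstractly, and then to exhibit it by a decreasing iteration that halts by Lemma \ref{uss}. I would start with the lattice bookkeeping: finite unions of subspaces are closed under pairwise intersection of components and under the monotone maps $\E_\alpha$ and $\E_\alpha^{-1}$, which distribute over $\vee$ by Fact \ref{inverse}; and, because $\mathbb{C}$ is infinite, any subspace contained in a finite union of subspaces already lies inside one of the components. Using the last fact together with invariance one checks that $\mathcal{C}$ is closed under arbitrary unions. For a family $\{x_t\}\subseteq\mathcal{C}$, property (1) for $\bigcup_t x_t$ is an immediate distributivity computation; for property (2), given a simple $\E$-loop $y_1\to\cdots\to y_k\to y_1$ inside $\bigcup_t x_t$, the subspace $y_1$ is finite-dimensional, hence $y_1\subseteq x_{t_1}\vee\cdots\vee x_{t_d}$ for finitely many members, and that finite join again lies in $\mathcal{C}$ (same argument, using the subspace-in-union fact and invariance to see that once the loop touches one member it stays inside it); therefore the whole loop lies in a single member of $\mathcal{C}$ and meets some $p_i$. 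Consequently $x:=\bigcup\mathcal{C}$ is the maximum of $\mathcal{C}$, so conditions (1), (2) and (3) all hold for it, and the remaining content of the lemma is that this $x$ is in fact a \emph{finite} union of subspaces and can be produced in finitely many steps.

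For the construction I would run a decreasing chain $Z_0=\H\supseteq Z_1\supseteq\cdots$ of finite unions of subspaces, maintaining the invariant that $x'\subseteq Z_k$ for every $x'\in\mathcal{C}$ (equivalently $x\subseteq Z_k$). Given $Z_k$, I first replace it by its maximal invariant sub-union $W_k$, obtained by iterating $Z\mapsto Z\cap\bigcap_\alpha\E_\alpha^{-1}(Z)\cap\bigvee_\alpha\E_\alpha(Z)$ to a fixed point exactly as in Lemma \ref{invariant}; this is a decreasing chain, so Lemma \ref{uss} applies. Then I remove from $W_k$ its \emph{recurrent bad part}: the subspaces $q\in AP$ with $q\subseteq W_k$, $q\not\subseteq p_i$ for all $i$, for which some nonempty word $w$ satisfies $\E_w(q)=q$ with every intermediate image lying in $W_k$ and outside every $p_i$; the result of this deletion is $Z_{k+1}\subseteq W_k$. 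The invariant is preserved: an invariant $x'\subseteq Z_k$ satisfies $x'\subseteq\bigvee_\alpha\E_\alpha(Z_k)$ and $x'\subseteq\bigcap_\alpha\E_\alpha^{-1}(Z_k)$, hence $x'\subseteq W_k$, and no subspace $q\subseteq x'$ can be recurrently bad, since a simple loop through it would by (2) for $x'$ meet some $p_i$; therefore $x'\subseteq Z_{k+1}$. By Lemma \ref{uss} the chain stabilizes at some $Z_n$; at the fixed point both the invariance refinement and the deletion are vacuous, so $Z_n$ satisfies (1) and (2), i.e. $Z_n\in\mathcal{C}$, whence $Z_n\subseteq x\subseteq Z_n$. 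Thus $x=Z_n$ is a finite union of subspaces obtained in finitely many steps, and it satisfies (1), (2), (3).

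The genuine difficulty — and the step that needs real work rather than bookkeeping — is making the deletion precise. One must describe an operation that (i) kills every recurrently bad subspace of $W_k$, (ii) lands back inside the class of finite unions of subspaces, (iii) is tight, in the sense of discarding no point that some $x'\in\mathcal{C}$ contains (this tightness is exactly what preserves the invariant above), and (iv) is computable in finitely many steps, despite the fact that $W_k$ typically contains a continuum of subspaces (e.g. an irrational rotation acting on a plane inside $W_k$). The naive ``set-difference'' formulation fails requirement (ii) — the subspaces of $W_k$ avoiding a given bad line already form a continuum. The approach I would take is to present both halves of the ``recurrent bad'' condition as fixed points of monotone operators on finite unions of subspaces: the family of subspaces of $W_k$ reachable from a given finite union by words whose intermediate images stay in $W_k$ and avoid every $p_i$ is an increasing union of restricted images $\E_w(\cdot)$ whose number of components does not proliferate, so it stabilizes by the bounded-component argument of Lemma \ref{extension}; the dual ``reached-from'' family is handled with $\E_\alpha^{-1}$; and a subspace belongs to the recurrent bad part precisely when its forward and backward families overlap suitably. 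One then has to extract from this the right finite union of subspaces $Z_{k+1}\subseteq W_k$ to continue with, and verify the four requirements. Once this deletion operator is pinned down, the outer decreasing chain together with Lemma \ref{uss} closes the proof; everything else is routine lattice manipulation.
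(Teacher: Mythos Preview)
Your outer architecture matches the paper's: start from $\H$, alternate between the maximal-invariant refinement of Lemma~\ref{invariant} and a ``deletion'' step, obtain a decreasing chain of finite unions of subspaces, and invoke Lemma~\ref{uss} for termination while maintaining the invariant $x'\subseteq Z_k$ for every $x'$ satisfying (1) and (2). The maximality clause (3) then falls out of this invariant, so your separate abstract existence argument via $\bigcup\mathcal{C}$ is not needed (and its closure-under-arbitrary-unions step is shaky over uncountable index sets).

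The genuine gap is exactly where you place it, but your proposed resolution is the wrong kind of argument. Your deletion tries to identify the ``recurrently bad'' subspaces of $W_k$ by forward/backward reachability fixed points and then remove them. Even granting that those reachability families stabilize, what you obtain is a description of which \emph{whole subspaces} $q\subseteq W_k$ are bad, not a smaller finite union of subspaces that excludes them. In the typical situation the entire component $x_{j_1}$ of $W_k=\vee_i x_i$ itself satisfies $\E_w(x_{j_1})=x_{j_1}$ for some bad word $w$; your procedure then flags $x_{j_1}$ as bad, but ``$W_k$ minus every bad subspace'' is not a finite union of subspaces, and no amount of lattice manipulation with $\E_\alpha$, $\E_\alpha^{-1}$ will produce one, because the operators you are using cannot distinguish the states inside $x_{j_1}$ that are compatible with $\square\lozenge\phi$ from those that are not.

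The paper's deletion step (Lemma~\ref{p2}) is number-theoretic rather than lattice-theoretic. Given one bad simple loop through $x_{j_1}$, it asks: for which $\sigma_0\in x_{j_1}$ does following this specific loop forever still yield $\square\lozenge\phi$? For fixed $r,s$ the condition ``$\F_r^{\,n}(\xi)\vDash p_s$ for infinitely many $n$'' is a statement about the zero set of the linear recurrence $a_n=\tr\bigl(p_s^{\perp}\,\F_r^{\,n}(\xi)\bigr)$. By the Skolem--Mahler--Lech theorem (Theorem~\ref{SML}) that zero set is eventually periodic with period bounded by a computable $g$ depending only on $\F_r$; hence the constraint on $\sigma_0$ reduces to a finite disjunction over $s,r$ and $c\le g!$ of subspace conditions $z_{s,r,c}$, and one replaces $x_{j_1}$ by $z=\vee_{s,r,c}(x_{j_1}\cap z_{s,r,c})\subsetneq x_{j_1}$. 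This is the missing idea in your proposal: without the arithmetic input of Skolem--Mahler--Lech there is no reason the ``good'' part of $x_{j_1}$ should be a finite union of subspaces at all, and reachability-style fixed points do not see the periodic structure that makes it one.
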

The proof of this theorem depends on Lemma \ref{p2}.
\begin{proof}
We can construct $x$ using the following procedure.
\begin{itemize}
\item $l_0$: Set $x=\H$, goto $l_1$;
\item $l_1$: If Condition 1 and 2 of Lemma \ref{characterization2} are satisfied, return $x$; otherwise, goto $l_2$;
\item $l_2$: Run procedure of Lemma \ref{invariant}, goto $l_3$;
\item $l_3$: Run procedure of Lemma \ref{p2}, goto $l_1$;
\end{itemize}
For any finite union of subspaces as input, Lemma \ref{invariant} and Lemma \ref{p2} terminate within finite time. $l_1,l_2,l_3$ can only be executed a finite number of times according to Lemma \ref{uss} because the intermediate data is always a finite union of subspaces in decreasing order.

For any $x' \subseteq \H$ satisfies the first two conditions, $x'\subseteq x$ is always valid during the execution of the above procedure. Therefore, $x'\subseteq x$ is valid for the final $x$.
\end{proof}

In the proof of Lemma \ref{characterization2}, if Condition (1) of Lemma \ref{characterization2} is satisfied, but Condition (2) is not satisfied, we need the following lemma.
\begin{lem}\label{p2}
Given $\phi=\vee_{i=1}^t p_i$ and $x=\vee_{i=1}^{l} x_i$ with $p_i,x_i\in AP$, if $\vee_{\alpha\in Act}\E_{\alpha}(x)=x$ and there is a simple loop (distinct elements of the loop at different locations), $$x_{j_1}\xrightarrow{\E_{\alpha_1}} x_{j_2}\xrightarrow{\E_{\alpha_2}}\cdots x_{j_{k}}\xrightarrow{\E_{\alpha_k}}x_{j_1}$$ where $x_{j_1},x_{j_2},\cdots,x_{j_{k}}\in AP$ and $\E_{\alpha_{1}}(x_{j_1})= x_{j_2}$,$\cdots$, $\E_{\alpha_{k}}(x_{j_k})= x_{j_1}$, such that $x_{j_r}\not\subseteq \phi$ for any $j_r$.
We can find $z$ as a finite union of proper subspaces in $x_{j_1}$ such that any $x'\subseteq x$ satisfies
\begin{itemize}
\item There are infinite many $i$ such that $\sigma_i\vDash p$, for any $\omega$ sequence $\sigma_0\sigma_1\cdots$ with $\sigma_i=\E_{\beta_i}(\sigma_{i-1})$ for any $\beta_i\in Act$ with $\sigma_1\vDash x'$.
\end{itemize}
must satisfy
$$
x'\subseteq \vee_{i=1,i\neq j_1}^l x_i\vee z.
$$
\end{lem}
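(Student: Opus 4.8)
The plan is to analyze the single super‑operator obtained by traversing the loop once, $\F:=\E_{\alpha_k}\circ\cdots\circ\E_{\alpha_1}$, and to carve out of $x_{j_1}$ exactly the directions a \emph{good} $x'$ — one from which every admissible $\omega$-run with $\sigma_1\vDash x'$ satisfies $\sigma_i\vDash\phi$ infinitely often — is allowed to use. By the calculus of Fact~\ref{inverse}, $\supp(\E(\rho))$ depends only on $\supp(\rho)$; writing $\E(V)$ for this image subspace, the loop equalities $\E_{\alpha_r}(x_{j_r})=x_{j_{r+1}}$ compose to $\F(x_{j_1})=x_{j_1}$. Hence, for any $\rho$ with $\supp(\rho)=x_{j_1}$, executing $\alpha_1,\dots,\alpha_k$ repeatedly produces a run whose supports cycle through $x_{j_1},x_{j_2},\dots,x_{j_k}$; since a subspace contained in a finite union of subspaces over an infinite field lies in one of them, $x_{j_r}\not\subseteq\phi$ really means $x_{j_r}\not\subseteq p_i$ for every $i$, so this run never satisfies $\phi$. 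In particular the maximally mixed state on $x_{j_1}$ lies in no good $x'$, which is the reason $x_{j_1}$ must be shrunk.

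Next I would build $z$ as a finite union of proper subspaces of $x_{j_1}$ over‑approximating $\{V\subseteq x_{j_1}:\ \F^m(V)\subseteq\Psi\ \text{for infinitely many }m\}$, where $\Psi:=x_{j_1}\cap\bigvee_{r=0}^{k-1}\E_{\alpha_1}^{-1}\circ\cdots\circ\E_{\alpha_r}^{-1}(\phi)$ (the $r=0$ term being $\phi$ itself) is the finite union of subspaces inside $x_{j_1}$ that reach $\phi$ within one loop traversal; note $x_{j_1}\not\subseteq\Psi$ by the previous paragraph, so $\Psi$, and hence $z$, is a union of proper subspaces. A loop run from $V$ meets $\phi$ infinitely often iff $\F^m(V)\subseteq\Psi$ for infinitely many $m$, so $z$ must contain every such $V$. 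To make $z$ a finite union of subspaces I would separate the dynamics of $\F$ through its matrix representation and Jordan form: by Lemma~\ref{tech0} the eigenvalues have modulus $\le 1$ and the modulus‑one Jordan blocks are one‑dimensional, so the "transient" directions are captured by contracting a decreasing chain of pre‑images via Lemma~\ref{uss}, while the "recurrent" directions, living in the modulus‑one blocks, are handled by Kronecker's theorem (Theorems~\ref{K1} and~\ref{K2}) to decide whether the orbit revisits $\Psi$ infinitely often; the surviving finite union is $z$, and it is proper in $x_{j_1}$ because $x_{j_1}$'s own orbit is periodic and disjoint from $\Psi$.

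It remains to prove that a good $x'\subseteq x$ satisfies $x'\subseteq(\vee_{i\ne j_1}x_i)\vee z$. Writing $x'=\bigvee_p x'_p$ with $x'_p\in AP$, it suffices to treat one component; if $x'_p\not\subseteq(\vee_{i\ne j_1}x_i)\vee z$, then its "$x_{j_1}$‑part'' sticks out of $z$, and I would exhibit an admissible $\omega$-run from the maximally mixed state on $x'_p$ that is eventually always $\neg\phi$, contradicting goodness: using $\vee_\alpha\E_\alpha(x)=x$ to stay inside $x$, route the state until its support, restricted to the $x_{j_1}$‑directions, is a subspace outside $z$, then follow the loop, at which point the defining property of $z$ forbids the support from ever entering $\phi$ again. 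I expect this step to be the main obstacle: formalizing the sense in which a component sticking out of $z$ forces a $\phi$-avoiding run, and controlling how supports split and recombine during the routing, which should reuse the strictly‑increasing‑tree/dimension‑counting bookkeeping from the proof of Lemma~\ref{extension} together with the Kronecker analysis used to construct $z$.
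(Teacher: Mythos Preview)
Your plan diverges from the paper's argument at its core and, as stated, has a genuine gap. The condition you must control is \emph{exact} satisfaction: $\sigma_i\vDash p_s$ means $\tr[\sigma_i\,p_s^{\perp}]=0$, so along the loop you are asking that the linear recurrence $a_n=\tr[\F^n(\xi)\,p_s^{\perp}]$ vanish for infinitely many $n$. Kronecker's theorem (Theorems~\ref{K1},~\ref{K2}) tells you about the closure of $(\lambda_1^n,\dots,\lambda_r^n)$ and hence about limit points of $a_n$; that is precisely the machinery the paper uses for $\square\tilde{\lozenge}$ (Fact~\ref{tilde1}), where one cares about $a_n\to 0$ along a subsequence. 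It does \emph{not} decide whether $a_n=0$ exactly for infinitely many $n$, which is what $\vDash$ requires here. The right tool is the Skolem--Mahler--Lech theorem (Theorem~\ref{SML}): if $a_n$ has infinitely many zeros then its zero set contains an arithmetic progression whose modulus $b$ can be bounded in terms of $\F$ alone (via the period bound $r\le p^{d^2}$, taking $b=g!$ for a computable $g$). This yields the crucial uniformity: for each choice of $p_s$, of the position $r$ in the loop, and of a residue $c<b$, the condition ``$\F_r^{bu+c}\circ\E_{\alpha_r}\circ\cdots\circ\E_{\alpha_1}(\sigma_0)\vDash p_s$ for all $0\le u\le d^2+1$'' cuts out a subspace $z_{s,r,c}$ of $x_{j_1}$, and $z=\bigvee_{s,r,c}(x_{j_1}\cap z_{s,r,c})$ is a finite union of proper subspaces (proper because $x_{j_r}\not\subseteq\phi$). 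Your Jordan/Kronecker separation into ``transient'' and ``recurrent'' directions does not produce such a finite family of \emph{exact} constraints, and your definition of $z$ via $\{V:\F^m(V)\subseteq\Psi\text{ infinitely often}\}$ is not obviously a finite union of subspaces without SML.

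A second, smaller issue is the direction of your final argument. You propose to show the contrapositive by routing a state whose support ``sticks out of $z$'' into a $\phi$-avoiding run; this requires controlling how supports split under the transitions and is where you yourself flag the obstacle. The paper avoids this entirely: it argues directly that any $\sigma_0\vDash x_{j_1}\cap x'$ already satisfies one of the subspace constraints $z_{s,r,c}$, by running the fixed loop word $\alpha_1\cdots\alpha_k$ from $\sigma_0$, pigeonholing to find a pair $(s,r)$ with $\F_r^n(\xi)\vDash p_s$ infinitely often, and then invoking SML. That immediately gives $x'\cap x_{j_1}\subseteq z$, hence $x'\subseteq(\vee_{i\ne j_1}x_i)\vee z$, with no need for the routing/tree bookkeeping you anticipate.
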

\begin{proof}
For any $\sigma_0\vDash x_{j_1}\bigcap x'$, we consider sequence $\omega=\sigma_0\sigma_1\cdots\sigma_r\cdots$ such that $\sigma_i=\E_{\alpha_i}(\sigma_i)$. We can divide $\omega$ into $k$ subsequence, $\omega_1=\sigma_1\sigma_{k+1}\cdots$, $\cdots$, $\omega_{k}=\sigma_{k}\sigma_{2k}\cdots$.
According to the properties $x'$ satisfies, we know that there exist $1\leq s\leq t$ and $1\leq r\leq k$ such that the sequence $\omega_r$ contains infinite state which satisfies $p_s$. In other words, there exists infinite $n$ such that $\sigma_{nk+r}=\F_r^n(\xi)\vDash p_s$ where $\xi=\E_{\alpha_r}\circ\E_{\alpha_{r-1}}\circ\cdots\circ\E_{\alpha_1}(\sigma_0)$, and $\F_{r}=\E_{\alpha_r}\circ\cdots\circ\E_{\alpha_1}\circ\E_{\alpha_k}\circ\cdots\E_{\alpha_{r+1}}$. Let $a_i=\tr(\F_r^n(\xi)p_s^{\perp})$.
Then the linear recurrent series $a_i$ has infinite zero points. Theorem \ref{SML} implies that, there is a finite $b,c$ such that
$a_{bu+c}=0$ for all $u\in \mathbb{N}$. Moreover, one can compute a $g\geq b$ which depends on $\F_{r}$ only. Therefore, we can choose $b=g!$ and let $c$ range over all integer less than $b$. Interestingly, for fixed $c$, we only need to verify $a_{bu+c}=0$ for $0\leq u\leq d^2+1$ since $a_{bu+c}$ is a linear recurrent series with degree $d^2$. In other words, $\sigma_0$ should satisfy that for some $1\leq s\leq t$, $1\leq r\leq k$, $1\leq c\leq b$, the following is true for any $0\leq u\leq d^2+1$,
$$
\F_{r}^{bu+c}\circ\E_{\alpha_r}\circ\E_{\alpha_{r-1}}\circ\cdots\circ\E_{\alpha_1}(\sigma_0)\vDash p_s.
$$
For each $s,r,c$, the constrain is equivalent to a subspace $z_{s,r,c}$. Moreover, $x_{j_1}$ is not a subspace of any $z_{s,r,c}$ according to the condition of $x$.
Let $z=\vee_{s,r,c}(x_{j_1}\bigcap z_{s,r,c})$ and it satisfies the requirement.
\end{proof}
Now we show the following
\begin{theorem}\label{tilxx}
For $p_i\in AP$ and $\phi=\vee_{i=1}^t p_i$, $\square{\lozenge} \phi$ is decidable.
\end{theorem}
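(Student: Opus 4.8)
The plan is to follow the same pattern used for $\lozenge\square$ in Lemma \ref{characterization1}: to produce, by the constructions of the preceding lemmas, a finite union of subspaces $\psi$ (a formula in $AP$ up to $\vee$) such that, for the quantum automaton at hand, $\square\lozenge\phi$ holds along every $\omega$-path starting from $\sigma_0$ if and only if $\sigma_0\vDash\psi$; decidability then reduces to the effective test $\supp(\sigma_0)\subseteq\psi$.

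Concretely, first I would apply Lemma \ref{characterization2} to $\phi=\vee_{i=1}^t p_i$ to obtain the maximal $x\subseteq\H$ which is forward invariant ($\vee_{\alpha\in Act}\E_\alpha(x)=x$) and all of whose simple $AP$-loops meet some $p_i$, together with its $AP$-components $x=\vee_{i=1}^l x_i$. Then I would let $\psi$ be the maximal extension of $x$ furnished by Lemma \ref{extension}, so that $\psi$ is again a finite union of subspaces, contains $x$, is closed under every preimage $\E_\alpha^{-1}$, and is the least such set; intuitively $\psi$ is exactly the set of states all of whose paths are eventually confined to $x$. The claim is $\square\lozenge\phi\Leftrightarrow\sigma_0\vDash\psi$. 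Since both $x$ and $\psi$ are outputs of terminating procedures (termination of the decreasing chains via Lemma \ref{uss}, termination of the increasing chain via the $|Act|$-ary tree argument in the proof of Lemma \ref{extension}) and membership $\sigma_0\vDash\psi$ is decidable, this gives the theorem; note that for $|Act|=1$ this specialises to Fact \ref{tilde2}.

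For the ``if'' direction I would use, exactly as in the proof of Lemma \ref{characterization1}, that $\sigma_0\vDash\psi$ forces along every path $\alpha_1\alpha_2\cdots$ the existence of $k_0$ with $\sigma_k\vDash x$ for all $k>k_0$, and then show that a trajectory confined to $x$ meets $\phi$ infinitely often. This last step is the delicate one: a confined trajectory need not be eventually periodic, since peripheral eigenvalues of modulus one may be irrational rotations, so one cannot merely pigeonhole on the supports $\supp(\sigma_k)$. Instead I would argue by contradiction: from a tail avoiding $\phi$, track which of the finitely many components $x_i$ are visited, isolate along a repeated block of actions a linear recurrence $\tr(\F^n(\xi)p_s^{\perp})$ with infinitely many zeros, and invoke Theorem \ref{SML} (as in the proof of Lemma \ref{p2}) to extract a genuine simple loop of $AP$-subspaces inside $x$ that never meets $\phi$ — contradicting Condition (2) of Lemma \ref{characterization2}.

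For the ``only if'' direction, let $G$ be the set of states from which $\square\lozenge\phi$ holds; since ``$\phi$ holds infinitely often'' ignores finite prefixes, $G$ is both forward closed ($\E_\alpha(G)\subseteq G$) and preimage closed ($G=\bigcap_{\alpha}\E_\alpha^{-1}(G)$), and by the ``if'' direction $x\subseteq G$. If $\sigma_0\not\vDash\psi$, then some path from $\sigma_0$ is not eventually confined to $x$, i.e. leaves $x$ infinitely often; feeding this path, together with forward closedness of $G$, into an argument dual to Lemma \ref{p2} one either finds that this very path already fails to meet $\phi$ infinitely often — so $\sigma_0\notin G$ — or one extracts from the region it sweeps out (after passing to its invariant core via Lemma \ref{invariant}) a forward-invariant set satisfying Conditions (1)--(2) of Lemma \ref{characterization2} yet not contained in $x$, contradicting the maximality Condition (3) of $x$. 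Hence $G\subseteq\psi$, which with the reverse inclusion gives the equivalence. The main obstacle throughout is precisely the confinement-to-recurrence passage and its converse extraction — i.e. ruling out aperiodic trajectories that dodge $\phi$ — which is where the Skolem--Mahler--Lech analysis already developed for Lemma \ref{p2} must be pushed through carefully; the remainder is bookkeeping with finite unions of subspaces and the effectivity facts established earlier.
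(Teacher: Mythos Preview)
Your plan is the paper's plan: build $x$ from Lemma \ref{characterization2}, take $\psi$ to be its maximal extension from Lemma \ref{extension}, and prove that $\square\lozenge\phi$ holds from $\sigma_0$ iff $\sigma_0\vDash\psi$. So the architecture matches exactly.

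Where you diverge is in the two inclusions. For the ``if'' direction, the paper does not go through a Skolem--Mahler--Lech contradiction; once the trajectory is trapped in $x=\vee_{i=1}^{l}x_i$ it simply pigeonholes on the $l$ components to locate a simple loop and then quotes Condition~(2) of Lemma \ref{characterization2} to force a hit in some $p_i$ along every window. Your worry about irrational rotations is legitimate at the level of supports, but the paper sidesteps it by working with the finitely many $x_i$ rather than with the (possibly aperiodic) supports $\supp(\sigma_k)$ themselves; the SML machinery is already baked into the \emph{construction} of $x$ via Lemma \ref{p2} and is not re-invoked here. For the ``only if'' direction your contrapositive case analysis (``either this path already fails, or extract a forward-invariant set outside $x$ via a dual of Lemma \ref{p2}'') is more circuitous than necessary. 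The paper argues directly: from forward closure $\E_\alpha(\psi')\subseteq\psi'$ it forms the descending sequence $Z_k=\vee_{\alpha}\E_\alpha(Z_{k-1})$ with $Z_0=\psi'$, lets $x'=\bigcap_k Z_k$ be its invariant core, observes that every state in $x'$ still enjoys $\square\lozenge\phi$, and then appeals to the maximality clause (Condition~(3)) of Lemma \ref{characterization2} --- which is exactly what the iterative procedure of that lemma together with Lemma \ref{p2} establishes --- to get $x'\subseteq x$; finally the preimage chain $V_k$ starting at $x$ recovers $\psi'\subseteq\psi$. That route uses only the maximality you already have and avoids manufacturing a new ``dual'' lemma.
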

\begin{proof}
Let $\psi'$ denote the set that ${\lozenge}\square \phi$ if and only if $\sigma_0\vDash \psi'$.

We first compute $x=\vee_{i=1}^l x_i$ of $\phi$ as illustrated in Lemma \ref{characterization2}, then compute $\psi$ as the maximal extension of $x$ in Lemma \ref{extension}.
We prove that $\psi'=\psi$.

To show $\psi\subseteq \psi'$, we choose $\sigma_0\vDash \psi$. For each $\omega$ path $\alpha_1\alpha_2\cdots$, we let state $\sigma_{k}=\E_{\alpha_k}(\sigma_{k-1})$.
According to the proof of Lemma \ref{extension}, there exists $k_0$ such that $\sigma_k\vDash x$ for any $k>k_0$. For any $k>k_0$, the state sequence $\sigma_{k}\sigma_{k+1}\cdots\sigma_{k+s}$ satisfies $\sigma_{k+i}\vDash x_{j_i}$ for some $1\leq j_i\leq l$. Moreover, there is a simple loop for sufficiently large $s$ because $l$ is finite. Invoking the Condition (2) of Lemma \ref{characterization2}, $\sigma_{k+u}\vDash x$ for some $1\leq u\leq s$. $\square{\lozenge} \phi$ is valid. Therefore, $\psi\subseteq \psi'$.

To show $\psi\supseteq \psi'$, we observe that for any $\alpha$, if $\sigma_0\vDash \E_{\alpha}(\psi')$, $\square{\lozenge} \phi$ is valid. This implies $\E_{\alpha}(\psi')\subseteq \psi'$.
Therefore, $\vee_{\alpha\in Act}\E_{\alpha}(\psi')\subseteq \psi'$.
We define the sequence $Z_0=\psi'$, $Z_1=\vee_{\alpha\in Act}\E_{\alpha}(Z_0)$, $\cdots$, $Z_k=\vee_{\alpha\in Act}\E_{\alpha}(Z_{k-1})$,$\cdots$. We can verify $Z_0\supseteq Z_1\supseteq \cdots\supseteq Z_k\supseteq \cdots$. Let $x'=\bigcap_{k=0}^{\infty} Z_k$, we have $x'=\vee_{\alpha\in Act}\E_{\alpha}(x')$.
Moreover, for any $\sigma_0\vDash x$, there are infinite many $i$ such that $\sigma_i\vDash p$, for any $\omega$ sequence $\sigma_0\sigma_1\sigma_2\cdots$ with $\sigma_i=\E_{\beta_i}(\sigma_{i-1})$ for any $\beta_i\in Act$ with $\sigma_1\vDash x'$. By the proof of Lemma \ref{characterization2} and Lemma \ref{p2}, $x'\subseteq x$.
By defining an increasing sequence $V_0=x$, $V_1=\vee_{\alpha\in Act}\E_{\alpha}^{-1}(V_0)$, $\cdots$, $V_k=\vee_{\alpha\in Act}\E_{\alpha}(V_{k-1})$,$\cdots$ we have
$\psi'\subseteq \vee_{k=0}^{\infty} V_k=\psi$.

Therefore $\psi=\psi'$. This implies $\square{\lozenge} \phi$ is decidable.
\end{proof}
\subsection{Until and Almost Surely Until}
According to the results of $\lozenge$. we have
\begin{fact}
\begin{itemize}
For $|Act|=1$, $q\in AP$ and $\phi=\vee_{i=1}^t p_i$ with $p_i\in AP$,
\item $\phi \mathbf{U} q$ is decidable if $\lozenge q$ is decidable.
\item $\phi \tilde{\mathbf{U}} q$ is decidable if $\tilde{\lozenge} q$ is decidable.
\end{itemize}
\end{fact}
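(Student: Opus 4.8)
The plan is to use that $|Act|=1$ forces a single admissible sequence, the deterministic orbit $\sigma_0,\ \sigma_1=\E(\sigma_0),\ \ldots,\ \sigma_k=\E^k(\sigma_0),\ \ldots$; since all entries are rational each $\sigma_k$ is exactly computable, and $\tr(\sigma_k)=1$ for all $k$ because $\E$ is trace preserving. Write $\sigma_k\vDash\phi$ for $\phi=\vee_{i=1}^t p_i$ to mean $\supp(\sigma_k)\subseteq p_i$ for some $i$, and set $N_\phi:=\{k\geq 0:\sigma_k\not\vDash\phi\}$. The whole argument rests on a dichotomy on $N_\phi$, so the first step is to decide whether $N_\phi=\emptyset$, i.e.\ whether $\square\phi$ holds along the orbit. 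This is exactly the construction in the proof of Theorem~\ref{inv} specialized to one action: put $Y_0=\phi$ and $Y_{k+1}=Y_k\cap\E^{-1}(Y_k)$, where $\E^{-1}$ is as in Fact~\ref{inverse} and sends finite unions of subspaces to finite unions of subspaces (and intersections of such remain of that form); this is a decreasing chain of finite unions of subspaces, hence stabilizes by Lemma~\ref{uss} at an effectively computable $Y_\infty$, and $N_\phi=\emptyset$ iff $\sigma_0\vDash Y_\infty$.

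Now branch. If $N_\phi=\emptyset$, then for every index $i$ the side condition ``$\sigma_k\vDash\phi$ for all $0\leq k<i$'' in the semantics of until is vacuously satisfied, so $\phi\,\mathbf{U}\,q$ is equivalent to $\lozenge q$ and $\phi\,\tilde{\mathbf{U}}\,q$ is equivalent to $\tilde{\lozenge}q$; we conclude by the assumed decidability of $\lozenge q$, respectively $\tilde{\lozenge}q$. If $N_\phi\neq\emptyset$, compute $k^*:=\min N_\phi$ by testing $\sigma_0\vDash\phi,\ \sigma_1\vDash\phi,\ \ldots$ in turn (each test is a finite support-containment check, and the search halts because $N_\phi$ is nonempty). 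Any witness $i$ appearing in the semantics of $\phi\,\mathbf{U}\,q$ or $\phi\,\tilde{\mathbf{U}}\,q$ must satisfy $i\leq k^*$, since otherwise $k^*$ would be an index $<i$ with $\sigma_{k^*}\not\vDash\phi$. Over the finite horizon $\{0,1,\ldots,k^*\}$ the $\phi$-clause is automatic, hence $\phi\,\mathbf{U}\,q$ holds iff $\sigma_i\vDash q$ for some $i\leq k^*$; and because this index set is finite and $\tr(\sigma_i)=1$, the ``for all $\delta>0$'' in $\phi\,\tilde{\mathbf{U}}\,q$ collapses to $\max_{0\leq i\leq k^*}\tr(\sigma_i q)=1$, which again means $\sigma_i\vDash q$ for some $i\leq k^*$. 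Both are finite computations.

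The routine parts are the support and containment computations and the two stabilization arguments; the one place that needs a little care is the reduction of ``almost surely until'' to ``until'' over a bounded prefix, for which one uses the finiteness of $\{0,\ldots,k^*\}$ together with $\tr(\sigma_i)=1$ to turn a supremum over $i$ that is also an infimum over $\delta$ into an attained maximum equal to $1$. Conceptually, the main point --- and what makes the two assumed decidabilities exactly the right hypotheses --- is that the left-hand side of the until is either an invariant of the whole orbit, in which case the formula degenerates to the corresponding eventually-modality, or it fails for the first time at an explicitly locatable finite step, after which only a bounded prefix of the orbit is relevant; so no analysis of the (potentially Skolem-hard) fine structure of $\{k:\sigma_k\vDash\phi\}$ is needed.
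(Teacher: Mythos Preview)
Your proposal is correct and takes essentially the same approach as the paper: first decide $\square\phi$ (via the invariance construction), and in the positive case reduce $\phi\,\mathbf{U}\,q$ and $\phi\,\tilde{\mathbf{U}}\,q$ to $\lozenge q$ and $\tilde\lozenge q$ respectively, while in the negative case locate the first failure index and check the bounded prefix. Your write-up is in fact more careful than the paper's terse argument, particularly in justifying why the approximate condition in $\tilde{\mathbf{U}}$ collapses to exact satisfaction $\sigma_i\vDash q$ over a finite prefix.
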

\begin{proof}
We first verify $\square \phi$. If this is valid, we only need to verify $\lozenge q$ or $\tilde{\lozenge} q$. Otherwise, we find $n$ such that $\E^{n}(\sigma_0)\nvDash \phi$, we only need to verify $\E^{n}(\sigma_0)\nvDash q$.
\end{proof}
\begin{fact}
For $|Act|=1$, $q\in AP$ and $\phi=\vee_{i=1}^t p_i$ with $p_i\in AP$,
 $\square \phi \tilde{\mathbf{U}} q$ is decidable.
\end{fact}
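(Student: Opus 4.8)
The plan is to reduce the formula to a combinatorial condition on the unique admissible run and then decide that condition with the Skolem--Mahler--Lech theorem together with the machinery already behind Fact \ref{tilde1}. (I read $\square\phi\tilde{\mathbf{U}}q$ as $\square(\phi\tilde{\mathbf{U}}q)$, i.e.\ $\square$ applied to the until-formula with first argument $\phi$ and second argument $q$; this is the reading that makes the statement nontrivial, since $(\square\phi)\tilde{\mathbf{U}}q$ with $\phi=I$ would already be $\tilde{\lozenge}q$ and hence Skolem-hard.) Since $|Act|=1$ there is one super-operator $\E$, and the unique admissible $w$ is $\sigma_0\sigma_1\cdots$ with $\sigma_n=\E^n(\sigma_0)$. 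Let $M$ be the matrix representation of $\E$ (Definition \ref{mrd}) and $\ket{\Phi}=\sum_j\ket{jj}$. By Lemma \ref{mr-lem}, $(\sigma_n\otimes I)\ket{\Phi}=M^n(\sigma_0\otimes I)\ket{\Phi}$, so for every $r\in AP$ the number $\tr[(I-r)\sigma_n]=\bra{\Phi}((I-r)\otimes I)M^n(\sigma_0\otimes I)\ket{\Phi}$ is a nonnegative linear recurrence sequence in $n$ that vanishes exactly when $\sigma_n\vDash r$. In particular $\sigma_n\vDash q$ iff $b_n:=\tr[(I-q)\sigma_n]=0$, and $\sigma_n\vDash\phi=\vee_{i=1}^t p_i$ iff $\prod_{i=1}^t\tr[(I-p_i)\sigma_n]=0$.

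Next I would unwind the temporal formula. For $k\geq 0$ set $m_k:=\min\{j\geq 0:\sigma_{k+j}\nvDash\phi\}\in\mathbb{N}\cup\{\infty\}$; unwinding the semantics of $\tilde{\mathbf{U}}$ on the suffix $w^k$ shows that $w^k\vDash\phi\tilde{\mathbf{U}}q$ iff either $m_k<\infty$ and $b_{k+i}=0$ for some $i\leq m_k$, or $m_k=\infty$ and $0$ is a limit point of $(b_{k+j})_{j\geq 0}$; and $w\vDash\square(\phi\tilde{\mathbf{U}}q)$ iff this holds for every $k$. To control the sets $\{n:\sigma_n\vDash r\}$ I would apply Skolem--Mahler--Lech (Theorem \ref{SML}) to each of the finitely many recurrences $\tr[(I-p_i)\sigma_n]$ and to $b_n$ --- after clearing denominators and peeling off the nilpotent Jordan blocks of $M$, which only alter a computable initial segment and leave an invertible matrix, exactly as in the proof of Fact \ref{tilde2}. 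Each such zero set is then a finite set together with finitely many residue classes, computably; hence the $\phi$-failure set $E:=\{n:\sigma_n\nvDash\phi\}$ (a finite Boolean combination of such sets) and the set $Z_q:=\{n:b_n=0\}$ have the same shape and are computable.

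Now split on whether $E$ is finite. If $E$ is finite (in particular if $\square\phi$ holds, so $E=\emptyset$), put $n_0=\max E$ (or $n_0=-1$). For $k>n_0$ we have $m_k=\infty$, and requiring ``$0$ is a limit point of $(b_{k+j})_j$'' for every such $k$ is equivalent to the single statement ``$0$ is a limit point of $(b_n)_{n\geq 0}$'', which is decidable precisely by the argument of Fact \ref{tilde1} (Jordan form of $M$, Lemma \ref{tech0}, discarding the Jordan blocks of modulus $<1$, Theorems \ref{K1} and \ref{K2}, and Tarski's theorem over the real closed field). For $0\leq k\leq n_0$ the value $m_k$ is finite and computable and ``$\exists\, i\leq m_k$ with $b_{k+i}=0$'' is a finite check, so the answer is a Boolean combination of finitely many decidable facts. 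If $E$ is infinite, then its gaps are eventually bounded by a computable $C_1$, so $m_k\leq C_1$ for all $k$ beyond a computable threshold $K_0$; since $Z_q$ is eventually periodic with a computable period $C_2$, the requirement ``$Z_q\cap[k,k+m_k]\neq\emptyset$ for all $k$'' reduces to checking the finitely many $k<K_0$ directly and then verifying coverage for $k$ over one common period --- a finite computation. In all cases $\square(\phi\tilde{\mathbf{U}}q)$ is decidable.

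The main obstacle, in my view, is the bookkeeping of the last case: one must reconcile the eventually-periodic descriptions of the $\phi$-failure positions and of the $q$-satisfaction positions on a common modulus and argue that ``every window $[k,k+m_k]$ meets $Z_q$'' is itself eventually periodic in $k$; and, somewhat orthogonally, one must handle a possibly non-invertible matrix representation $M$ so that Skolem--Mahler--Lech applies. By contrast, the ``$0$ is a limit point'' sub-case is essentially a re-use of the proof of Fact \ref{tilde1}, and the reformulation in the second paragraph is a routine unwinding of the $\tilde{\mathbf{U}}$ semantics.
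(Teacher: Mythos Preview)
Your approach is very different from the paper's. The paper disposes of the claim in two lines via the equivalence $\square(\phi\tilde{\mathbf{U}}q)\Leftrightarrow(\square\phi)\wedge(\square\tilde{\lozenge}q)$, and then cites Theorem~\ref{inv} and Fact~\ref{tilde1}. (That equivalence is actually stated a bit too strongly: if $\sigma_k\nvDash\phi$ but $\sigma_k\vDash q$ one may take $i=0$ in the $\tilde{\mathbf{U}}$-semantics, so $\square(\phi\tilde{\mathbf{U}}q)$ only forces $\square(\phi\vee q)$, not $\square\phi$. With this correction the reduction still goes through, since $\phi\vee q$ is again a finite disjunction of atomic propositions and Theorem~\ref{inv} applies verbatim.) The point is that the paper never has to locate any individual position on the run; the formula collapses into two global invariants that were already shown decidable.

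Your argument, by contrast, tries to compute $E$ and $Z_q$ explicitly, and this is where it breaks. You write that ``each such zero set is then a finite set together with finitely many residue classes, \emph{computably}'' and that there is ``a computable threshold $K_0$'' beyond which $m_k\leq C_1$. Skolem--Mahler--Lech (Theorem~\ref{SML}) gives only the \emph{structure} of a zero set: the period and the full residue classes are computable (exactly as in the proof of Fact~\ref{tilde2}), but computing the finite sporadic part is precisely Problem~\ref{Skolem}, which is open. Hence neither $Z_q$ nor $E$ nor $K_0$ is known to be computable, and you cannot ``check the finitely many $k<K_0$ directly.'' The gap is repairable by a monotonicity observation you do not make: a sporadic zero of a $p_i$-recurrence can only \emph{remove} a point from $E$ (hence \emph{enlarge} $m_k$), and a sporadic zero of $b_n$ can only \emph{enlarge} $Z_q$; both effects only help the window condition $Z_q\cap[k,k+m_k]\neq\emptyset$. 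Thus the purely periodic version of the condition, which \emph{is} computable, already implies the true condition for all $k\ge N_0$, and if it fails on some residue class then the true condition fails for all sufficiently large $k$ in that class. Without this step your ``$E$ infinite'' branch does not reduce to a finite computation.
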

\begin{proof}
 $\square \phi \tilde{\mathbf{U}} q$ iff $\square \phi$ and $\square\tilde{\lozenge} q$.
The rest follows from Theorem \ref{inv} and Fact \ref{tilde1}.
\end{proof}
We observe the following, for general $|Act|>1$.
\begin{theorem}
$\square \phi \mathbf{U} \psi$ is decidable for $\phi=\vee_{i=1}^t p_i$ and $\psi=\vee_{j=1}^s p_j$.
\end{theorem}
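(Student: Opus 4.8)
The plan is to read $\square\phi\mathbf{U}\psi$ as $\square(\phi\mathbf{U}\psi)$ (the outermost operator being $\square$, consistently with the preceding fact on $\square\phi\tilde{\mathbf{U}}q$) and to reduce its validity to two decidability results already proved in this section. Concretely, I would establish, for every admissible $\omega$-sequence $w=\sigma_0\sigma_1\cdots$, the equivalence
\[
w\vDash\square(\phi\mathbf{U}\psi)\iff w\vDash\square(\phi\vee\psi)\ \text{ and }\ w\vDash\square\lozenge\psi .
\]
Granting this, $\square(\phi\mathbf{U}\psi)$ is valid iff both $\square(\phi\vee\psi)$ and $\square\lozenge\psi$ are valid; since $\phi\vee\psi$ is the union of the $t+s$ subspaces occurring in $\phi$ and $\psi$, hence again of the form $\vee_i p_i$ with $p_i\in AP$, the validity of $\square(\phi\vee\psi)$ is decidable by Theorem \ref{inv}, and the validity of $\square\lozenge\psi$ is decidable by Theorem \ref{tilxx} (for $|Act|=1$ one may instead invoke Fact \ref{tilde2} and the single-operator instance of Theorem \ref{inv}). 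Since those theorems in fact produce finite unions of subspaces characterising the admissible initial states for which each property holds, intersecting them and testing membership of $\sigma_0$ settles $\square(\phi\mathbf{U}\psi)$; this proves the theorem.

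For the forward direction I would unfold the semantics. If $w\vDash\square(\phi\mathbf{U}\psi)$ then $w^k\vDash\phi\mathbf{U}\psi$ for every $k$, so for every $k$ there is an index $m\ge k$ with $\sigma_m\vDash\psi$; hence $\psi$ holds at infinitely many positions and $w\vDash\square\lozenge\psi$. For the box part, fix $j$ and let $m\ge j$ be minimal with $\sigma_m\vDash\psi$ (it exists by the previous line). If $m=j$ then $\sigma_j\vDash\psi$; if $m>j$ then the $\psi$-witness $m'$ of $w^j\vDash\phi\mathbf{U}\psi$ satisfies $m'\ge m>j$, so $j$ lies strictly before $m'$ and $\phi$ is demanded there, giving $\sigma_j\vDash\phi$. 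Either way $\sigma_j\vDash\phi\vee\psi$, and as $w$ and $j$ were arbitrary, $\square(\phi\vee\psi)$ is valid.

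For the converse, assume $w\vDash\square(\phi\vee\psi)$ and $w\vDash\square\lozenge\psi$, and fix $k$. From $w^k\vDash\lozenge\psi$ pick the least $m\ge k$ with $\sigma_m\vDash\psi$. For each $j'$ with $k\le j'<m$, minimality gives $\sigma_{j'}\not\vDash\psi$; since $\sigma_{j'}\vDash\phi\vee\psi$, the disjunctive reading of $\vee$ forces $\sigma_{j'}\vDash\phi$. Hence $m$ is a valid witness for $w^k\vDash\phi\mathbf{U}\psi$, and since $k$ was arbitrary, $w\vDash\square(\phi\mathbf{U}\psi)$. This completes the equivalence.

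The one delicate point — and the reason the box part comes out as $\square(\phi\vee\psi)$ rather than $\square\phi$ — is that a position at which $\psi$ already holds discharges $\phi\mathbf{U}\psi$ with the trivial witness and so imposes no $\phi$-constraint on itself; thus $\phi$ is forced only strictly before the next $\psi$-position, and demanding $\square\phi$ would be wrong as soon as $\psi$ is reached while $\psi\not\subseteq\phi$. Beyond getting this bookkeeping right, the argument is a direct appeal to Theorems \ref{inv} and \ref{tilxx} and needs no new machinery; this is exactly why $\square\phi\mathbf{U}\psi$ is unconditionally decidable, in contrast with plain $\lozenge$ — and with the alternative parsing $(\square\phi)\mathbf{U}\psi$, whose validity reduces to that of $\lozenge\psi$ and is therefore only as decidable as the Skolem problem.
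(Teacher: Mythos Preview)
Your proof is correct and follows the same strategy as the paper: reduce $\square(\phi\mathbf{U}\psi)$ to a conjunction of an invariance property and $\square\lozenge\psi$, then invoke Theorems~\ref{inv} and~\ref{tilxx}. The paper's proof is terser, asserting the equivalence $\square(\phi\mathbf{U}\psi)\Leftrightarrow \square\phi\wedge\square\lozenge\psi$; your version replaces $\square\phi$ by $\square(\phi\vee\psi)$, and you are right to do so. The paper's stated equivalence is not correct as a per-sequence (hence per-automaton) biconditional: on any admissible sequence where $\psi$ holds at every position but $\phi$ fails somewhere, $\phi\mathbf{U}\psi$ is satisfied at each position via the trivial witness $i=0$, so $\square(\phi\mathbf{U}\psi)$ holds while $\square\phi$ does not. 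Your decomposition $\square(\phi\vee\psi)\wedge\square\lozenge\psi$ is the correct one, and since $\phi\vee\psi$ is again a finite union of elements of $AP$, Theorem~\ref{inv} still applies and the decidability conclusion is unaffected. In short, same route, but your bookkeeping repairs a small slip in the paper's argument.
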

\begin{proof}
$\square \phi \mathbf{U} \psi$ iff $\square \phi$ and $\square\lozenge \psi$.
Verification of $\square \phi$ and $\square\lozenge \psi$ follows from Theorem \ref{inv} and Theorem \ref{tilxx}.
\end{proof}

\section{Discussion and Conclusion}
In this paper, we introduce a quantum temporal logic for quantum concurrent programs. Our quantum temporal
logic supports hierarchical specification and reasoning in a simple, natural way.
By proving a quantum B\"{o}hm-Jacopini theorem of deterministic quantum concurrent programs, we provide a simple and new insight of quantum programs written in the widely studied Q-While language.
We study the decidability of basic QTL formulae and solve the open question in \cite{LY14}.

We are fully aware of the fact that, in this paper, we have only touched upon the topic of quantum temporal logic.
There are several important directions for future work. First, it is interesting to
further develop the decidability of hierarchical specification in quantum temporal logic.
Secondly, we would like to introduce linear-time properties including fairness and liveness into the quantum concurrent program model. A framework for reasoning about the linear-time properties of concurrent unitary program is given in \cite{YLYF14}.

Thirdly, we expect our quantum temporal logic will be useful in designing quantum computing systems. We believe it is possible because our quantum temporal logic
describs a complex quantum system through a hierarchy of levels of abstraction, starting from a high-level specification and ending with implementation in some programming language.
\section{Acknowledgement}                            
We thank Prof Mingsheng Ying's help discussion on B\"{o}hm-Jacopini theorem and change the name of functional program.
We thank Prof Yaroslav Shitov for telling us the new bound \cite{Shitov19}.

This work is supported by DE180100156.

\bibliographystyle{abbrv}
\bibliography{main}

\end{document}